\documentclass[11pt]{article}
\usepackage{multirow}
\usepackage{color,url,eurosym,graphicx,amsfonts,amsmath,algorithm,algorithmic,bbm,xspace}
\usepackage{amsthm}
\usepackage{tikz}
\usetikzlibrary{arrows,shapes,positioning}

\usepackage[most]{tcolorbox}
 \usepackage{empheq}
 \usepackage{varwidth}

 \usepackage{MnSymbol}%
\usepackage{wasysym}%

\newtcbox{\mymath}[1][]{%
    nobeforeafter, math upper, tcbox raise base,
    enhanced, colframe=blue!30!black,
    colback=blue!30, boxrule=1pt,
    #1}

\definecolor{ForestGreen}{rgb}{0.1333,0.5451,0.1333}
\usepackage[letterpaper,
            colorlinks,linkcolor=ForestGreen,citecolor=ForestGreen,
            bookmarks,bookmarksopen,bookmarksnumbered]
           {hyperref}


\usepackage[margin=1.5in]{geometry}

\newtheorem*{theorem*}{Theorem}

\newtheorem{problem}{Problem}
\newtheorem*{problem*}{Problem}

\newtheorem*{claim*}{Claim}
\newtheorem{corollary}{Corollary}
\newtheorem*{corollary*}{Corollary}

\newcommand{\field}[1]{\mathbb{#1}} 

\newcommand{\Prob}[1]{\ensuremath{{\bf{Pr}}\left[{#1}\right]}}

\newcommand{\NPhard}{{\ensuremath{\mathbf{NP}}-hard}\xspace}

\newcommand{\NegDSD}{{\sc{Neg-DSD}}\xspace}


\newcommand{\spara}[1]{\smallskip\noindent{\bf #1}}

\newtheorem{theorem}{Theorem}

\newcommand{\hide}[1]{}
\newcommand{\squishlist}{
 \begin{list}{$\bullet$}
  {  \setlength{\itemsep}{0pt}
     \setlength{\parsep}{3pt}
     \setlength{\topsep}{3pt}
     \setlength{\partopsep}{0pt}
     \setlength{\leftmargin}{2em}
     \setlength{\labelwidth}{1.5em}
     \setlength{\labelsep}{0.5em}
} }
\newcommand{\squishlisttight}{
 \begin{list}{$\bullet$}
  { \setlength{\itemsep}{0pt}
    \setlength{\parsep}{0pt}
    \setlength{\topsep}{0pt}
    \setlength{\partopsep}{0pt}
    \setlength{\leftmargin}{2em}
    \setlength{\labelwidth}{1.5em}
    \setlength{\labelsep}{0.5em}
} }

\newcommand{\squishdesc}{
 \begin{list}{}
  {  \setlength{\itemsep}{0pt}
     \setlength{\parsep}{3pt}
     \setlength{\topsep}{3pt}
     \setlength{\partopsep}{0pt}
     \setlength{\leftmargin}{1em}
     \setlength{\labelwidth}{1.5em}
     \setlength{\labelsep}{0.5em}
} }

\newcommand{\squishend}{
  \end{list}
}

\newcommand{\squishlistt}{
 \begin{list}{---}
  {  \setlength{\itemsep}{0pt}
     \setlength{\parsep}{3pt}
     \setlength{\topsep}{3pt}
     \setlength{\partopsep}{0pt}
     \setlength{\leftmargin}{2em}
     \setlength{\labelwidth}{1.5em}
     \setlength{\labelsep}{0.5em}
} }

\begin{document}

\title{Novel Dense Subgraph Discovery Primitives: \\
\Large{Risk Aversion and Exclusion Queries} }
\author{
  Charalampos E. Tsourakakis \thanks{Boston University,
    \texttt{ctsourak@bu.edu}} 
  \and
  Tianyi Chen \thanks{Boston University,
    \texttt{ctony@bu.edu}} 
  \and 
  Naonori Kakimura \thanks{Keio University
 \texttt{kakimura@global.c.u-tokyo.ac.jp}} 
  \and 
  Jakub Pachocki \thanks{OpenAI \texttt{merettm@gmail.com}}
}

\date{}
\maketitle

\begin{abstract}
In the densest subgraph problem, given a weighted undirected graph $G(V,E,w)$, with non-negative edge weights $w:E \rightarrow \field{R}$,  we are asked to find a subset of nodes $S\subseteq V$ that maximizes the degree density $w(S)/|S|$, where $w(S)$ is the sum of the edge weights induced by $S$. This problem is a well studied problem, known as the {\em densest subgraph problem}, and is solvable in polynomial time. But what happens when the edge weights are negative, i.e., $w:E \rightarrow \field{R}$? Is the problem still solvable in polynomial time? Also, why should we care about the densest subgraph problem in the presence of negative weights?

In this work we answer the aforementioned question. Specifically, we provide two novel graph mining primitives that are applicable to a wide variety of applications. Our primitives can be used to answer questions such as ``how can we find a dense subgraph in Twitter with lots of replies and mentions but no follows?'', ``how do we extract a dense subgraph with high expected reward and low risk from an uncertain graph''? We formulate both problems mathematically as special instances of dense subgraph discovery in graphs with negative weights. We study the hardness of the problem, and we prove that the problem in general is \NPhard. We design an efficient approximation algorithm that works well in the presence of small negative weights, and also an effective heuristic for the more general case. Finally, we perform  experiments on various real-world uncertain graphs, and a crawled Twitter multilayer graph   that verify the value of the proposed primitives, and the practical value of our proposed algorithms.  

The code and the data are available at \url{https://github.com/negativedsd}.
\end{abstract}

\newpage

\section{Introduction}
\label{sec:intro}
Dense subgraph discovery (abbreviated as {\em DSD} henceforth) is a major and active topic of research in the fields of graph  algorithms and graph mining.  A wide range of real-world, data mining applications rely on DSD including correlation mining, fraud detection, electronic commerce, bioinformatics, mining Twitter data, efficient algorithm design for fast distance queries in massive networks,  and graph compression \cite{gionis2015dense}.

In this work we introduce two novel primitives for DSD. These two primitives are strongly motivated by real-world applications that we discuss in greater detail in Section~\ref{sec:motivation}. The first  question that our work addresses is related to uncertain graphs. 
Uncertain graphs appear in a wide variety of applications that we survey in Section~\ref{sec:related}. We define the uncertain graph model we use  formally in Section~\ref{sec:motivation}, but intuitively,  uncertain graphs model probabilistically real-world scenarios where each edge may exist or not in a graph (e.g., failure of a link). Problem~\ref{prob1} aims to find a {\em risk-averse} dense subgraph.  A similar formulation was suggested recently by Tsourakakis et al.  for graph matchings \cite{tsourakakis2018risk}. 

\vspace{2mm}
\begin{tcolorbox}
\begin{problem}[Risk-averse DSD] 
\label{prob1} 
Given an uncertain graph $\mathcal{G}$, how do we find a set of nodes $S$ that induces a dense subgraph in expectation, and the probability of not being dense in a realization/sample of $\mathcal{G}$ is low?
\end{problem}
\end{tcolorbox}
\vspace{2mm}

Our second problem focuses on multigraphs whose edges are associated with different types. Such graphs appear naturally in numerous applications, and are also known as multilayer multigraphs, e.g., \cite{galimberti2017core,yang2012predicting}. For example, Twitter users may interact in various ways, including {\em follow, reply, mention, retweet, like}, and {\em quote}. Similarity between two videos can be defined based on different criteria, e.g., audio, visual, and how frequently these videos are being co-watched on Youtube. Similarity between time series can be defined using a variety of measures including Euclidean distance, Fourier coefficients, dynamic time wraping, edit distance among others \cite{gunopulos2001time,serra2014empirical}. Emails between people can be classified bases on the nature of the interaction (e.g., business, family). We formulate Problem~\ref{prob2} whose goal is to detect efficiently dense subgraphs that exclude certain types of edges. Later, we will define two variations of this problem, soft- and hard-exclusion queries.

\vspace{2mm}
\begin{tcolorbox}
\begin{problem}[DSD-Exclusion-Queries]
\label{prob2} 
Given a  multigraph $G(V,E,\ell)$, where $\ell: E \rightarrow \{1,\ldots,L\}=[L]$ is the labeling function, and $L$ is the number of types of interactions, and an input set $\mathcal{I} \subseteq [L]$ of  interactions,  how do we find a set of nodes $S$ that (i) induces a dense subgraph, and (ii) does not induce any edge $e$ such that $\ell(e) \in \mathcal{I}$?
\end{problem}
\end{tcolorbox}
 \vspace{2mm}

\spara{Contributions.} Our contributions are summarized as follows. 

\squishlist
\item  We introduce two novel problems, (i) risk averse DSD,  and (ii) DSD in large-scale multilayer networks with exclusion queries.  In Section~\ref{sec:motivation} we show that these two problems are  special cases of DSD in undirected graphs with negative weights. To the best of our knowledge, this is the first work that introduces these algorithmic primitives.
\smallskip 
\item  We prove that DSD in the presence of negative weights is \NPhard in general by reducing {\sc Max-Cut} to our problem (Section~\ref{subsec:hardness}). 
\smallskip 
\item  We design a space-, and time- efficient approximation algorithm that performs best in the presence of small negative weights.   In the case of existence of large negative weights, we design a well-performing heuristic.
\smallskip 
\item We provide an experimental evaluation of our proposed methods on synthetic datasets that illustrate the effect of the parameters in our objective. This understanding allows the practitioner to choose the values of such parameters according to the desired goals of his/her application. 
\smallskip 
\item We deploy our developed primitives on the two real-world applications we introduce. We extract subgraphs from uncertain graphs with high expected induced weight and low risk. Finally, we mine Twitter data by finding dense subgraphs that exclude certain types of interactions. A non-trivial experimental contribution is the creation of an uncertain graph from the TMDB database, and  Twitter graphs from the Greek Twitter-verse.  Our algorithmic tools provide insights, and we believe that they will find  more applications in graph mining, and anomaly detection.  
\squishend

\spara{Notation.} We use the following notation. Let $deg^{+}(u)>0$ ($deg^{-}(u)$) be the positive (negative) degree of node $u$. Therefore, the total degree of $u$ is $d(u) = deg^{+}(u)-deg^{-}(u)$. 
Let $w^+(e)$ ($w^+(e)$) be the positive (negative) edge weight. Finally,   $w^+(S)$ ($w^-(S)$) is the total positive (negative) induced weight by node set $S$, and $d_S(u)=deg_S^{+}(u)-deg_S^{-}(u)$ is the total degree of node $u$ within $ S \subseteq V$.

\section{Related Work}
\label{sec:related}
\spara{Uncertain graphs} model naturally a wide variety of datasets and applications  including protein-protein interactions   \cite{asthana2004predicting,krogan2006global}, kidney exchanges \cite{roth2004kidney},  influence maximization \cite{kempe2003maximizing}, and privacy-applications \cite{boldi2012injecting}.   While a lot of research work has focused on designing graph mining algorithms for uncertain graphs 
\cite{bonchi2014core,huang2016truss,khan2015uncertain,kollios2013clustering,liu2012reliable,moustafa2014subgraph,parchas2014pursuit,potamias2010k}, there is less work on designing efficient risk-averse optimization algorithms, and even lesser with solid theoretical guarantees.

Risk-aversion has been implicitly discussed  by Lin  et al.  in their work on reliable clustering \cite{liu2012reliable}, where the authors show that interpreting probabilities as weights does not result in good clusterings.  Repetitive sampling from a large-scale uncertain graph in order to reduce the risk is inefficient. Motivated by this observation,  Parchas et al.  have proposed a heuristic to extract a good possible world in order to combine risk-aversion with efficiency  \cite{parchas2014pursuit}. However,  their work comes with no guarantees.  Jin et al. provide a risk-averse algorithm for distance queries on uncertain graphs \cite{jin2011discovering}.  He and Kempe propose robust algorithms for the influence maximization problem \cite{HeKempe}. Since then, various extensions have been proposed 
for the same problem  \cite{chen2016robust,wilder2017uncharted}. Closest to our work lies the recent work by Tsourakakis et al. who proposed efficient approximation algorithms for finding risk-averse heavy matchings in uncertain graphs and hypergraph \cite{tsourakakis2018risk}.

\spara{Dense subgraph discovery (DSD)} is a major topic of research in the fields of graph algorithms and graph mining, with many diverse applications, ranging from fraud detection to bioinformatics, see \cite{gionis2015dense} for a detailed account of such applications. Finding cliques \cite{karp}, or optimal quasi-cliques \cite{tsourakakis2013denser,tsourakakis2014fennel,tsourakakis2015streaming} is the prototypical DSD formulations but not only they are \NPhard problems,  but also hard to approximate \cite{hastad}. On the contrary, the {\em densest subgraph problem} (DSP) is solvable in polynomial time \cite{GGT89,goldberg84}. The DSP for undirected, weighted graphs $G(V,E,w), w:E \rightarrow \field{R}^+$ maximizes the degree density $\rho(S)=\frac{w(S)}{|S|}$ over all possible subgraphs $S\subseteq V$, where $w(S)=\sum_{e\in e[S]}w(e)$ is the total induced weight by subgraph.  In addition to the exact algorithm that is based on maximum flow computation, 
Charikar \cite{Char00} proved that the greedy algorithm proposed by Asashiro et al.
\cite{AITT00} produces a $\frac{1}{2}$-approximation of the densest subgraph in linear time. 
Both algorithms are efficient in terms of running times and scale to large networks. 
In the case of directed graphs, the densest subgraph problem is solved in polynomial 
time as well. Charikar \cite{Char00} provided a linear programming approach 
which requires the computation of $n^2$ linear programs
and a $\frac{1}{2}$-approximation algorithm which runs in $O(n^3+n^2m)$ time.
Khuller and Saha \cite{Khuller} improved significantly the state-of-the art by providing
an exact combinatorial algorithm and a fast $\frac{1}{2}$-approximation algorithm 
which runs in $O(n+m)$ time. Since then, many variations of the densest subgraph problem have been proposed in the literature.  Tsourakakis generalized the DSP the  the $k$-clique DSP that maximizes the average density of $k$-cliques, and also provided efficient exact and approximation algorithms \cite{tsourakakis2015kclique}, see also \cite{mitzenmacher2015scalable}. Another interesting set of variations of the DSP across a set of graphs was introduced by Semertzidis et al. \cite{semertzidis2016best}, and was analyzed further by Charikar et al. \cite{charikar2018finding}. Finally, the densest subgraph problem with exclusion queries on multilayer graphs has not been  considered before. Galimberti et al. studied core decompositions -- a concept intimately connected to DSD-- on multilayer graphs \cite{galimberti2018core}. Finally, Cadena et al.  first studied DSD with negative weights~\cite{cadena2016dense}, but their work focuses on anomaly detection, and the streaming nature of their input. 

DSD on uncertain graphs is a less well studied topic. Zou  was the first who discussed the DSP on uncertain graphs. His work  shows --as expected-- that the DSP in expectation can be solved in polynomial time \cite{zou2013polynomial}. The closest work related to our formulation is the recent work by Miyauchi and Takeda \cite{miyauchi2018robust}. While their original motivation is also DSD on uncertain graphs, the modeling assumptions, and the mathematical objective differ significantly from ours. To the best of our knowledge, there is no work on risk-averse DSD under general probabilistic assumptions as ours. 

\section{Proposed Method}
\label{sec:proposed}
\subsection{Why Negative Weights?}
\label{sec:motivation} 

\spara{Risk-averse dense subgraph discovery.} Uncertain graphs model the inherent uncertainty associated with graphs in a variety of applications, that we discussed earlier in detail, see Section~\ref{sec:related}. Here, we adopt the general model for uncertain graphs introduced by Tsourakakis et al. \cite{tsourakakis2018risk}. For completeness we present it in the following. 

\underline{Model:} Let  $\mathcal{G}([n], E, \{ f_e(\theta_e) \}_{e \in E})$ be an uncertain complete  graph on $n$ nodes,  with the complete edge set $E={[n] \choose 2}$. The weight $w(e)$ (reward) of each edge $e \in E$ is drawn according to some probability distribution $f_e$ with parameters $\vec{\theta_e}$, i.e., $w(e) \sim f_e(x;\vec{\theta_e})$.   We assume that the weight of each  edge is drawn independently from the rest;  each probability distribution is assumed to have finite mean, and finite variance.  Given this model, we define the probability/likelihood of a given graph $G$ with weights $w(e)$ on the edges as:

\begin{equation}
\label{eq:model} 
\Prob{G;\{ w(e)\}_{e \in E}} = \prod_{e \in E} f_e(w(e); \vec{\theta_e}).
\end{equation}

This model includes the standard Bernoulli model that is used extensively in the existing literature as a special case.  Specifically,  in the standard binomial uncertain graph model  an uncertain graph is modeled by the triple $\mathcal{G} = (V,E,p)$   where $p: E \rightarrow (0,1]$  is the function that assigns a probability of success to each  edge independently from the other  edges.  According to the possible-world semantics \cite{bollobas2007phase,dalvi2007efficient} that interprets  $\mathcal{G} $ as a set $\{G: (V,E_G)\}_{E_G \subseteq E}$ of $2^{|E|}$ possible deterministic graphs (worlds), each defined by a subset of $E$.  The probability of observing any possible world $G(V,E_G) \in 2^{E}$ is 
$$\Prob{G} = \prod\limits_{e \in E_G} p(e) \prod\limits_{e \in E\backslash E_G} (1-p(e)).$$  
 
A key observation to hold in mind, is that each edge $e$ in the uncertain graph is independently distributed from the rest and is associated with an expected reward $\mu_e$ (expectation)  and a risk $\sigma_e^2$ (variance). Finally, observe that without any loss of generality in our general model described by equation~\eqref{eq:model} we have assumed that the edge set is ${[n] \choose 2}$; non-edges can be modeled as edges with probability of existence zero.  

\begin{algorithm}[ht]
\caption{\label{alg:exclusion} \tt{Exclusion-Queries}$(G(V,E), \{\text{colors}\}, W>0)$ }
\begin{algorithmic}  
\FOR{$e \in E(G)$} 
\FOR{$c \in \text{colors}$} 
\IF{If $type(e)=c$}  
\STATE{$w(e) \leftarrow -W$} (else $w(e)$ remains 1)
\ENDIF
\ENDFOR
\ENDFOR
\STATE{Return  $S \subseteq V$ that achieves maximum average degree in $G(V,E,w)$.}
\end{algorithmic}
\end{algorithm}

\underline{Problem formulation.} Intuitively, our goal is to find a subgraph  $G[S]$ induced by $S \subseteq V$ such that its average expected reward $\frac{\sum\limits_{e \in E(S)} w_e}{|S|}$ is large and the  associated average risk is low  $\frac{\sum\limits_{e \in E(S)} \sigma_e^2}{|S|}$.   To achieve this purpose we model the problem as a densest subgraph discovery problem in a graph with positive (reward) and negative (risk) edge weights. Specifically, for every edge $e=(u,v) \in E(G)$ we create two edges, a positive edge with weight equal to the expected reward, i.e., $w^+(e)=\mu_e$ and a negative edge with weight equal to the opposite of the risk of the edge, i.e., $w^-(e)=\sigma_e^2$.   We wish to find a subgraph $S \subseteq V$ that has large positive average degree $\frac{w^{+}(S)}{|S|}$, and small negative average degree $\frac{w^{-}(S)}{|S|}$.  We combine the two objectives into one objective $f:2^V \rightarrow \field{R}$ that we wish to maximize: 

$$f(S) = \frac{w^{+}(S)+\lambda_1|S|}{w^{-}(S)+\lambda_2|S|}.$$

\noindent The parameters  $\lambda_1,\lambda_2 \geq 0$ are positive reals. First, observe that this dense subgraph discovery formulation is applicable to any graph with positive and negative weights. Parameters $\lambda_1,\lambda_2$ allow us to control the size of the output as follows. Let us reparameterize the two parameters as $\lambda_1 = \rho \lambda, \lambda_2=\lambda$. Then  $f(S) = \frac{w^{+}(S)+\rho \lambda |S|}{w^{-}(S)+\lambda|S|}$, so if the ratio $\rho \geq 1$, then the objective favors larger node sets, whereas when $\rho<1$ we favor smaller node sets. 

We show how to solve the problem $\max\nolimits_{S \subseteq V} f(S)$ by reducing it to standard dense subgraph discovery \cite{lawler2001combinatorial,goldberg84}. We perform binary search on $f(S)$ by answering queries of the following form:

\begin{center}
\fbox{\begin{varwidth}{\dimexpr\textwidth-2\fboxsep-2\fboxrule\relax}
\begin{quote}
Does there exist a subset of nodes $S \subseteq V$ such that $f(S) \geq q$, where $q$ is a query value? 
\end{quote}
\end{varwidth}}
\end{center}

Assuming an efficient algorithm for answering this query, and that the weights are polynomial functions of $n$, then using $O(\log n)$ queries we can find the optimal value for our objective $f:V \rightarrow \field{R}$.   By analyzing what each query corresponds to, we find:

\begin{align}
\label{eq:dsdreduction}
\frac{w^{+}(S)+\lambda_1|S|}{w^{-}(S)+\lambda_2|S|} &\geq q  \rightarrow   w^{+}(S)+\lambda_1|S| \geq q (w^{-}(S)+\lambda_2|S|)   \rightarrow \\   \nonumber 
\sum_{e \in E(S)} \underbrace{\bigg(w^{+}(e)-qw^{-}(e)\bigg)}_{\tilde{w}(e)} &\geq |S| \underbrace{(q \lambda_2 - \lambda_1)}_{q'} \rightarrow \sum_{e \in E(S)}  \frac{\tilde{w}(e)}{|S|} \geq q'.
\end{align}

The latter inequality suggests that our original problem corresponds to querying in $\tilde{G}$ --a modified version of  $G$ where the edge weight of any edge $e$ becomes $w^{+}(e)-qw^{-}(e)$-- 
whether there exists a subgraph $S$  with density greater than $q'$, where $q'=q\lambda_2-\lambda_1$. However, this does not imply that our problem is poly-time solvable. The densest subgraph problem is poly-time solvable using a maximum flow formulation  
when the weights are positive rationals 
 \cite{goldberg84}. As we will prove in the next section,  the densest subgraph problem when there exist negative weights is \NPhard in general. However, our analysis above leads to a straight-forward corollary that is worth stating.  Intuitively, when for {\em each edge} $e$ the ratio $\frac{w^+(e)}{w^-(e)}$ is large enough, then our problem is solvable in polynomial time.

\begin{corollary} 
\normalfont
Assume that $w^{+}(e) \geq q_{max} w^{-}(e)$ for all $e \in E^{+} \cup E^{-}$, where $q_{max}$ is the maximum possible query value. Then, the densest subgraph problem is solvable in polynomial time. 
\end{corollary}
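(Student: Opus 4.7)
The plan is to leverage directly the reduction already carried out in equation~\eqref{eq:dsdreduction}, which shows that every query $f(S)\geq q$ is equivalent to asking whether the modified graph $\tilde G$, with edge weights $\tilde w(e)=w^+(e)-q\,w^-(e)$, contains a subgraph of density at least $q'=q\lambda_2-\lambda_1$. The whole content of the corollary is that, under the stated hypothesis, every such query falls inside the polynomially solvable regime of the classical densest subgraph problem.

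First I would formalize the binary search. Since the edge weights are polynomial in $n$, the optimal value $f^{\star}=\max_{S\subseteq V} f(S)$ lies in a rational interval $[0,q_{max}]$ whose endpoints have bit-length polynomial in the input, so $O(\log n)$ binary-search queries over $q\in[0,q_{max}]$ suffice to recover $f^{\star}$ exactly. Each query is of the form: does there exist $S\subseteq V$ with $\sum_{e\in E(S)}\tilde w(e)/|S|\geq q'$ in $\tilde G$?

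Second, and this is the crux, I would show that for every query value $q\leq q_{max}$ the modified edge weights are non-negative. Indeed, for any edge $e$ the hypothesis $w^+(e)\geq q_{max}\,w^-(e)$ and $q\leq q_{max}$ give
\begin{equation*}
\tilde w(e)=w^+(e)-q\,w^-(e)\;\geq\;q_{max}\,w^-(e)-q\,w^-(e)\;=\;(q_{max}-q)\,w^-(e)\;\geq 0,
\end{equation*}
since $w^-(e)\geq 0$ by our conventions. Hence $\tilde G$ is a graph with non-negative rational edge weights, and the question ``does some $S\subseteq V$ have density at least $q'$ in $\tilde G$?'' is exactly an instance of the classical densest subgraph decision problem, solvable in polynomial time via the Goldberg~\cite{goldberg84} max-flow formulation (or via Charikar's LP~\cite{Char00}). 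Chaining the $O(\log n)$ queries, each of which reduces to a polynomial-time computation on a non-negatively weighted graph, yields an overall polynomial-time algorithm for $\max_{S}f(S)$.

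The only mildly delicate point is bookkeeping around the binary search: one has to argue that the optimum is attained at a rational number whose denominator is bounded by a polynomial in $n$ and the maximum edge weight (so that $O(\log n)$ iterations actually pin down $f^{\star}$ and the optimal $S$), and that $q_{max}$ itself is chosen consistently with the hypothesis. Both points are standard for density problems with polynomially-bounded integer/rational weights; I expect no essential obstacle beyond this routine verification, since once $q\leq q_{max}$ the entire argument collapses to ``$\tilde G$ has non-negative weights, apply a known polynomial-time algorithm.''
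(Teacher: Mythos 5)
Your proof is correct and follows essentially the same route as the paper's: under the hypothesis $w^{+}(e)\geq q_{max}w^{-}(e)$ the modified weights $\tilde w(e)=w^{+}(e)-qw^{-}(e)$ are non-negative for every query $q\leq q_{max}$, so each binary-search query reduces to the classical non-negatively weighted densest subgraph problem solvable via \cite{goldberg84}. You simply spell out the inequality $\tilde w(e)\geq (q_{max}-q)w^{-}(e)\geq 0$ and the binary-search bookkeeping more explicitly than the paper's one-line proof does.
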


\begin{proof}
If $w^{+}(e) \geq q_{max} w^{-}(e)$ for each $e \in E$, we obtain $\tilde{w}(e) \geq 0$ for each $e \in E$ in  inequality~\eqref{eq:dsdreduction} is equivalent to solving the densest subgraph problem in an undirected graph with non-negative weights, see \cite{goldberg84,tsourakakis2015kclique}.     
\end{proof}

Observe that a trivial upper bound of $q_{\max}$ can be obtained by setting $w^+(S) = \sum_{e \in E(G)} w^+(e), w^-(S)=0$, and since $\lambda_1|S| \leq \lambda_1 n, \lambda_2|S| \geq \lambda_2$ for all $S \neq \emptyset$, we see that $q_{\max} \leq \frac{\sum_{e \in E(G)} w^+(e)+\lambda_1 n}{\lambda_2}$. For polynomially bounded weights, this is a polynomial function of $n$, hence the number of  binary search iterations is logarithmic.

\underline{Controlling the risk in practice.} There exist real-world scenarios where the practitioner wants to control the trade-off between reward and risk, see \cite{tsourakakis2018risk}. An effective way to change the risk tolerance is as follows  by multiplying the negative induced weight $w^-(S)$ by  $B  \in (0,+\infty)$.  Namely, our objective $f:2^{V} \rightarrow \field{R}$ is  
$f(S) = \frac{w^{+}(S)+\lambda_1|S|}{Bw^{-}(S)+\lambda_2|S|}.$
An interesting open problem is to develop a formal  (bi-criteria) approximation for risk averse DSD along the lines of  \cite{ravi1996constrained,tsourakakis2018risk}.

\spara{Soft and hard exclusion dense subgraph queries.}  Given the Twitter network, where user accounts may interact in more than one ways (e.g., {\it follow, retweet, mention, quote, reply}),  can we find a dense subgraph that does not contain any {\em follow} but contains many {\em  reply} interactions?   We ask this question in a more general form.  

\begin{tcolorbox}
\begin{problem}
Given a large-scale multilayer network, how do we find a dense subgraph that {\em excludes} certain types of edges?
\end{problem}
\end{tcolorbox}

\noindent We consider two types of such queries. The {\em soft} and {\em hard} queries. In the former case we want to find subgraphs with perhaps few  edges of certain types, in the latter case we want to exclude fully such edges.   An algorithmic primitive  that can answer efficiently these queries can be used to understand the structure of large-scale multilayer networks,  and find anomalies and interesting patterns.  As a result,  subgraphs that do not induce any edge of any excluded type will have positive weight, whereas subgraphs that induce even one edge of a forbidden type will have $-\infty$ weight.   In principle, we set the edge weight of an excluded type to $-W$ where $W>0$ is an input parameter.   The pseudo-code in Algorithm shows this approach.   
Again, dense subgraph discovery with negative weights plays the key role in developing such a graph primitive. In practice, a practitioner may range $\kappa$ from small to large values.

\subsection{Hardness} 
\label{subsec:hardness}

We prove that solving the densest subgraph problem on graphs with negative weights is \NPhard. We formally define our problem {\sc Neg-DSD}.

\begin{tcolorbox}
\begin{problem}[Neg-DSD]
\label{negdsd}
Given a graph $G$ with loops and possibly negative weights, find the subset $A$ of $V$ that maximizes $\frac{w(A)}{|A|}$. 
\end{problem}
\end{tcolorbox}

\noindent We prove that {\sc Neg-DSD} is \NPhard. Our reduction is based on the  the proposed strategy by Peter Shor for showing that the max-cut problem on graphs with possibly negative edges is \NPhard \cite{csexchange}. This is stated as the Theorem~\ref{thm:hardness}.

\begin{theorem}
\label{thm:hardness}
 \normalfont
{\sc Neg-DSD} is \NPhard. 
\end{theorem}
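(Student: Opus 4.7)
The plan is to prove Theorem~\ref{thm:hardness} by a polynomial-time reduction from Max-Cut on graphs with possibly negative edge weights, which is \NPhard by Shor's observation (ordinary positive-weight Max-Cut is already a special case). Given a Max-Cut instance $H = (V_H, E_H, w_H)$ with $n = |V_H|$, I plan to construct a \NegDSD instance $G$ whose vertex set is the disjoint union $V_H^{(1)} \cup V_H^{(2)}$ of two copies of $V_H$. The intended semantics is that $v^{(1)} \in A$ encodes ``$v$ is on side $1$'' while $v^{(2)} \in A$ encodes ``$v$ is on side $2$,'' so the optimum should contain exactly one copy per vertex; call such a subset \emph{canonical}.

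To bias the solver toward canonical subsets I introduce a large parameter $M$ (to be fixed as a polynomial in $n$ and $\max_e |w_H(e)|$) and place a self-loop of weight $+M$ at every vertex of $G$, together with an edge of weight $-2M$ between $v^{(1)}$ and $v^{(2)}$ for each $v$. Picking exactly one copy of $v$ then contributes $+M$ to $w(A)$, picking both contributes $M + M - 2M = 0$, and picking neither contributes $0$. Each $H$-edge $(u,v)$ of weight $w_{uv}$ is encoded by weight $+w_{uv}/2$ on the cross-copy edges $(u^{(1)}, v^{(2)}), (u^{(2)}, v^{(1)})$ and weight $-w_{uv}/2$ on the same-copy edges $(u^{(1)}, v^{(1)}), (u^{(2)}, v^{(2)})$. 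For any canonical $A$ inducing a bipartition $(P, Q)$ of $V_H$, a direct calculation gives $|A| = n$ and
\[
w(A) \;=\; nM \;+\; \partial_H(P,Q) \;-\; \tfrac{1}{2} W_H,
\]
where $\partial_H(P,Q)$ is the cut weight and $W_H = \sum_{e \in E_H} w_H(e)$. Hence maximizing $w(A)/|A|$ over canonical $A$ reduces to Max-Cut on $H$, up to an additive constant and an overall factor of $1/n$.

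The main obstacle is showing that the global \NegDSD optimum is actually attained at a canonical $A$ rather than at one of the non-canonical deviations (adding or removing a single copy, or converting a pair from ``exactly one'' to ``both''/``neither''). The subtlety is that \NegDSD optimizes a ratio rather than a raw weight: a local deviation shifts $|A|$ by one and $w(A)$ by roughly $\pm M$ plus a polynomially bounded $H$-edge correction, and the density's behavior depends on the balance between these terms and on the sign of $\partial_H(P,Q) - \tfrac{1}{2} W_H$. I plan to handle this via a direct case analysis of the four local moves, combined with a large polynomial choice of $M$ that dominates both the $H$-edge corrections and the $\Theta(1/n)$-scale perturbations; where the plain construction is not strong enough, I will pre-process the Max-Cut instance (e.g., by shifting all $w_H(e)$ by a common constant or padding with dummy isolated vertices) so that the canonical density at the maximum cut provably beats every deviation. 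Since $M$ is polynomial in the input size, the reduction remains polynomial-time, establishing that \NegDSD is \NPhard.
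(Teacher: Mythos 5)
Your construction has a genuine gap at exactly the step you flag as ``the main obstacle,'' and the mechanism you propose for closing it cannot work. The large parameter $M$ is supposed to force the optimum to be canonical, but in the ratio objective $w(A)/|A|$ a uniform per-vertex reward is density-neutral: every self-loop contributes $+M$ to the numerator while its vertex contributes $+1$ to the denominator, so a canonical set built on a sub-collection $S\subseteq V_H$ has density $M+\bigl(\mathrm{cut}_S-\tfrac12 W_S\bigr)/|S|$, and the $M$-terms cancel identically when you compare canonical sets over different $S$. Hence no choice of $M$, however large, penalizes the deviation ``drop vertex $v$ entirely,'' and the optimizer is free to solve a normalized cut-selection problem over \emph{subsets} of $V_H$ rather than Max-Cut on all of $V_H$. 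This is not merely an unproven claim; it is false. Take $H$ to be a positive edge $(a,b)$ of weight $2$ together with a path on $k\ge 3$ further vertices whose edges have weight $\varepsilon\approx 0$. The best canonical set has density about $M+1/(k+2)$, while the non-canonical set $\{a^{(1)},b^{(2)}\}$ has density $(2M+1)/2=M+1/2$; with threshold $c=3$ your decision reduction answers ``yes'' although the maximum cut of $H$ is only $2+O(k\varepsilon)$. The repairs you gesture at do not address this: shifting all $w_H(e)$ by a constant changes different cuts by different amounts (which is precisely why negative-weight Max-Cut needs its own hardness argument in the first place), and padding with dummy vertices only adds more density-neutral mass.

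The paper avoids this normalization trap by never trying to read the cut \emph{value} off the density. It first reduces Max-Cut to the decision problem \textsc{Positive-Cut} (``is there a cut of strictly positive weight?'') via a two-anchor gadget: guess two vertices $u,v$ on opposite sides of an optimal cut, attach weight $-d$ edges from $u$ and from $v$ to all other vertices, and add an edge of weight $(n-2)d-c$ between $u$ and $v$, so that cuts separating $u$ from $v$ are positive exactly when the original cut exceeds $c$. It then reduces \textsc{Positive-Cut} to \NegDSD by negating all weights and adding a self-loop at each vertex so that every weighted degree becomes zero; summing degrees over any $A$ gives $0=2w(A)+\mathrm{cut}(A,\bar A)$, hence some $A$ has \emph{positive} density if and only if some cut is positive. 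Because only the sign of $w(A)$ matters, the division by $|A|$ is harmless and no canonicity needs to be enforced. To salvage your two-copy gadget you would need a similar sign-only formulation, or a genuinely different device for controlling $|A|$.
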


For convenience, we define the decision version of the maximum cut problem \cite{csexchange}.  

\begin{problem*}[Max-Cut] 
Given a graph $G(V,E)$ and a constant $c$, find a partition $(A, B)$ of $V$ such that  $cut(A, B) > c$. 
\end{problem*}  

Our proof strategy is inspired by Peter Shor's proof that max-cut with negative weight edges is \NPhard \cite{csexchange}. We provide a detailed proof sketch of Theorem~\ref{thm:hardness}.

\begin{proof} 
 \normalfont
First, we define the {\sc Positive-Cut} problem, and show that it is \NPhard  by reducing the {\sc Max-Cut} problem to it. 

\begin{problem*}[Positive-Cut] 
Given a graph $G$ with possibly negative weights, find a partition $(A, B)$ of $V$ such that  $cut(A, B) > 0$. 
\end{problem*} 

We choose two nodes  $u,v$ that lie on opposite sides of an optimal max cut $(A^*,B^*)$.  Despite the fact we do not know the max cut, we can perform this step in polynomial time by repeating the following procedure for all possible pairs of nodes; if we cannot find a positive cut for any of the pairs, then the answer to the {\sc Max-Cut} is negative. We construct a graph $G'$ by adding a very large negative weight equal to $-d$ from $u$ and $v$ to all other vertices, and an edge of weight $(n-2)d-c$ between $u,v$. All cuts that place $u,v$ on the same side will be negative in $G'$ provided $d$ is  sufficiently large. All other cuts will be positive if and only if the corresponding cut in $G$ is greater than $c$. Therefore, {\sc Positive-Cut} is \NPhard.  

Finally we prove that {\sc Neg-DSD} is \NPhard using a reduction from {\sc Positive-Cut}.  We construct a graph $G'$ by negating every weight in $G$ putting a loop on every vertex so that its weighted degree is zero. Hence the sum of the degrees of any set $A$  in $G'$ is equal to $0=\sum_{v \in S} 0 = 2w(A)+cut(A,\bar{A})$. Observe that a cut $(A, B)$ has positive weight in $G$ if and only if $A$ has positive average degree.  This completes the proof. 
\end{proof}

\subsection{Algorithms and Heuristics} 

A popular algorithm for the densest subgraph problem is Charikar's algorithm \cite{Char00}.  We study the performance of this algorithm in the presence of negative weights. The pseudocode is given as 
Algorithm~\ref{alg:peeling}. The algorithm iteratively removes from the graph the node of the smallest degree $d(v)=deg^+(v)-deg^-(v)$, and among the sequence of $n$ produced graphs, outputs the one that achieves the highest degree density. Our main theoretical result for the performance of Algorithm~\ref{alg:peeling} is stated as Theorem~\ref{thm:guarantee}.

\begin{algorithm}[ht]
\caption{\label{alg:peeling} \tt{Peeling}$(G)$ }
\begin{algorithmic}
\STATE{$n\leftarrow |V|, H_n \leftarrow G$}
\FOR{$i\leftarrow n$ to $2$} 
\STATE{Let $v$ be the vertex of $G_i$ of minimum degree, i.e.,  $d(v)=deg^{+}(v)-deg^{-}(v)$} (break ties arbitrarily)
\STATE{$H_{i-1} \leftarrow H_i \backslash{v}$}
\ENDFOR
\STATE{Return  $H_j$ that achieves maximum average degree among $H_i$s, $i=1,\ldots,n$.}
\end{algorithmic}
\end{algorithm}

\begin{theorem} 
\label{thm:guarantee}
 \normalfont
Let $G(V,E,w)$, $w:E \rightarrow \field{R}$ be an undirected weighted graph with possibly negative weights. If the negative degree $deg^{-}(u)$ of any node $u$  is upper bounded by $\Delta$, then Algorithm~\ref{alg:peeling} outputs a set whose density is at least $\frac{\rho^*}{2}-\frac{\Delta}{2}$. 
\end{theorem}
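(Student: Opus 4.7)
The plan is to adapt Charikar's classical $\tfrac{1}{2}$-approximation analysis of the peeling algorithm, quantifying the extra loss incurred by negative edges through the parameter $\Delta$. Let $S^*$ achieve the optimum density $\rho^* = w(S^*)/|S^*|$; assume $|S^*| \ge 2$ (otherwise $\rho^* \le 0$ and the bound is trivial). As a preliminary step, I would establish the usual local optimality inequality: for every $v \in S^*$ we have $\rho(S^* \setminus \{v\}) \le \rho^*$, which after clearing denominators and using $w(S^*)=\rho^*|S^*|$ rearranges to $d_{S^*}(v) \ge \rho^*$. This manipulation is sign-agnostic and carries over from the positive-weight case verbatim.

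\textbf{The key step.} Next I would trace the peeling process and focus on the first iteration at which the algorithm removes a node $v \in S^*$. Let $H_i$ be the current graph at that moment; by the choice of $v$, $S^* \subseteq V(H_i)$ and $v$ attains the minimum total degree in $H_i$. The subtlety relative to the non-negative case is that one cannot just write $d_{H_i}(v) \ge d_{S^*}(v)$, because enlarging the vertex set from $S^*$ to $V(H_i)$ may introduce extra \emph{negative} incidences at $v$. Instead, I would split $d_{H_i}(v) = deg^{+}_{H_i}(v) - deg^{-}_{H_i}(v)$, use monotonicity of positive degrees ($deg^{+}_{H_i}(v) \ge deg^{+}_{S^*}(v)$) on the positive side, and the global hypothesis $deg^{-}_{H_i}(v) \le \Delta$ on the negative side, to get
\[
 d_{H_i}(v) \;\ge\; deg^{+}_{S^*}(v) - \Delta \;=\; d_{S^*}(v) + deg^{-}_{S^*}(v) - \Delta \;\ge\; \rho^* - \Delta,
\]
where the last inequality uses Step~1 together with $deg^{-}_{S^*}(v) \ge 0$.

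\textbf{From minimum degree to density.} Since $v$ minimizes the degree in $H_i$, every $u \in V(H_i)$ satisfies $d_{H_i}(u) \ge \rho^* - \Delta$. Summing and invoking the handshake identity $\sum_{u \in V(H_i)} d_{H_i}(u) = 2 w(H_i)$ gives $\rho(H_i) = w(H_i)/|H_i| \ge (\rho^* - \Delta)/2$. The algorithm returns the densest $H_j$ encountered along its trajectory, so the output density is at least $\rho^*/2 - \Delta/2$, as claimed.

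\textbf{Main obstacle.} The only genuinely non-routine point is the middle display: in Charikar's original proof one rides on the monotonicity of total degree, which fails here because negative edges are added to $v$ as the outer vertices join $H_i$. Isolating the positive and negative parts and paying for the latter with the uniform bound $\Delta$ is the critical move; it is also what pins down the exact form of the degradation term $\Delta/2$ and clarifies that the theorem is tight precisely when each peeled node in $H_i$ has maximal negative degree $\Delta$ against the ambient vertices.
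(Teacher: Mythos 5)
Your proposal is correct and follows essentially the same route as the paper: identify the first vertex of $S^*$ peeled, lower-bound its total degree by $deg^{+}_{S^*}(v)-\Delta \ge \rho^{*}-\Delta$ using local optimality of $S^*$ and monotonicity of the positive degree, then sum over the surviving vertices via the handshake identity. Your write-up merely makes explicit two steps the paper leaves implicit (the decomposition $deg^{+}_{S^*}(v)=d_{S^*}(v)+deg^{-}_{S^*}(v)\ge\rho^*$ and the passage from $S^*$ to the ambient set $H_i$), so there is nothing substantive to flag.
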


\begin{proof} 
 \normalfont
Let $S^*$ be the optimal densest subgraph in $G$ with average density $\frac{w(S^*)}{|S^*|}=\rho^*$. By the optimality of $S^*$ we obtain that $d_{S^*}(v) \geq  \rho^*$, and then trivially $deg^{+}(v)\geq \rho^*$. Consider the execution of  algorithm~\ref{alg:peeling}, and let $u \in S^*$ be the first vertex from  $S^*$ removed during the peeling. Let $S$ be the set of nodes at that iteration, including $u$. By the peeling process, we have $d_S(v)\geq d_S(u)$ for all $v \in S$. 
Furthermore,  

$$d_S(u) = deg_S^{+}(u)-deg_S^{-}(u)\geq \deg_S^{+}(u) - \Delta,$$ 

\noindent since by our assumption $deg_S^{-}(u)\leq deg^{-}(u) \leq \Delta$.   This implies that 

\begin{align*}
2w(S) &= \sum\limits_{v \in S} d_S(v) \geq \sum\limits_{v \in S} deg_S^{+}(v) - |S| \Delta \geq |S| (\rho^*-\Delta) \rightarrow 
\frac{w(S)}{|S|}  \geq \frac{\rho^*}{2}-\frac{\Delta}{2}.
\end{align*}

This yields that the output of Algorithm~\ref{alg:peeling} outputs a subgraph $H$ with degree density at least $ \frac{\rho^*}{2}-\frac{\Delta}{2}$. 
\end{proof} 

When the additive error term in the approximation is small compared to the term $\frac{\rho^*}{2}$, then the peeling algorithm performs effectively. In practice, Algorithm~\ref{alg:peeling} performs well on large-scale graphs where the negative weights are small. In the presence of  large negative degrees, the approximation guarantees become less meaningful, or even meaningless. 

\begin{claim*} 
 \normalfont
In the presence of large negative weights, Algorithm~\ref{alg:peeling} may perform arbitrarily bad. 
\end{claim*}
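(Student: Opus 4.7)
The plan is to exhibit an explicit family of instances on which Algorithm~\ref{alg:peeling} outputs a set of density $0$ while the optimum density grows without bound. The construction: for $k \geq 2$ and any $W > k-1$, take a clique $K_k$ on vertices $v_1,\ldots,v_k$ with all edge weights $+1$, and attach to each clique vertex $v_i$ two pendant \emph{trap} leaves $t_{i,1}, t_{i,2}$ joined to $v_i$ by edges of weight $-W$. The resulting graph has $3k$ vertices, $\binom{k}{2}$ positive edges, and $2k$ negative edges. The clique $S^* = \{v_1,\ldots,v_k\}$ itself satisfies $w(S^*)/|S^*| = (k-1)/2$, which is the benchmark to beat.

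The main step is to track the peeling order. Initially $d(v_i) = (k-1) - 2W$ and $d(t_{i,j}) = -W$, and since $W > k-1$ we have $d(v_i) < d(t_{i,j})$, so the first vertex peeled is a clique vertex. After $j$ clique vertices have been removed, each surviving clique vertex has degree $(k-1-j) - 2W$, each trap still attached has degree $-W$, and the $2j$ newly orphaned traps have degree $0$. The inequality $(k-1-j) - 2W < -W$ remains valid for all $0 \leq j \leq k-1$ whenever $W > k-1$, so the peeling continues to consume the clique before touching any trap. Hence after $k$ iterations the subgraph $H_{2k}$ consists of $2k$ isolated trap vertices, and every subsequent $H_i$ is edgeless with density $0$.

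The performance bound follows immediately. For each intermediate $H_i$ that still contains at least one clique vertex, its induced weight is of the form $\binom{k-j}{2} - 2(k-j)W$ for the appropriate $j$; choosing $W$ large enough (e.g.\ $W = k$) makes this strictly negative, so $w(H_i)/|H_i| < 0$. For each $H_i$ with no clique vertex remaining, the graph is edgeless and $w(H_i)/|H_i| = 0$. Therefore the best density returned by Algorithm~\ref{alg:peeling} is exactly $0$, whereas the optimum is $(k-1)/2$, which tends to infinity with $k$; both the additive and multiplicative gaps are unbounded. The only nontrivial step is the inductive verification of the peeling order described above, which reduces to the single parameter inequality $W > k-1$; everything else is routine bookkeeping on $w(H_i)$ and $|H_i|$.
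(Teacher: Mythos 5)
Your proposal is correct and takes essentially the same approach as the paper's proof: an explicit family of instances in which heavy negative edges incident to the vertices of the dense core give those vertices the smallest total degree, so the greedy peeling destroys the core before any prefix containing it can be profitably output. The paper's instance (Figure~\ref{fig:counter}($\alpha$)) peels a single center node and leaves an $\epsilon$-density triangle against an optimum of $\frac{3W+3\epsilon}{4}$, whereas your clique-with-pendant-traps instance peels the entire positive clique and leaves density exactly $0$ against an optimum of $\frac{k-1}{2}$; both exhibit an unbounded gap, and your inductive verification of the peeling order (reducing to $W>k-1$) is sound.
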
 

This is illustrated in  Figure~\ref{fig:counter}($\alpha$) that provides a bad graph instance with $n+4$ nodes for our proposed algorithm.  Let $W = \frac{n-4}{3}$. Then, $3W-n < -3$.  The degrees of  the $n+4$ nodes are as follows: 

$$\underbrace{3W-n}_{\text{one node}}<\underbrace{-3}_{n-2 \text{~nodes}}<\underbrace{-2}_{\text{two nodes}}< 0 < \underbrace{2\epsilon+W}_{\text{three nodes}}.$$

\begin{figure*}[htp]
\centering
\centering
\begin{tabular}{@{}c@{}@{\ }c@{}@{\ }c@{}}
\includegraphics[width=0.43\textwidth]{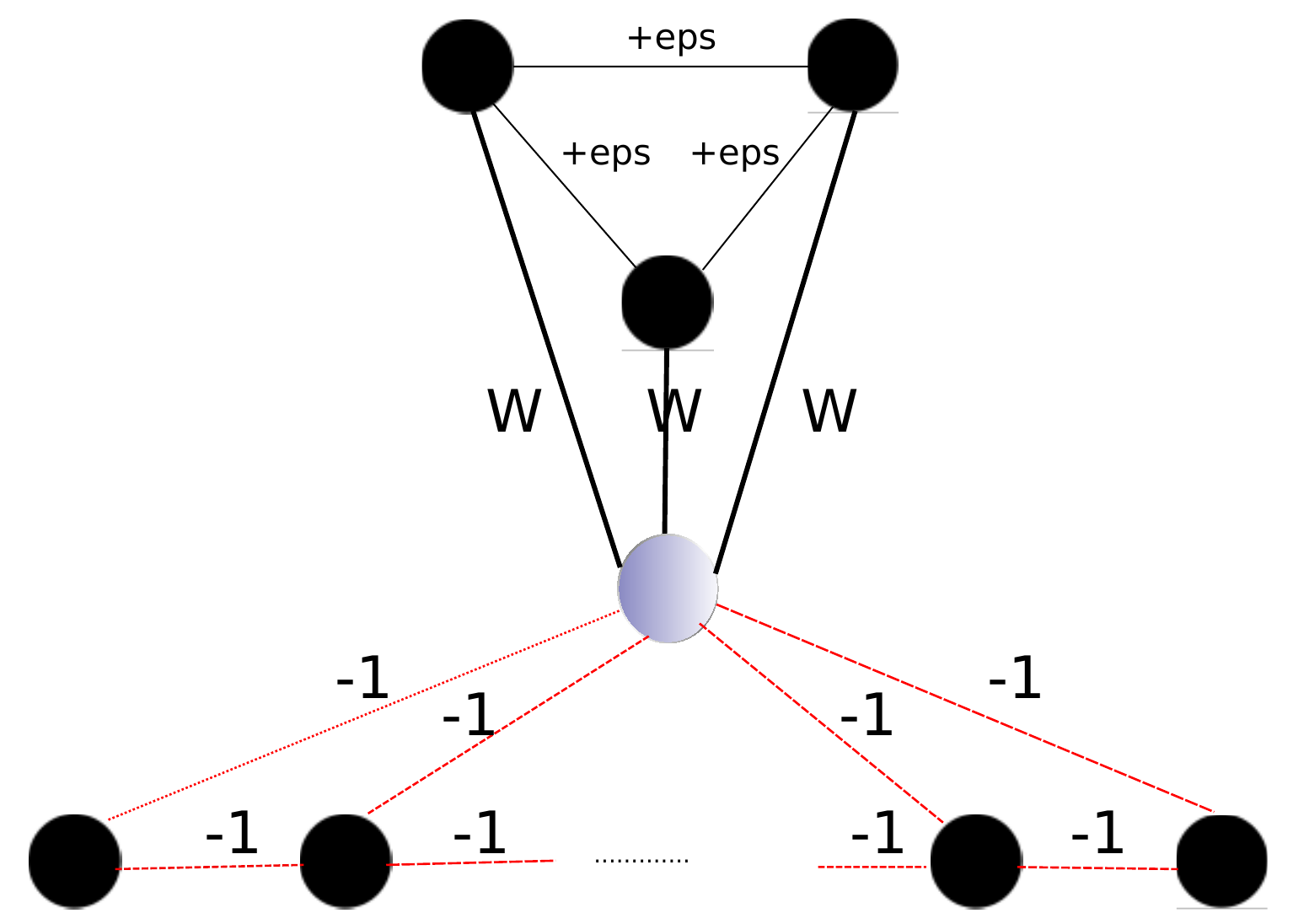}  & & \includegraphics[width=0.23\textwidth]{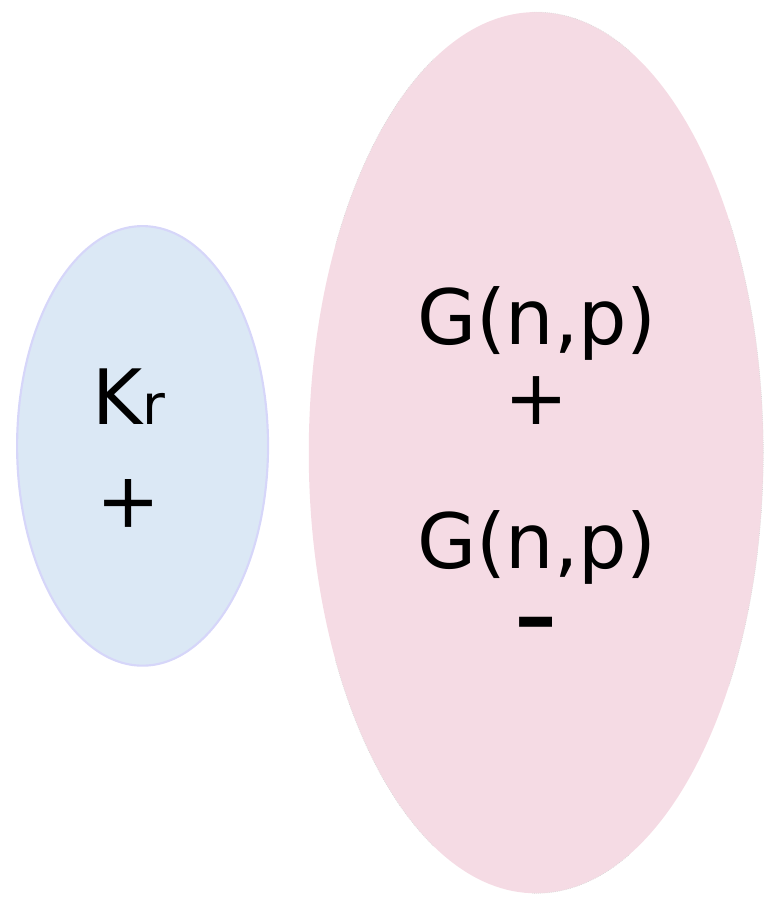} \\ 
($\alpha$) & & ($\beta$) \\ 
\end{tabular}
\caption{\label{fig:counter} Bad peeling instances. For details, see Section~\ref{sec:proposed}.}
\end{figure*} 

\noindent Therefore, the center node is removed first, and the peeling algorithm will output as the densest subgraph the triangle of density $\epsilon$. The optimal densest subgraph has $\frac{3W+3\epsilon}{4}$. By allowing  $\epsilon$ to be arbitrarily small, we observe that the approximation ratio becomes arbitrarily bad. 
To tackle such scenarios, i.e., where nodes from the densest subgraph are peeled earlier than when they should,  we propose an effective heuristic which is outlined in Algorithm~\ref{alg:heuristic}.  The algorithm again peels the nodes but scores every node $u$ according to $Cdeg^+{(u)}-deg^-{(u)}$, where $C>0$ is a parameter that is part of the input.

\spara{Remark about $C$ in Algorithm~\ref{alg:heuristic}.} While Figure~\ref{fig:counter}($\alpha$) suggests the use of $C \geq 1$, it could be the case that $C$ has to be set to a value less than 1 to obtain good results. We provide an example where using $C<1$ can help in providing a better peeling permutation of the nodes. Consider a graph whose weights are either $+1$ or $-1$,  that consists of two connected components. The first component is a positive clique on $r$ nodes. The second component is the union of two random binomial graphs   $G(n,p)$ where $p=\frac{1}{2}$. This is illustrated in Figure~\ref{fig:counter}($\beta$).  The degree of any node $u$ in the first component is $deg(u)=deg^+(u)-deg^-(u)=(r-1)-0$. The expected degree of any node in the second component is 0. Furthermore, the average degree of any subset of nodes in the 2nd component is 0 in expectation. However, using concentration bounds  (details omitted) one can  show that it is likely that there will exist a node $u$ in the second component with positive degree  $\kappa \sqrt n$ and negative degree  $\kappa' \sqrt n$  with $\kappa>\kappa'$, and therefore positive total degree. Only the use of a $C<1$ will improve  the peeling ordering; for example one can immediately see that in the extreme case where $C=0$ the nodes of the second component will be removed first. 

\spara{Rule-of-thumb.} In practice, given that each run of the algorithm takes linear time, we can afford to run  the algorithm for a bunch of $C$ values and return the densest subgraph among the outputs produced by each run, instead of using one value for $C$. This strategy is applied in Section~\ref{sec:exp}.

\begin{algorithm}[ht]
\caption{\label{alg:heuristic} \tt{Heuristic-Peeling}$(G,C)$ }
\begin{algorithmic}
\REQUIRE{$C \in (0,+\infty)$}
\STATE{$n\leftarrow |V|, H_n \leftarrow G$}
\FOR{$i\leftarrow n$ to $2$} 
\STATE{Let $v$ be the vertex of $G_i$ of minimum degree, i.e.,  $d(v)=Cdeg^{+}(v)-deg^{-}(v)$} (break ties arbitrarily)
\STATE{$H_{i-1} \leftarrow H_i \backslash{v}$}
\ENDFOR
\STATE{Return  $H_j$ that achieves maximum average degree among $H_i$s, $i=1,\ldots,n$.}
\end{algorithmic}
\end{algorithm}

\spara{Shifting the negative weights.} Finally, for the sake of completeness, we mention that the perhaps natural idea of shifting all the weights by the most negative weight in the graph, in order to obtain non-negative weights, and apply the exact polynomial time algorithm on the weight-shifted graph may perform arbitrarily bad. To see why, consider a graph on $n+5$ nodes that consists of three components, a triangle with positive weights equal to 1, an edge with a large negative weight $-\Delta<0$, and a large clique on $n$ nodes, whose each edge weight is equal to $-\epsilon<0$. In this graph, the densest subgraph is the positive triangle. However, shifting the weights by $+\Delta$, the degree density of the triangle becomes $1+\Delta$, and of the clique $\frac{ {\Delta-\epsilon \choose 2}}{n}$. For large enough $\Delta$, assuming $\epsilon$ is negligible, the densest subgraph is the clique whose true degree density is negative.  Also experimentally, this heuristic performs extremely poorly.

\section{Experimental results}
\label{sec:exp}

\begin{figure*}[!htp]
\centering
\begin{tabular}{@{}c@{}@{\ }c@{}@{\ }c@{}}
\includegraphics[width=0.33\textwidth]{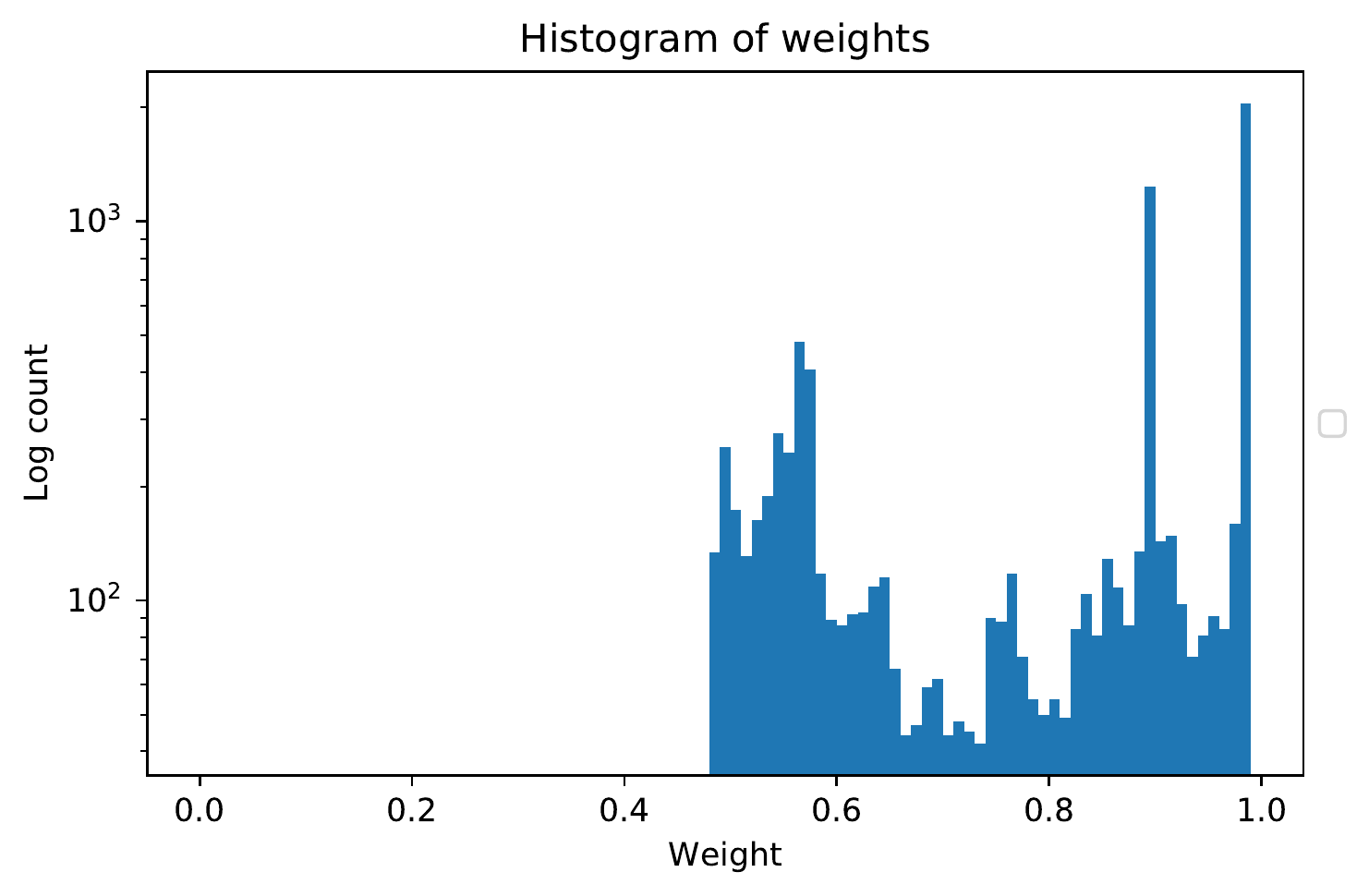} & \includegraphics[width=0.33\textwidth]{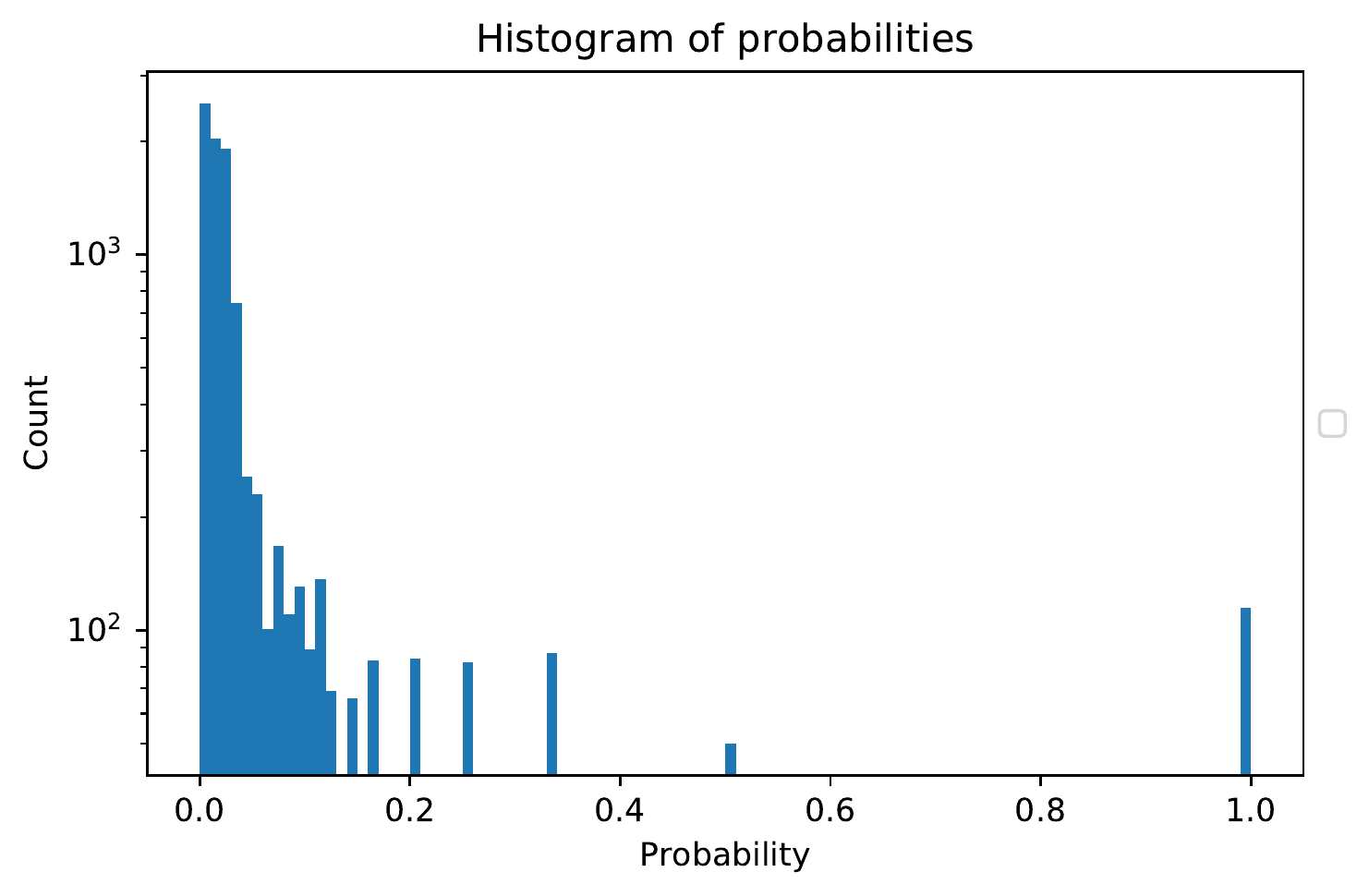}    & \includegraphics[width=0.33\textwidth]{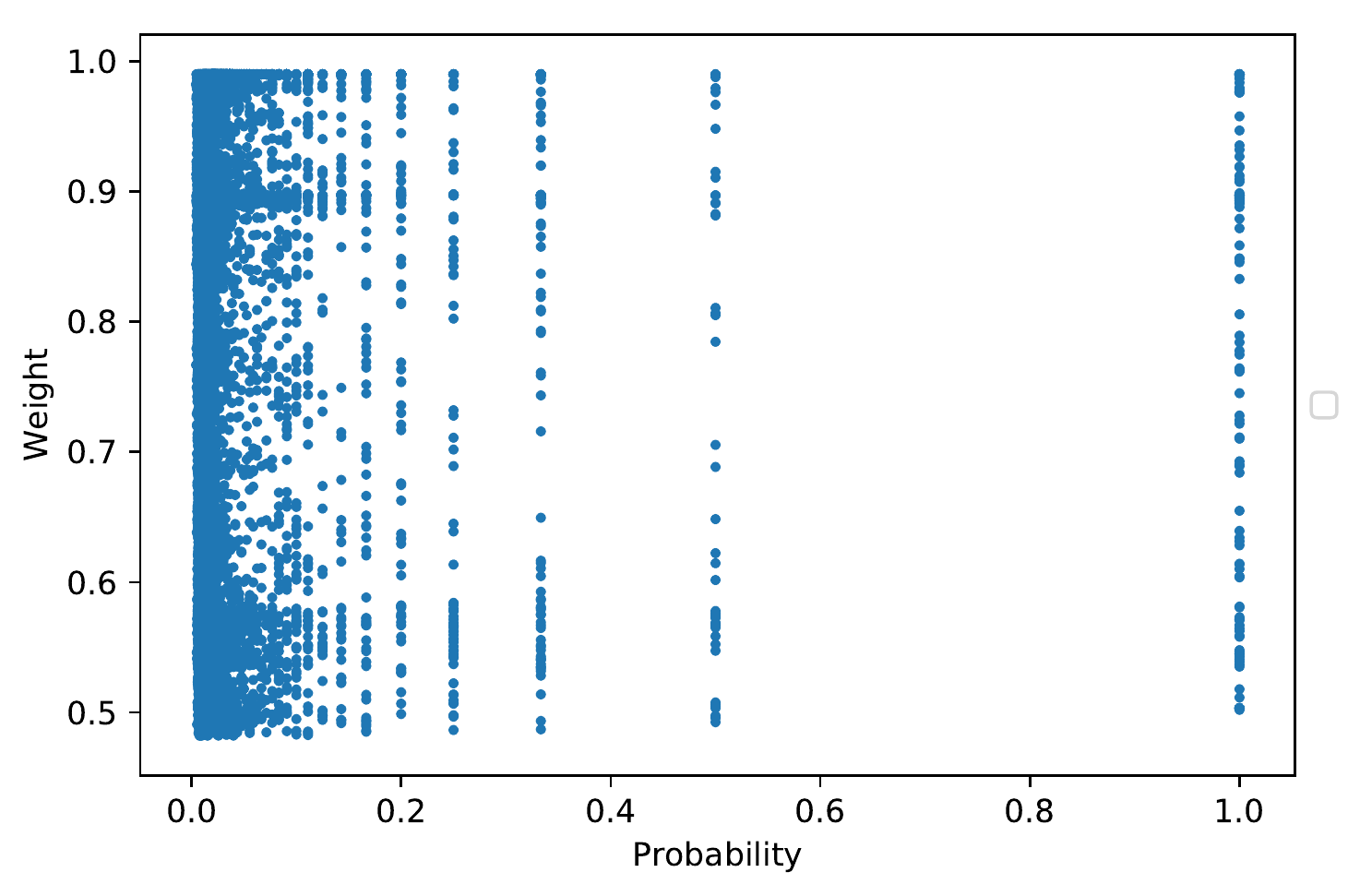}       \\
($\alpha$) & ($\beta$) & ($\gamma$) \\
\includegraphics[width=0.33\textwidth]{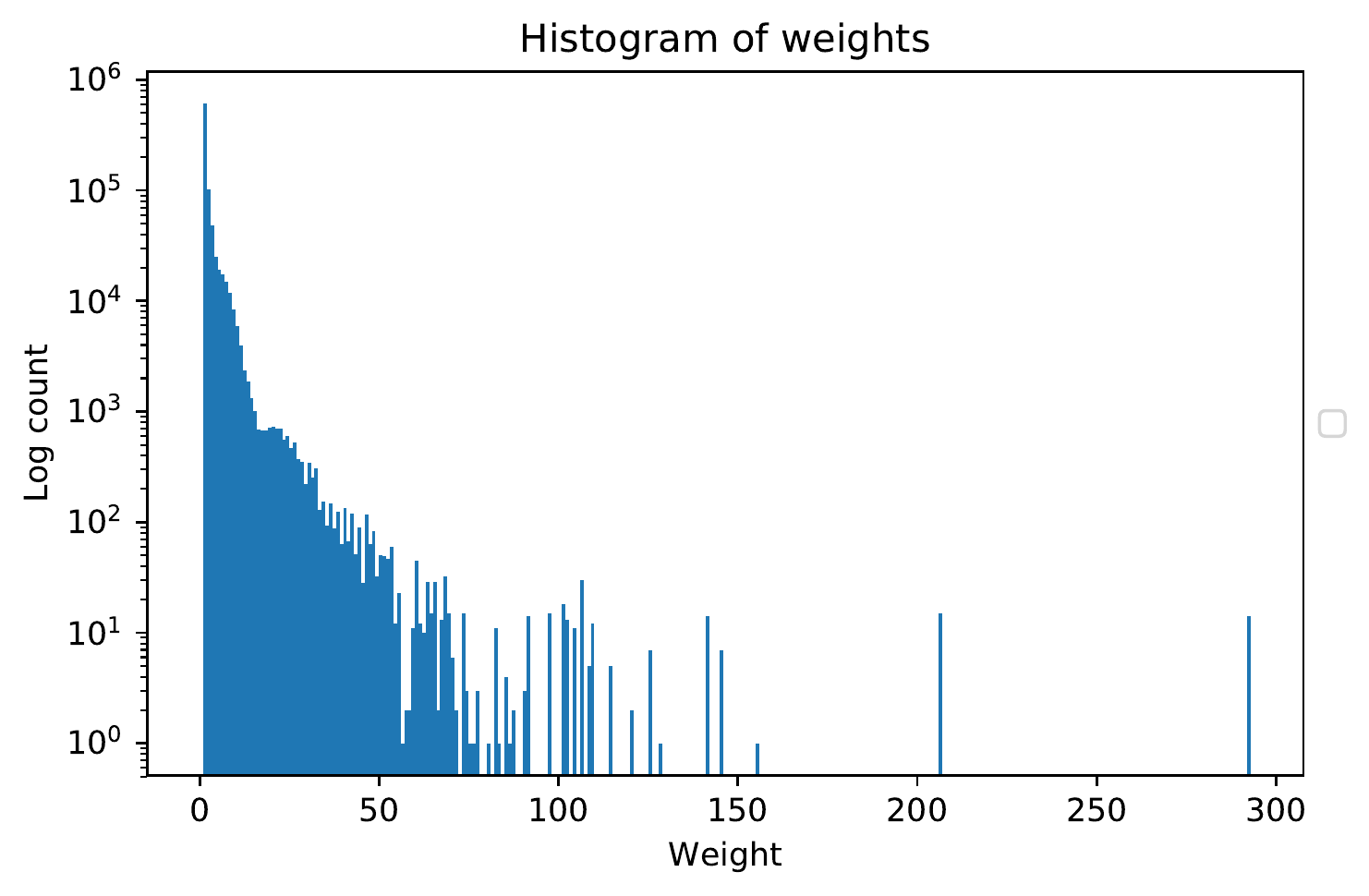} & \includegraphics[width=0.33\textwidth]{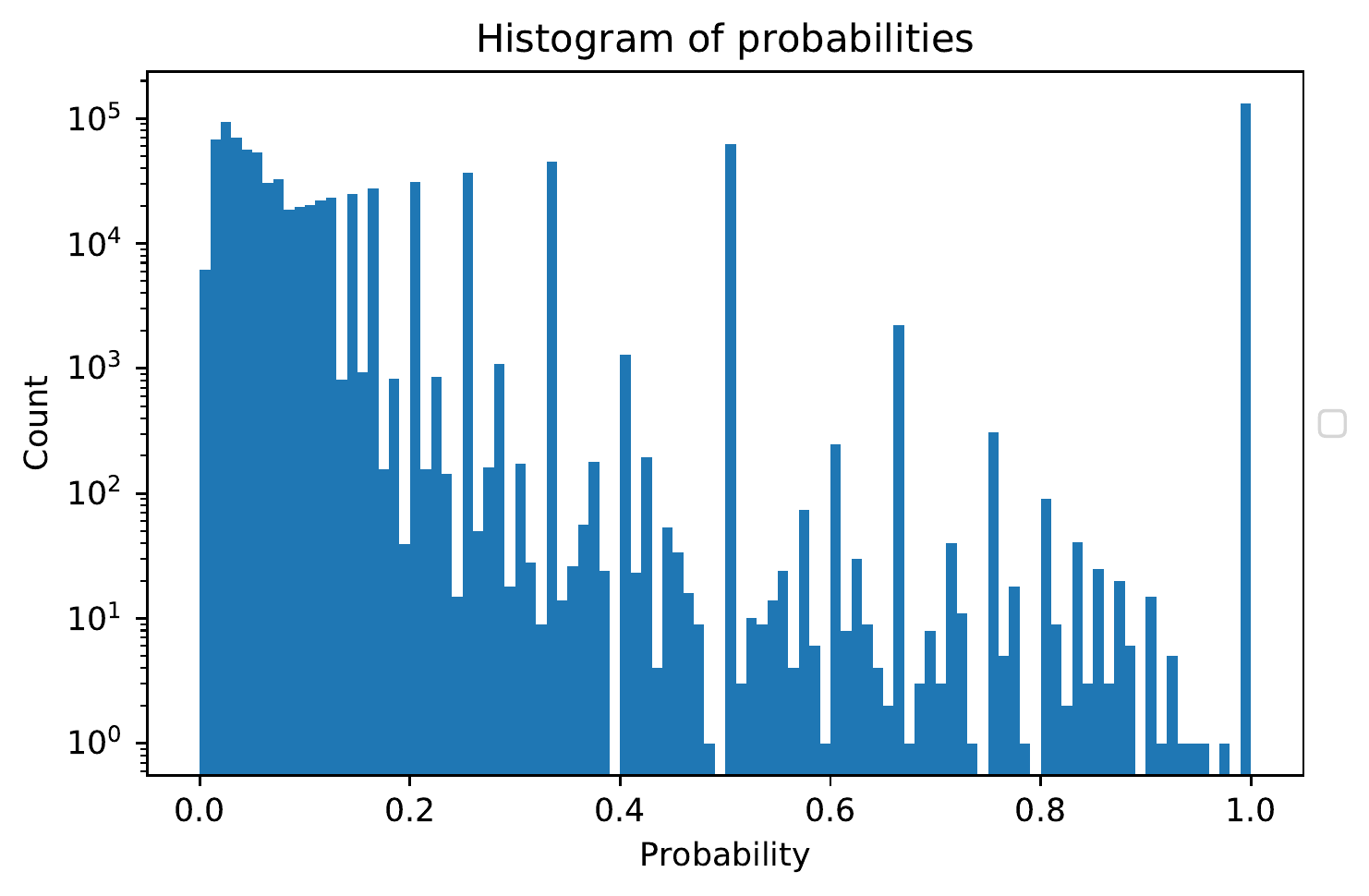}    & \includegraphics[width=0.33\textwidth]{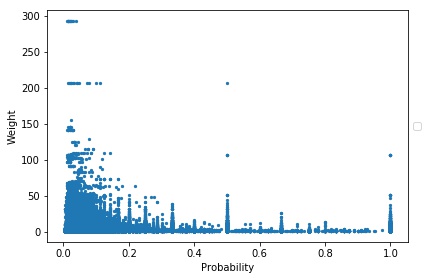}       \\
($\delta$) & ($\epsilon$) & ($\sigma\tau$) \\ 
\end{tabular}
\caption{\label{fig:uncertaingraphs} Uncertain graph statistics. First and second rows correspond to {\em Collins} and {\em TMDB} datasets respectively.  ($\alpha$),($\delta$) Log histogram of weights. ($\beta$),($\epsilon$) Log histogram of edge probabilities. ($\gamma$),($\sigma\tau$) Scatterplot of weights vs. edge probabilities.}
\end{figure*}

\begin{table*}[!ht]
\begin{center}
\begin{tabular}{|l|c|c|} \hline
   Name            & $n$          & $m$   \\       \hline
 \textcolor{green}{$\blacksquare$}  Biogrid   &  5\,640 & 59\,748  \\ 
 \textcolor{green}{$\blacksquare$} Collins          & 1\,622& 9\,074   \\ 
 \textcolor{green}{$\blacksquare$} Gavin         &1\,855 & 7\,669   \\ 
 \textcolor{green}{$\blacksquare$} Krogan core        &2\,708 & 7\,123   \\ 
 \textcolor{green}{$\blacksquare$}  Krogan extended        & 3\,672 & 14\,317   \\  
  \textcolor{cyan}{$\odot$} TMDB           & 160\,784 & 883\,842    \\ \hline 
    \textcolor{cyan}{$\odot$}  Twitter (Feb. 1)  & 621\,617& (902\,834, 387\,597, 222\,253, 30\,018, 63\,062) \\  
  \textcolor{cyan}{$\odot$}  Twitter (Feb. 2)  &706\,104 & (1\,002\,265, 388\,669, 218\,901, 29\,621, 64\,282)  \\  
  \textcolor{cyan}{$\odot$}  Twitter (Feb. 3)  & 651\,109& (1\,010\,002, 373\,889, 218\,717, 27\,805, 59\,503)   \\  
  \textcolor{cyan}{$\odot$} Twitter (Feb. 4)  & 528\,594& (865\,019, 435\,536, 269\,750, 32\,584, 71\,802)  
\\  
  \textcolor{cyan}{$\odot$} Twitter (Feb. 5)  & 631\,697& (999\,961, 396\,223, 233\,464, 30\,937, 66\,968) \\  
  \textcolor{cyan}{$\odot$} Twitter (Feb. 6)  & 732\,852&(941\,353, 407\,834, 239\,486, 31\,853, 67\,374)   \\  
  \textcolor{cyan}{$\odot$} Twitter (Feb. 7)  & 742\,566&(1\,129\,011, 406\,852, 236\,121, 30\,815, 68\,093)  \\  
 \hline
\end{tabular}
\end{center}
\caption{\label{tab:datasets} Datasets used in our experiments. The number of vertices $n$ and edges $m$ 
is recorded for each graph. The datasets annotated by   \textcolor{green}{$\odot$} have been created by us, and are publicly available. The five-dimensional vector containing the number of edges for each day of Twitter correspond to {\em follow, retweet, mention, quote, reply} respectively. For details, see Section~\ref{sec:setup}.}
\end{table*}

\subsection{Experimental setup} 
\label{sec:setup}

\spara{Datasets.} The datasets we have used in our experiments are shown in Table~\ref{tab:datasets}. We use five uncertain graphs, {\it Biogrid, Collins, Gavin, Krogan core, Krogan extended} that have been used in prior biological studies (e.g., \cite{collins2007toward,gavin2006proteome,krogan2006global}), and are available at \cite{paccanarolab}, and one uncertain graph  that we created from the TMDB movie database as follows, and is available at \cite{babis1}. The set of nodes corresponds to actors, and the probability of the edge is equal to the probability that these two actors co-star in a movie. Specifically, for actors $u,v$, the probability $p(u,v)$ is equal to the Jaccard coefficient $J(M_u,M_v) = \frac{ | M_u \cap M_v| }{ | M_u \cup M_v| }$, where $M_u,M_v$ are the sets of movies that $u,v$ have co-starred respectively. We choose weights to represent a function of the popularity of the movies, i.e., a score assigned to each movie by TMDB\footnote{In TMDB the highest score is 10, and the lowest is 1.}. Intuitively, these scores reflect the reward of a potential collaboration between two actors. While there are many ways to set the weight of an edge for two actors (e.g. average popularity), we focus on the most popular movies they have co-starred in. The main rationale behind this choice is that the majority of actors play in movies whose majority popularity  is 1, i.e., the lowest possible. For a pair of actors $\{u,v\}$, let $s_0 \geq \ldots \geq s_{k-1}$ where $k=\min( |M_u \cap M_v|, 5)$ be the popularity scores of movies they have co-starred in. We set $w(u,v) = \sum_{j=0}^{k-1}  \frac{s_j}{2^j}$, i.e., a discounted sum of popularities, focusing more on the most popular movies the  two actors have co-starred in.

\begin{figure*}[!htp]
\centering
\centering
\begin{tabular}{@{}c@{}@{\ }c@{}}
\includegraphics[width=0.43\textwidth]{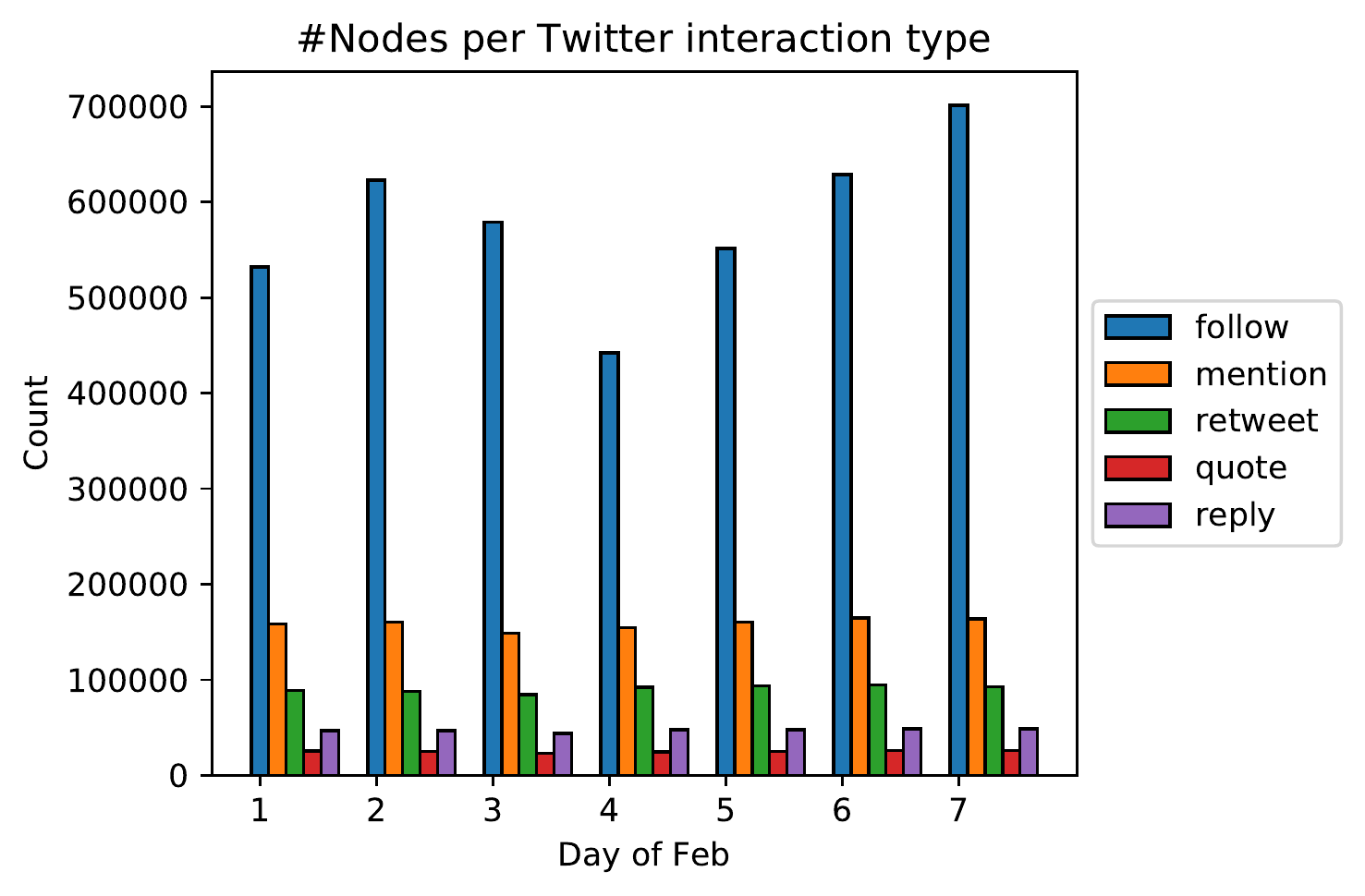}  & \includegraphics[width=0.43\textwidth]{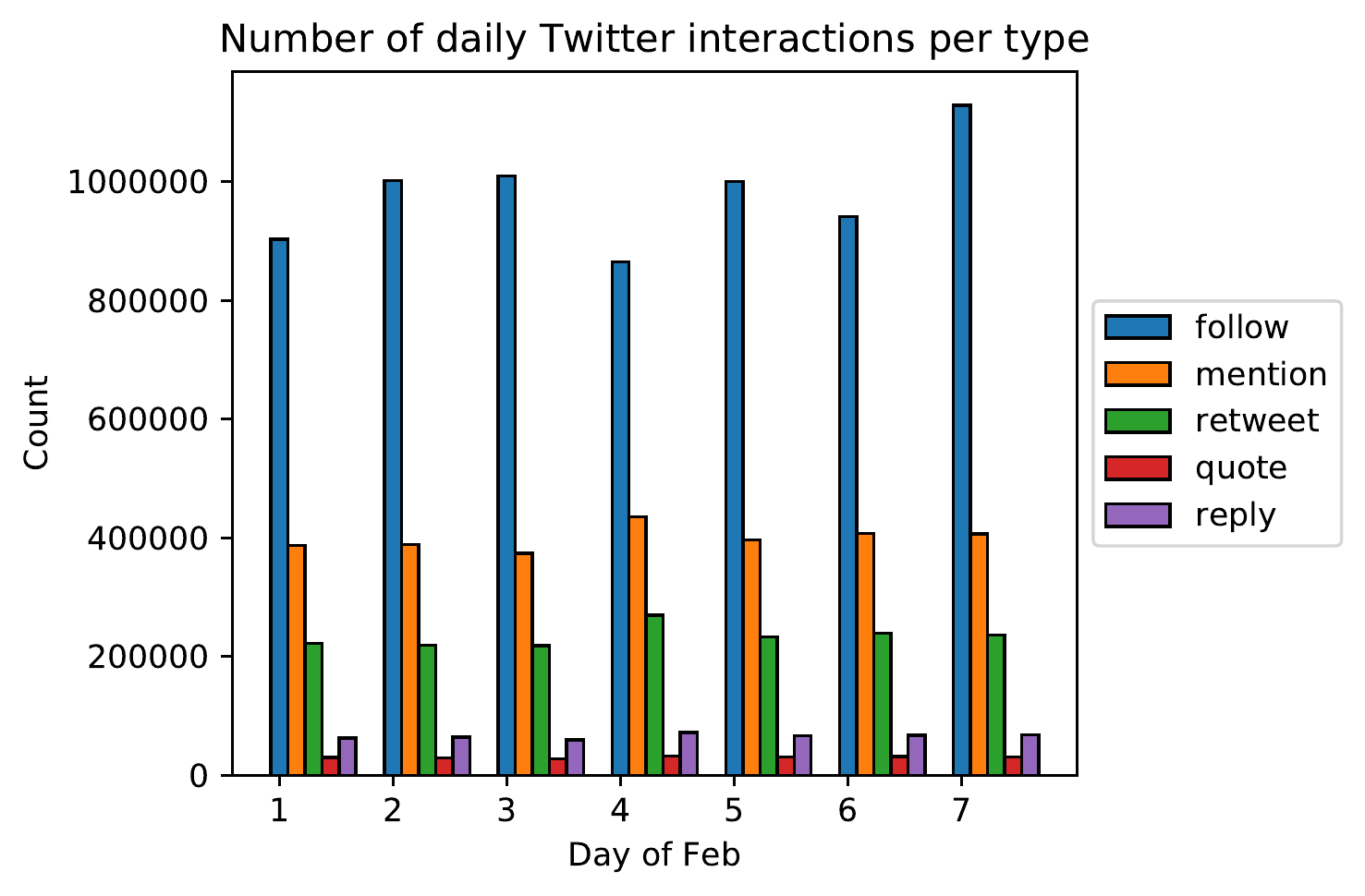} \\ 
($\alpha$) & ($\beta$)  \\ 
\end{tabular}
\caption{\label{fig:twitterstats} ($\alpha$) Count of  Twitter accounts per day, and ($\beta$) count of Twitter interactions for the first week of February 2018,  involved in five types of interactions.}
\end{figure*}

Figure~\ref{fig:uncertaingraphs} provides a detailed view of basic properties of two of the uncertain graphs we use in our experiments. The first and second row correspond to the {\em Collins} and {\em TMDB} datasets respectively. The first and second columns show the histograms of the weights and edge probabilities in log-scale, and the third column provides a scatter-plot of edge weights versus edge probabilities. The same results for the rest of the uncertain graphs appears in Figure~\ref{fig:uncertaingraphs_appendix} in the Appendix~\ref{sec:appendix}. Finally, we used an open-source twitter API crawler to monitor twitter traffic
between February 1st and February 14th, 2018~\cite{pratikakis2018twawler}.  We provide detailed information about each daily graph. Here, the number of edges is a five dimensional vector, whose coordinates correspond to the number of follows, mentions, retweets, quotes, and replies. Figure~\ref{fig:twitterstats}  shows these counts. Specifically, Figure~\ref{fig:twitterstats}($\alpha$) shows the number of Twitter accounts (nodes) involved in five types of Twitter interactions, {\em follow, retweet, mention, quote, and reply} for the first seven days of February 2018. The total number of nodes involved in all interactions is shown in Table~\ref{tab:datasets}. Similarly, Figure~\ref{fig:twitterstats}($\beta$) shows the number of Twitter interactions per type. The {\em follow} interactions are the majority for each day, and the {\em mention} interaction comes second for each day too. The datasets we use are overall small, and medium sized, therefore our proposed algorithm for a fixed $C$ value, requires few seconds or few minutes for the largest graphs. 

\spara{Machine specs and code.} The experiments were performed on a single machine, with an Intel Xeon CPU
at 2.83 GHz, 6144KB cache size, and 50GB of main memory.  The code is written in Python, and is available at \url{https://github.com/negativedsd}.

\subsection{Risk-averse DSD}  
\label{subsec:riskaverse}

\begin{table*}
\centering
  \begin{tabular}{|c|c|c|c|}
    \hline
    $B$ &   Average exp. reward & average risk  & $|S^*|$   \\  \hline
  0.25	& 0.18 	& 0.09 & 	6 \\ 
  1        &	0.17 & 	0.08 &	10 \\ 
   2	 &    0.13 &	0.06 & 31 \\  \hline
  \end{tabular}
\caption{\label{tab:gavin} Exploring the effect of risk tolerance parameter $B$ on the {\em gavin} dataset. For details, see Section~\ref{subsec:riskaverse}. }
\end{table*}

We perform two risk averse DSD experiments. First, for various fixed pairs of ($\lambda_1, \lambda_2)$ values, we range the parameter $B$ (reminder: $B$ is the multiplicative factor of $w^{-}(S)$, see \underline{Controlling the risk in practice}, Section~\ref{sec:motivation})  to control the trade-off between expected average reward and average risk.  A typical   outcome of our algorithm on the set of uncertain graphs we have tested it on for $\lambda_1=\lambda_2=1$, and $C=1$ is summarized in Table~\ref{tab:gavin}. As $B$ increases, we tolerate less risk, and the  average expected reward drops. This shows the trade-off between expected reward and risk.  

\begin{figure*}[htp]
\centering
\begin{tabular}{@{}c@{}@{\ }c@{}@{\ }c@{}}
\includegraphics[width=0.3\textwidth]{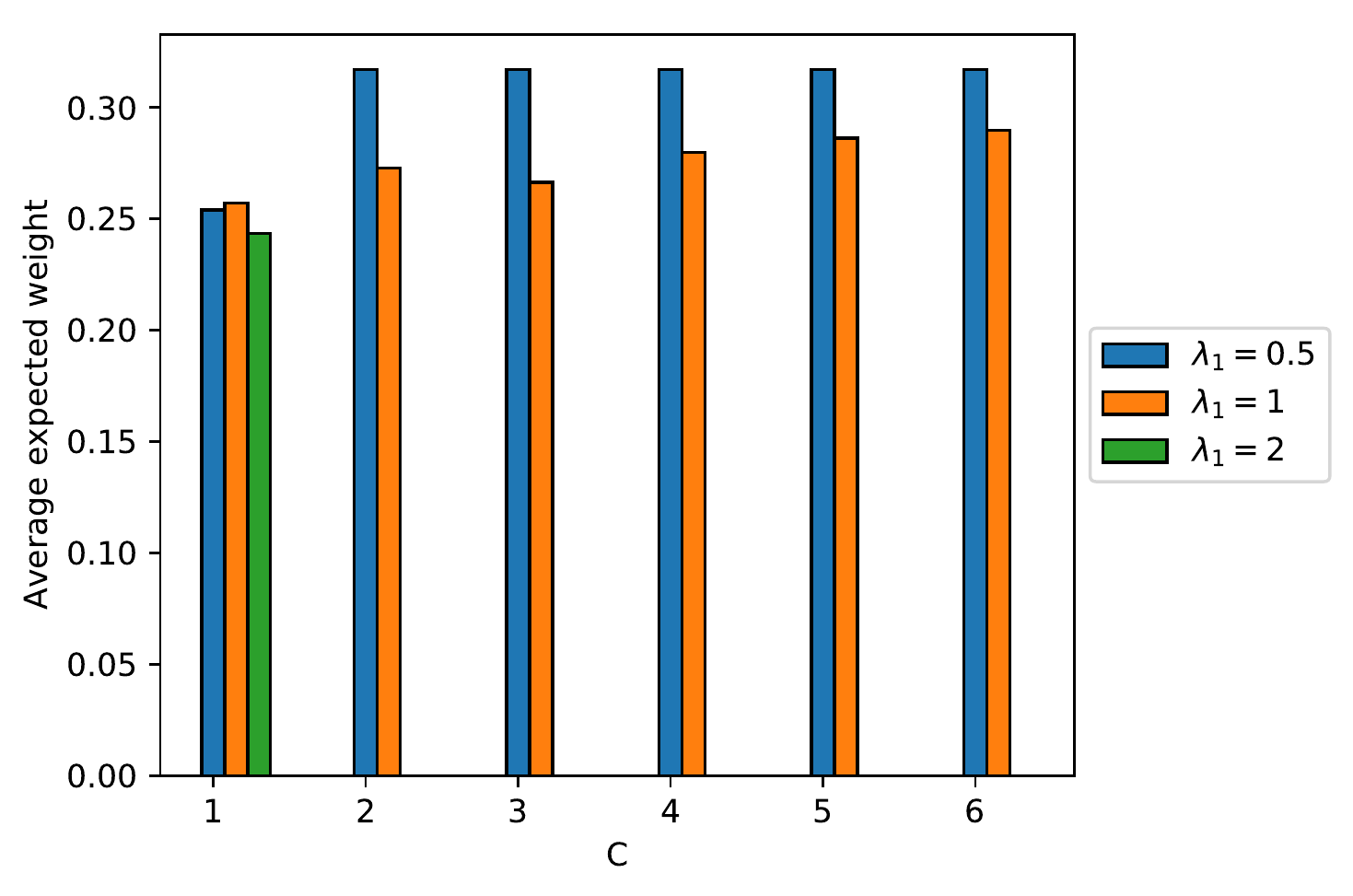} & \includegraphics[width=0.3\textwidth]{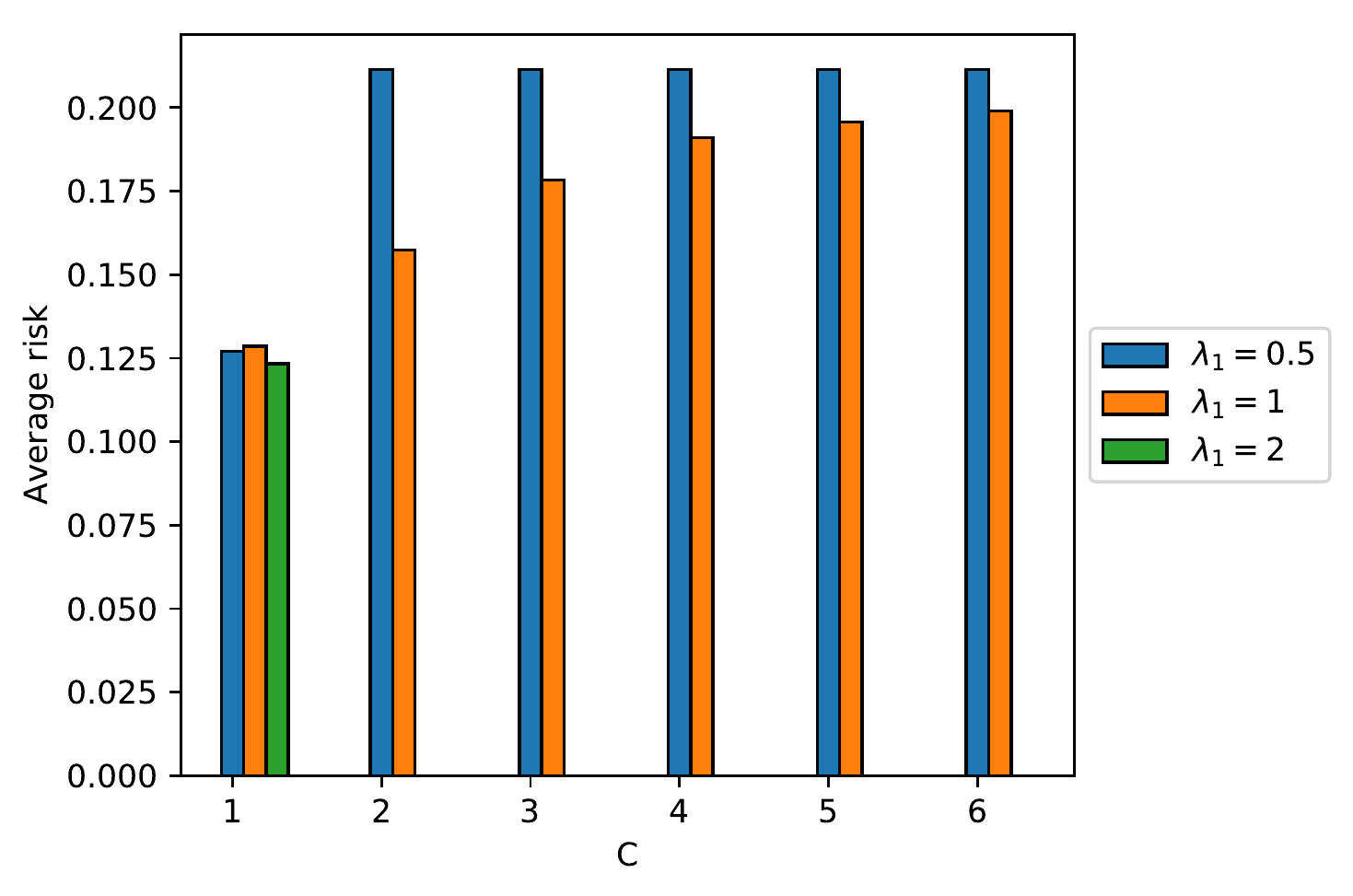} & \includegraphics[width=0.3\textwidth]{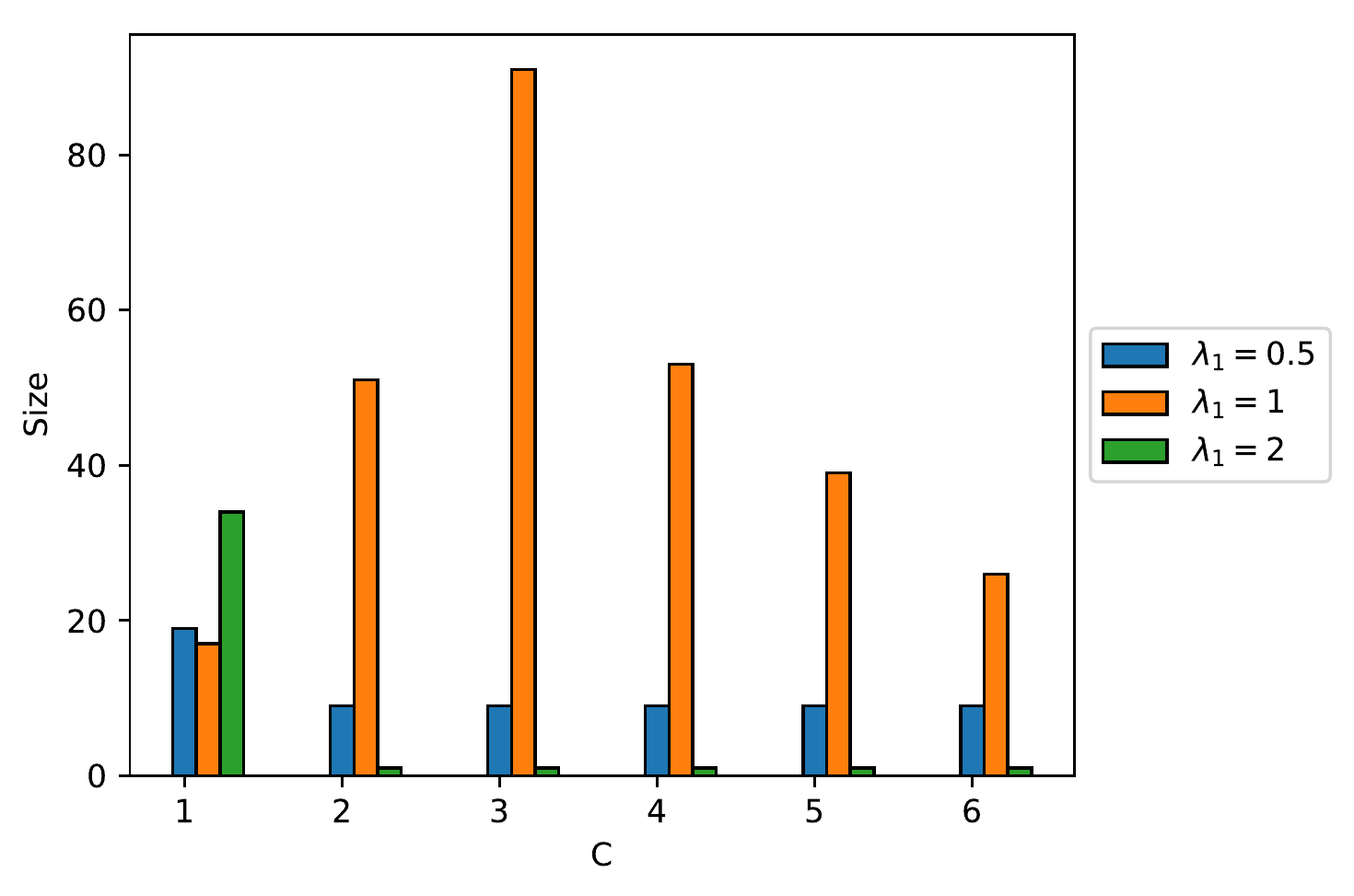}  \\
($\alpha$) & ($\beta$) & ($\gamma$) \\
\includegraphics[width=0.33\textwidth]{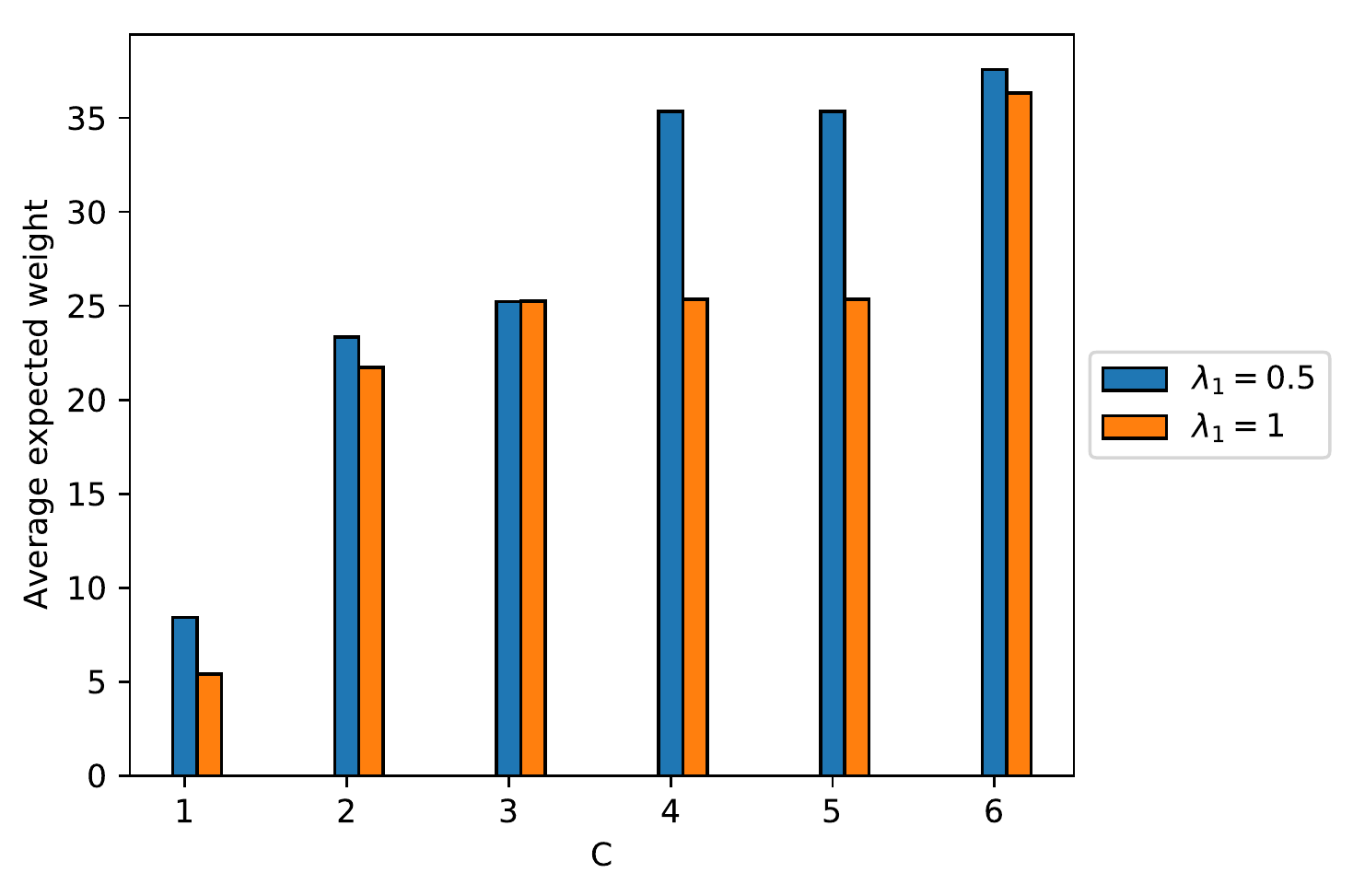} & \includegraphics[width=0.33\textwidth]{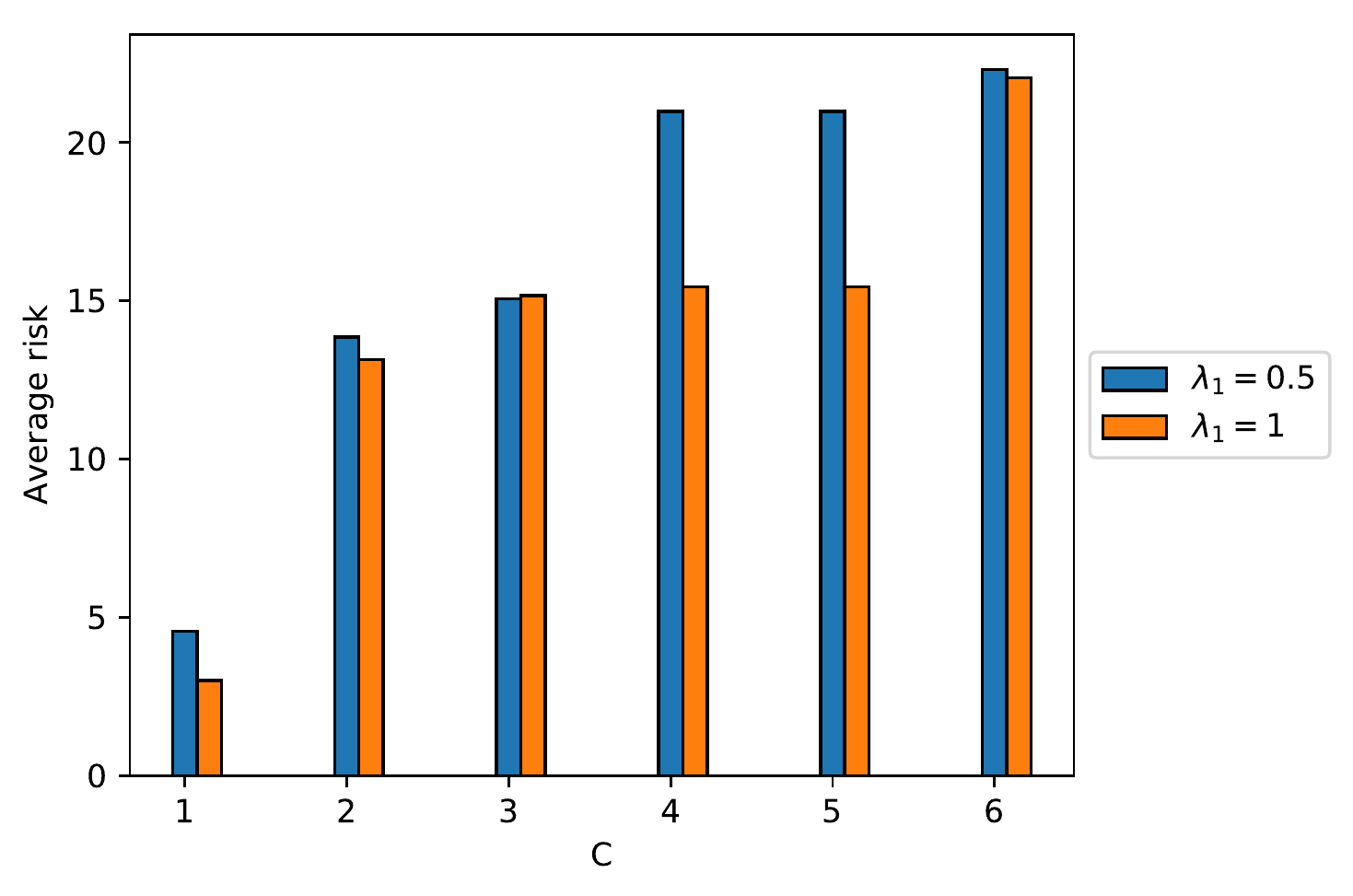}    & \includegraphics[width=0.33\textwidth]{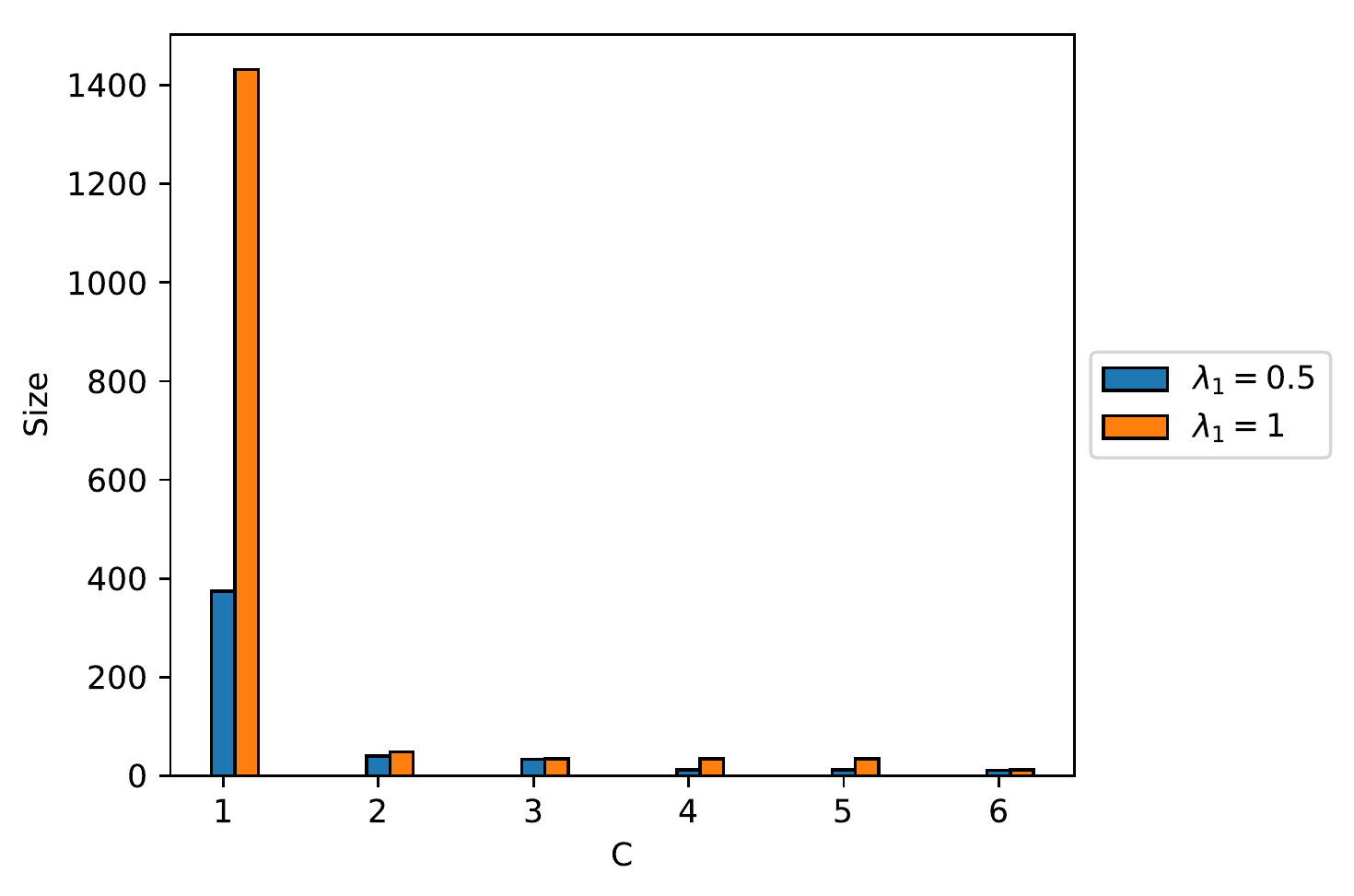}       \\
($\delta$) & ($\epsilon$) & ($\sigma\tau$) \\ 
\end{tabular}
\caption{\label{fig:riskaverse}  Risk averse DSD results for {\sc Collins} ($\alpha$)  average expected weight, ($\beta$) average risk, ($\gamma$) output size, and for {\sc TMDB}  ($\delta$)  average expected weight, ($\epsilon$) average risk, ($\sigma\tau$) output size. For details, see Section~\ref{subsec:riskaverse}. }
\end{figure*}

In our second experiment we test the effect of rest of the  parameters. We fix $B=1$, and then we perform the following procedure. For each dataset,  we fix a pair of  $(\lambda_1,\lambda_2)$ values and run our proposed algorithm using 7 values of $C$. The $C$ value $0.5$ always resulted in trivial results that would skew a lot the plots so it is omitted. Specifically, for $C=0.5$  for all three pairs of $\lambda$ values we use, we obtain (almost) the whole graph as output of the peeling process.   The three pairs of $\lambda$ values we use are  $(\lambda_1,\lambda_2) \in \{(0.5,1),(1,1),(2,1) \}$.  Our results are shown in Figure~\ref{fig:riskaverse} for the {\sc Collins} and the {\sc TMDB} graphs respectively. We remark that for the TMDB graph, the last pair of $\lambda$ values $(2,1)$ results in obtaining the whole graph as the optimal solution, so we omit it from the plots in Figures~\ref{fig:riskaverse}($\delta$),($\epsilon$), and ($\sigma\tau$), see also Figure~\ref{fig:riskaversefull} ($\iota\gamma$), ($\iota\delta$), and ($\iota\epsilon$) for the complete results. Changing $C$ value in principle does not affect risk aversion (e.g., Figure~\ref{fig:riskaverse}($\beta$)), but in some cases due to the different peeling orderings that different $C$ values  yield the output may be associated with different risks (e.g., Figure~\ref{fig:riskaverse})($\delta$)). We also observe that as we increase $\lambda_1$ the size of the output increases. This agrees with the insights we provide in Section~\ref{sec:proposed}; namely, we reward larger sets of nodes. The results for the rest of the datasets are included in the Appendix~\ref{sec:appendix}.

\begin{figure*}[htp]
\centering
\begin{tabular}{@{}c@{}@{\ }c@{}@{\ }c@{}}
\includegraphics[width=0.33\textwidth]{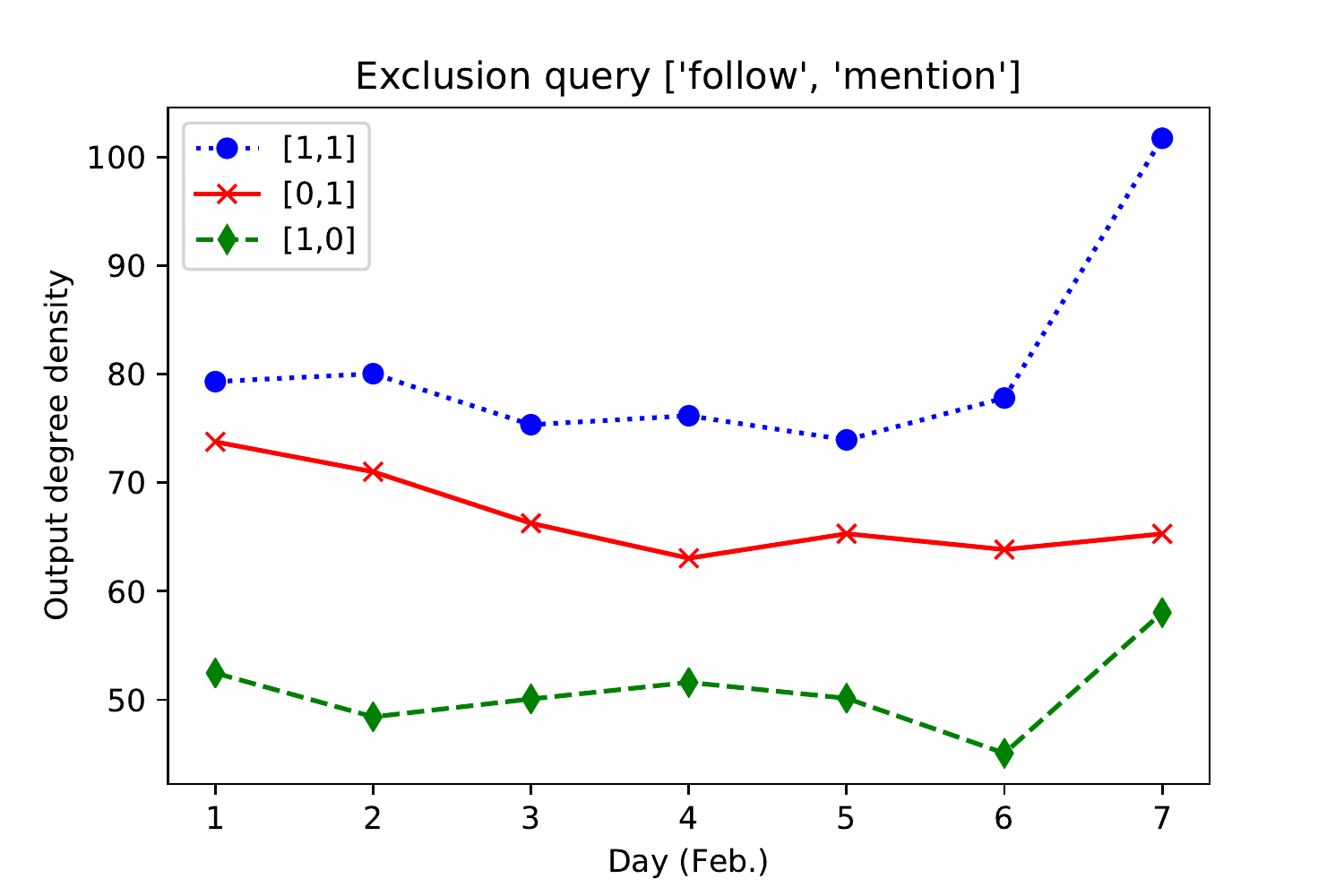} & \includegraphics[width=0.33\textwidth]{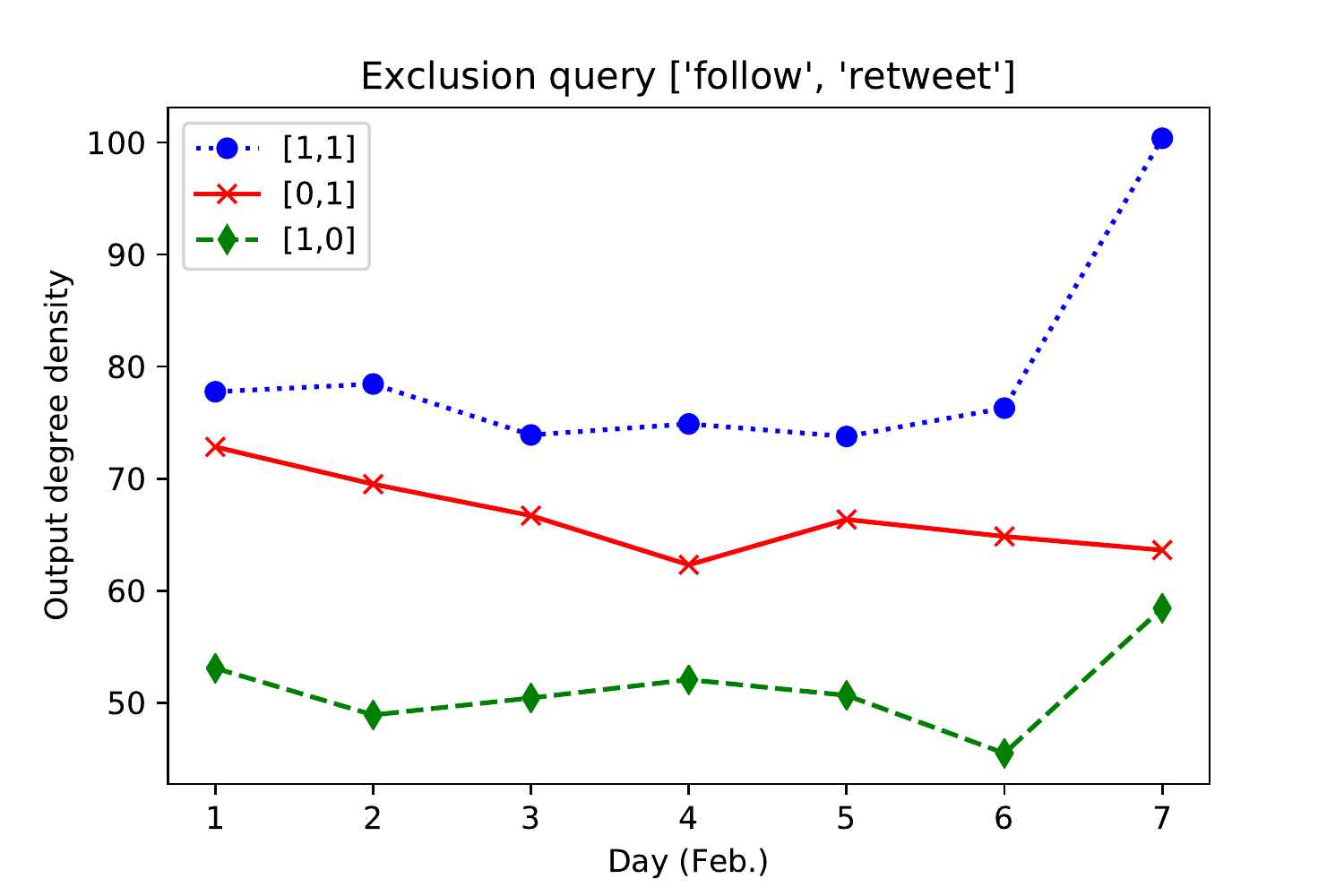}    & \includegraphics[width=0.33\textwidth]{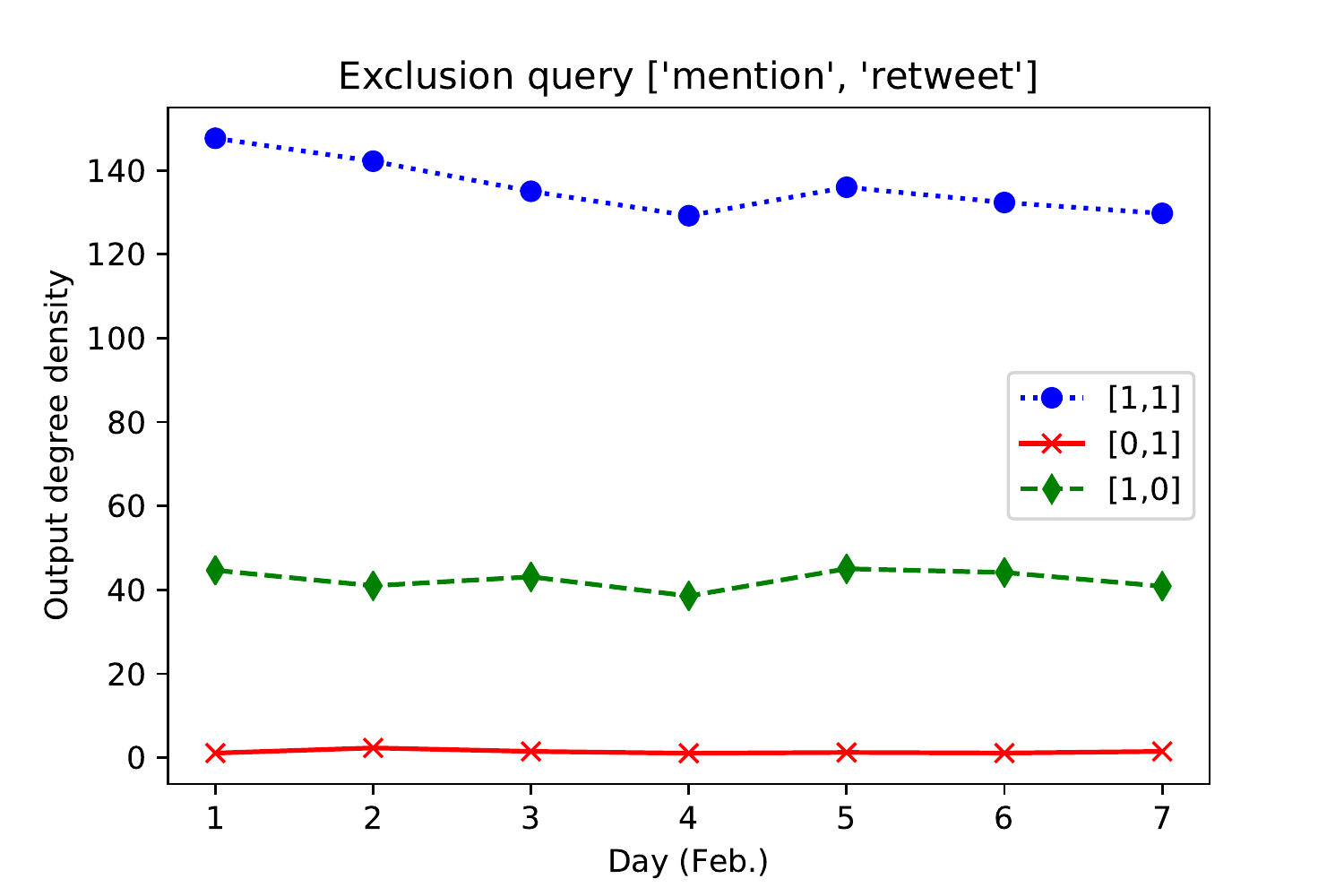}       \\
($\alpha$) & ($\beta$) & ($\gamma$) \\
\includegraphics[width=0.33\textwidth]{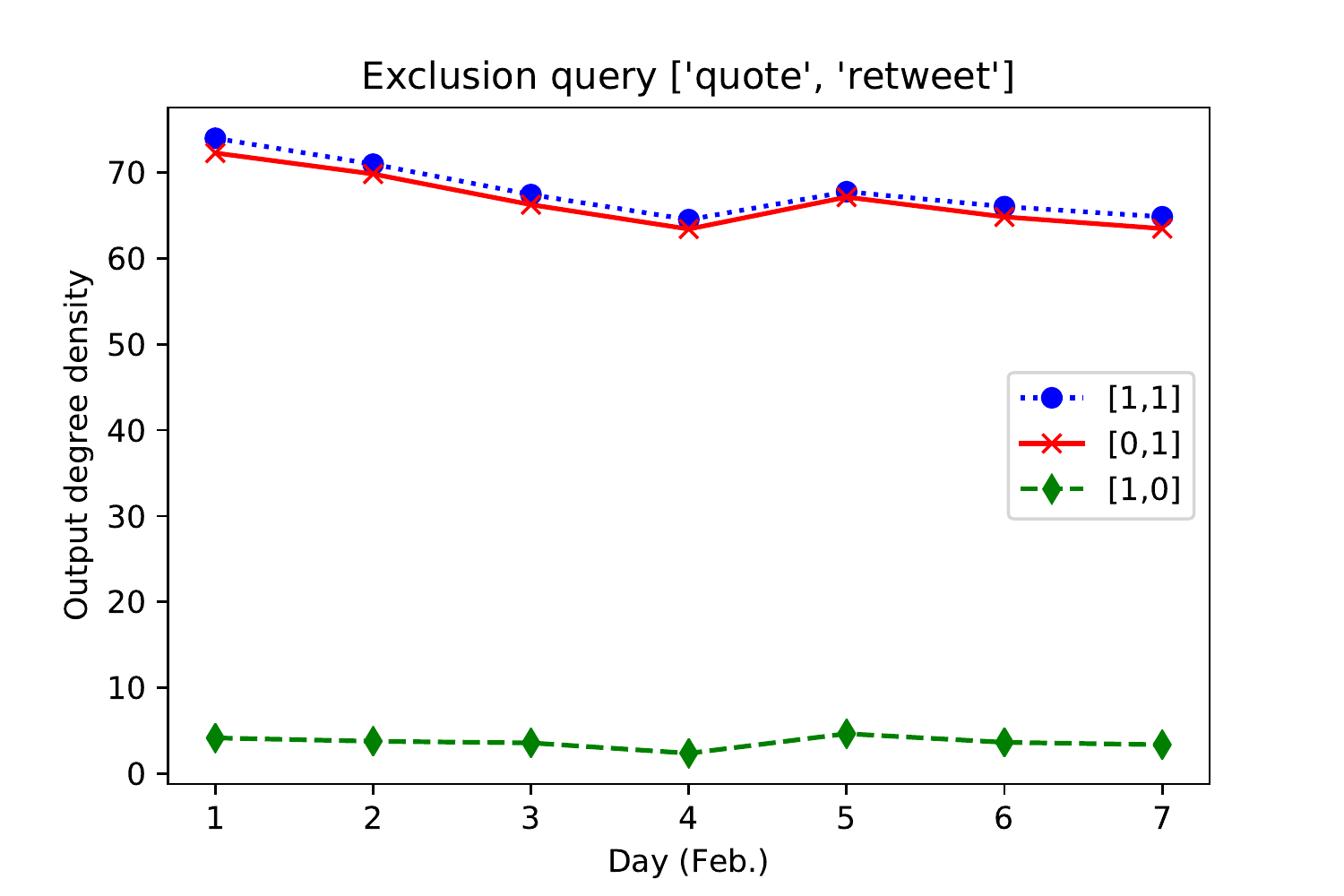} & \includegraphics[width=0.33\textwidth]{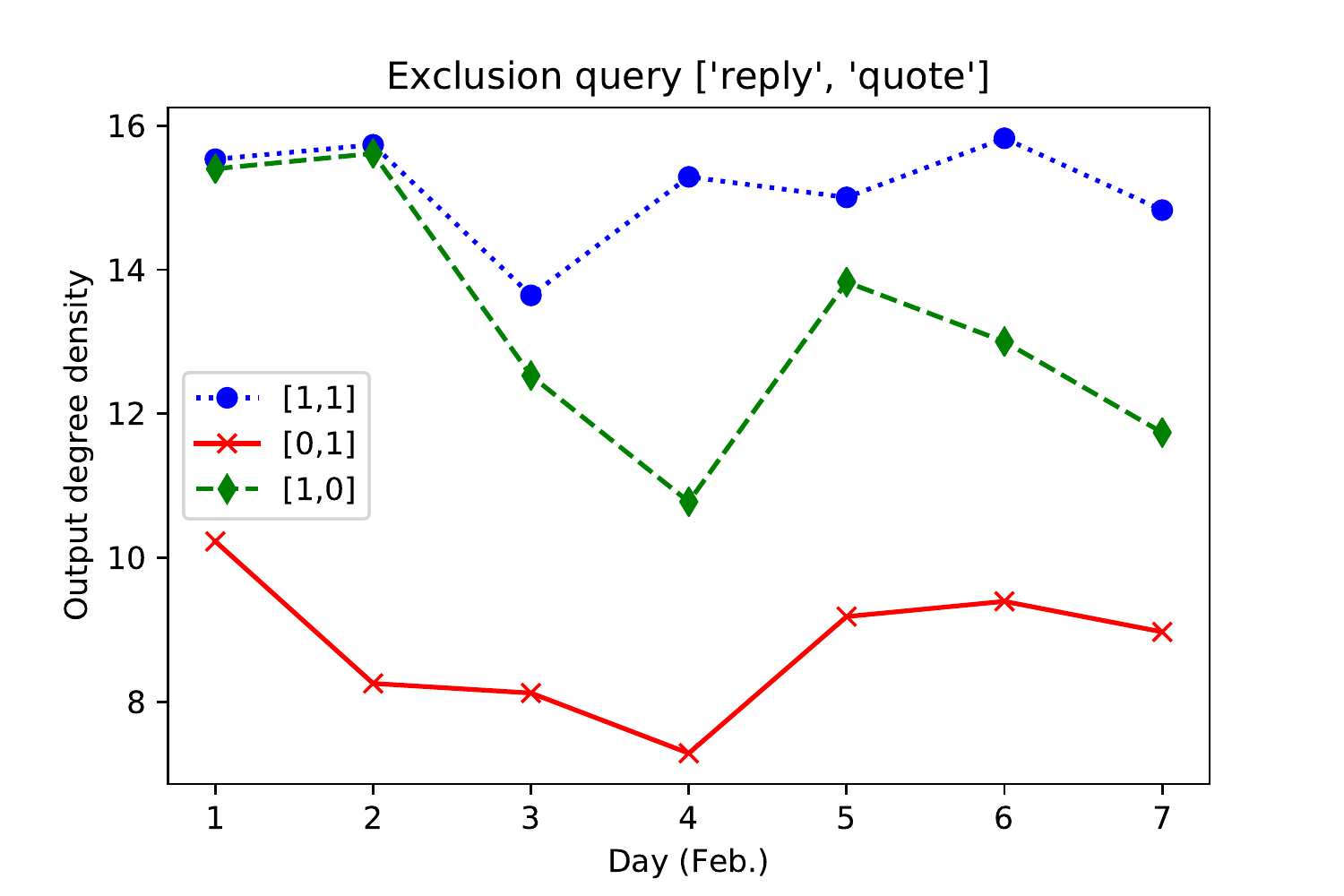}    & \includegraphics[width=0.33\textwidth]{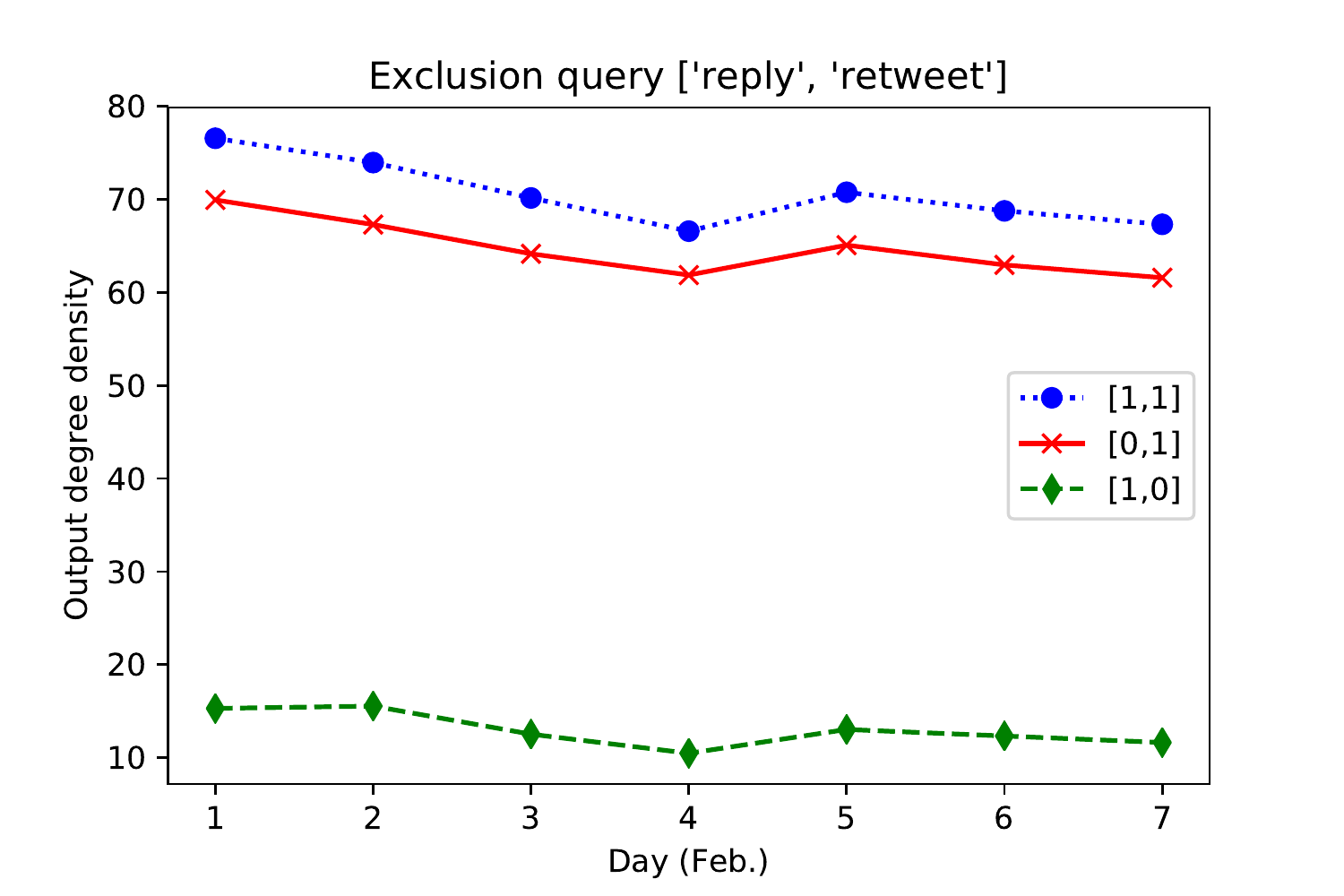}       \\
($\delta$) & ($\epsilon$) & ($\sigma\tau$) \\ 
\end{tabular}
\caption{\label{fig:peelingdegree} Degree density for three exclusion queries per each pair of interaction types  over the period of the first week of February 2018. ($\alpha$) Follow and mention. ($\beta$) Follow and retweet. ($\gamma$) Mention and retweet. ($\delta$) Quote and retweet. ($\epsilon$) Reply and quote. ($\sigma\tau$) Reply and retweet. }
\end{figure*}

\subsection{Mining Twitter using DSD-Exclusion queries} 
\label{subsec:twitterqueries} 

We test our DSD exclusion query primitive on the Twitter daily data. 
We present results that we obtain for different pairs of graphs induced by different types of interactions, for $C=1$. For each such pair,  we all possible non-trivial exclusion queries: 

\squishlist
\item Every type of interaction is allowed  (query denoted as $[1,1]$).
\item One of the two interaction types is excluded (queries denoted as $[1,0]$, and $[0,1]$).
\squishend

Figures~\ref{fig:peelingdegree},~\ref{fig:peelingsize} show for each pair of interactions the degree density, and the size (i.e., number of nodes) of the output. Interestingly, observe that in Figure~\ref{fig:peelingsize}($\gamma$) the exclusion query $[0,1]$ that excludes mentions and allows retweets results in density close to 0. This is because  the Twitter API considers every retweet as a mention. By excluding mentions, we exclude all retweets! The density is not zero, due to some small noise in the crawled mentions, i.e., there exist a few retweets that have not been included in the mentions.

We have performed more exclusion queries that involve more types of interactions. For instance, by looking into {\em reply, quote, retweet} interactions, we find the following results for two queries on February 1st, 2018. 

\squishlist 
\item When we allow all types  we find a subset of 351 nodes,  whose retweet density is 72.6, reply density 3.86, and quote density 1.08. We observe this difference since the retweet layer of interactions is much denser than the other two. 
\item When we exclude the retweets, but allow quotes and replies, we find a set of 30 nodes whose reply degree density is 15.46, and quote degree density 0.066.
\squishend

\spara{Effect of $C$, and $W$.} As we discussed earlier, ranging $W$, from small values to $+\infty$ quantifies how much we care  about excluding the undesired edge types.  Table~\ref{tab:onerun} shows what we observe typically on all experiments we have performed. Specifically, we perform an exclusion query $[1,0]$ on the {\em retweet, reply} interactions. We denote by $S^*$ the output of Algorithm~\ref{alg:heuristic}.  By inspecting the last column $\rho_{\text{reply}}(S^*)$ of the table, we observe  that even when we set the weight of each reply interaction to -1 (soft query), our algorithm outputs a set $S^*$ with very few replies, for all $C \in \{\frac{1}{10},1,10\}$ values we use.  When $W$ is set to the very large value $200\,000$ (hard query),  $\rho_{\text{reply}}(S^*)$ becomes 0 but we also observe a drop in the degree density of the retweets. For instance for $C=1$,  $\rho_{\text{retweet}}(S^*)$
drops from 72.70 to 30.38. 

\begin{table*}
\centering
  \begin{tabular}{|c|c|c|c|c|}
    \hline
    $C$ &  $W$  &  $|S^*|$ & $\rho_{\text{retweet}}(S^*)$ & $\rho_{\text{reply}}(S^*)$   \\  \hline
    \multirow{3}{*}{$0.1$} & 1 & 296 & 63.44 & -0.75 \\
        & 5 &  99 & 45.67 & -0.01 \\
        & 200\,000 & 200 & 30.37 & 0 \\
     \hline
    \multirow{3}{*}{$1$} & 1 & 346  & 72.70 & -2.75 \\
        & 5 &  319 & 68.70 & -1.29 \\
        & 200\,000 & 200 & 30.38 & 0 \\
     \hline
  \multirow{3}{*}{$10$} & 1 & 351  & 73.10 & -3.31 \\
        & 5 &  351 & 73.10 & -3.31 \\
        & 200\,000 & 200  & 30.37 & 0 \\
     \hline
  \end{tabular}
\caption{\label{tab:onerun} Exploring the effect of the negative weight $-W$ on the excluded edge types for various $C$ values. For details, see Section~\ref{subsec:twitterqueries}. }
\end{table*}

\begin{figure*}[htp]
\centering
\begin{tabular}{@{}c@{}@{\ }c@{}@{\ }c@{}}
\includegraphics[width=0.33\textwidth]{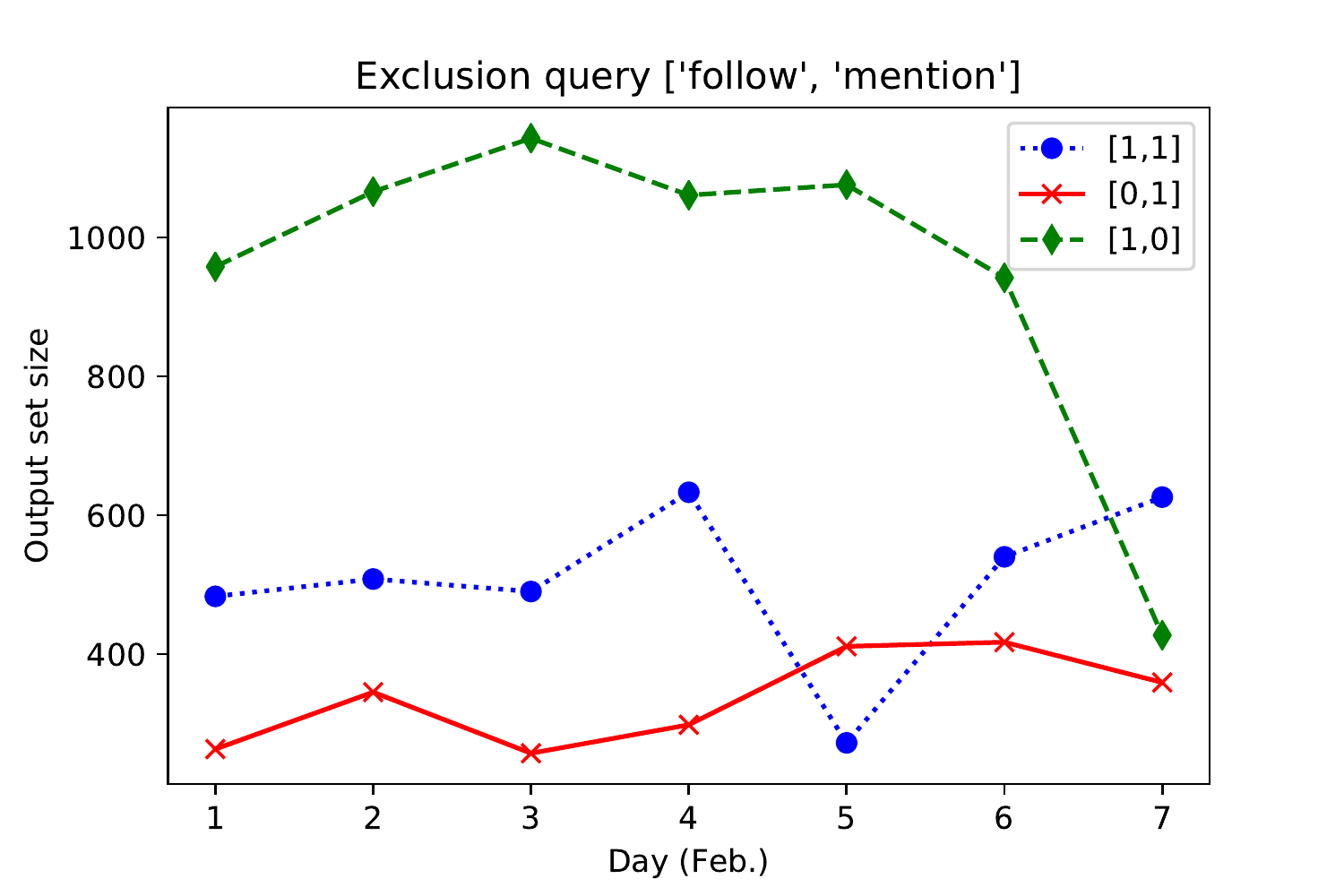} & \includegraphics[width=0.33\textwidth]{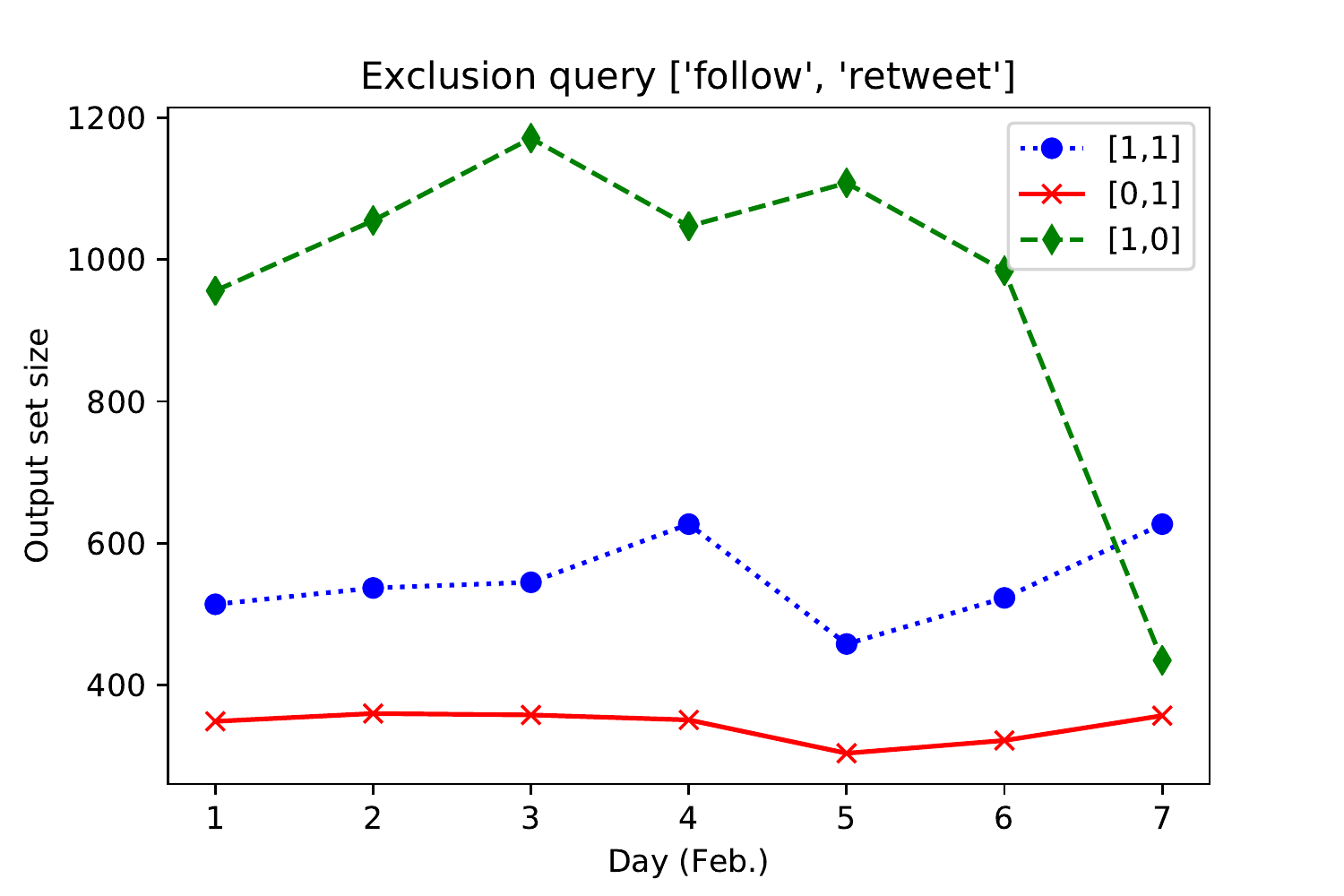}    & \includegraphics[width=0.33\textwidth]{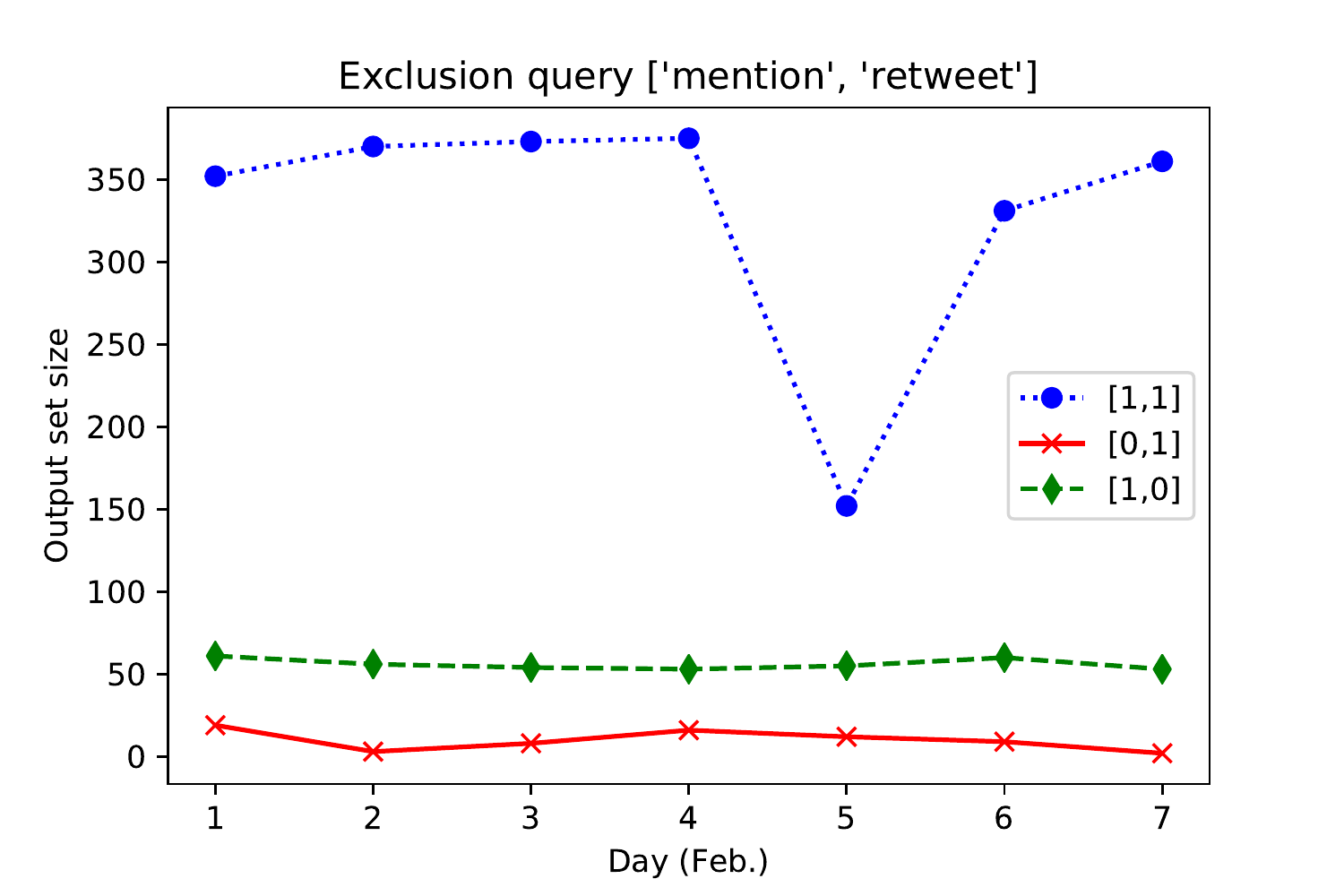}       \\
($\alpha$) & ($\beta$) & ($\gamma$) \\
\includegraphics[width=0.33\textwidth]{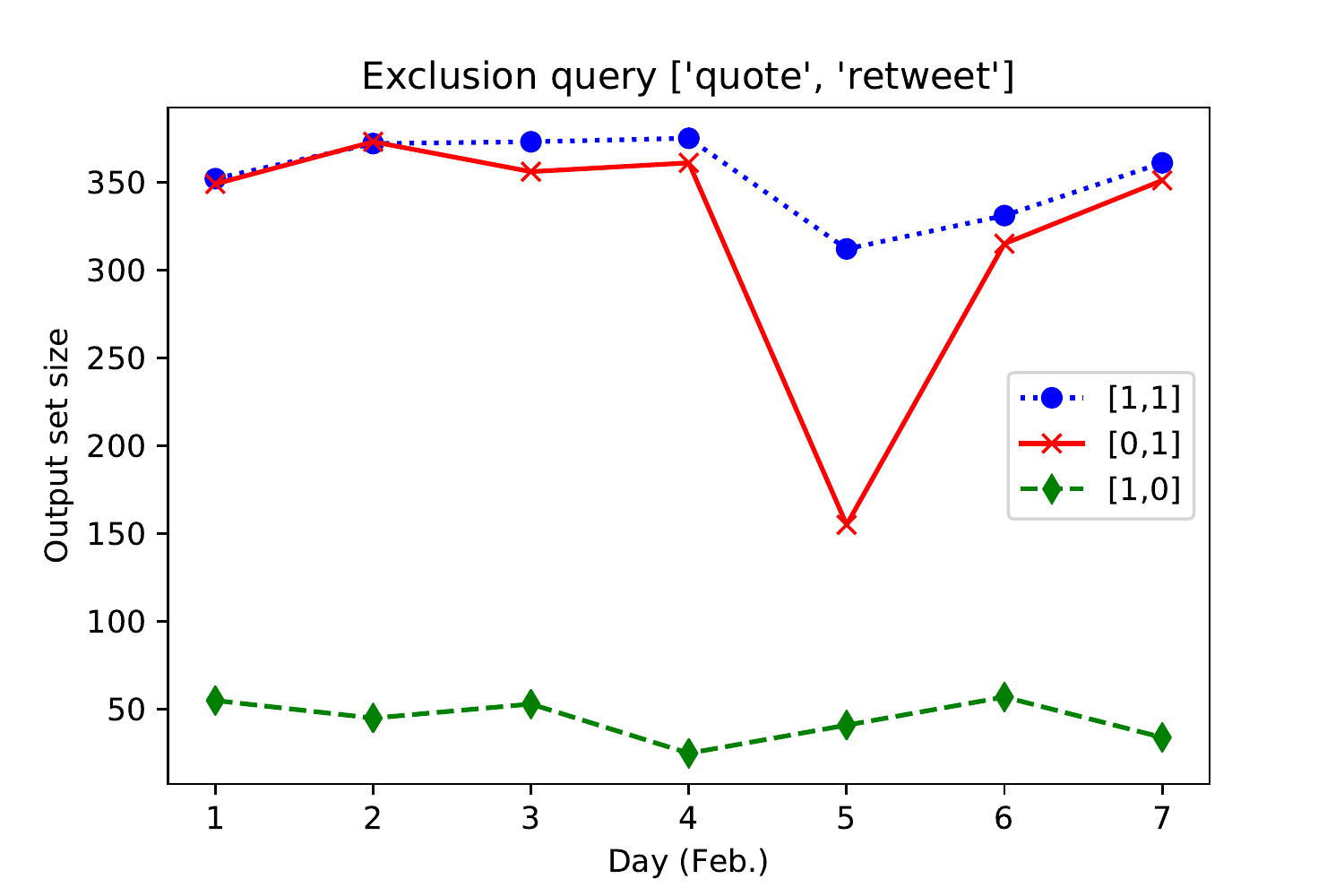} & \includegraphics[width=0.33\textwidth]{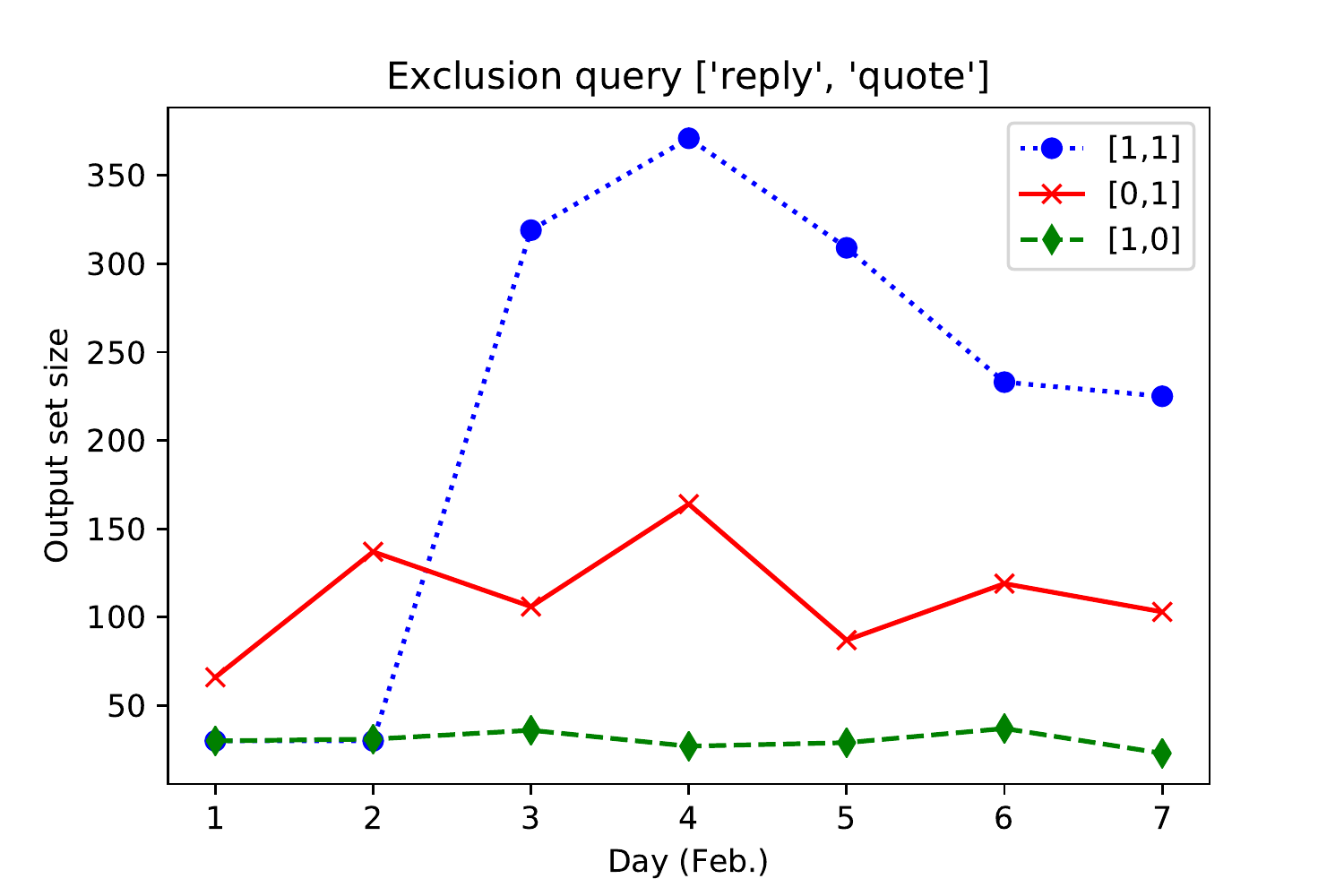}    & \includegraphics[width=0.33\textwidth]{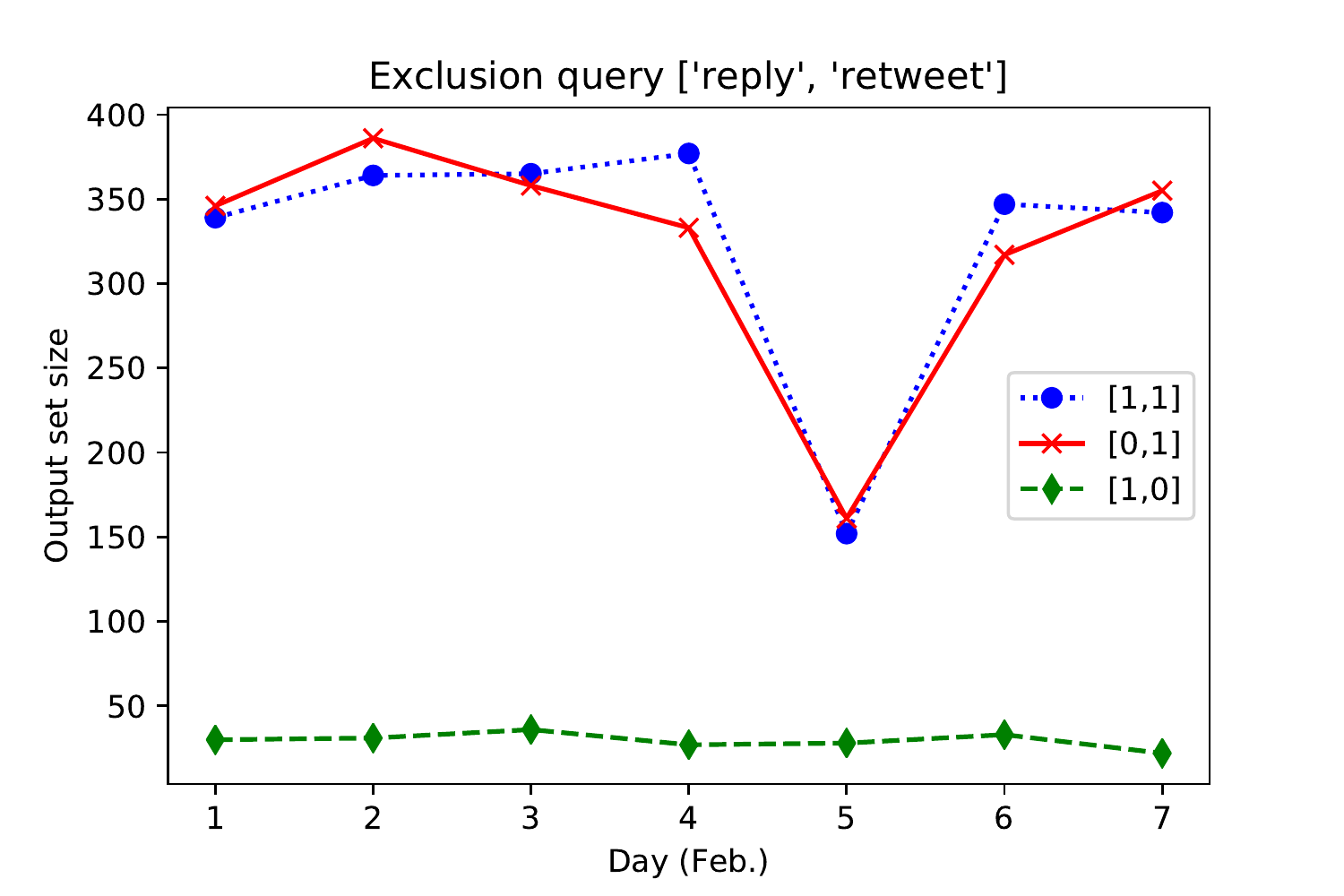}       \\
($\delta$) & ($\epsilon$) & ($\sigma\tau$) \\ 
\end{tabular}
\caption{\label{fig:peelingsize} Output sizes for three exclusion queries per each pair of interaction types over the period of the first week of February 2018. ($\alpha$) Follow and mention. ($\beta$) Follow and retweet. ($\gamma$) Mention and retweet. ($\delta$) Quote and retweet. ($\epsilon$) Reply and quote. ($\sigma\tau$) Reply and retweet.  }
\end{figure*}

\section{Conclusion}
\label{sec:concl}
\spara{Summary.} In this paper we study  dense subgraph discovery problem on graphs with negative weights in greater depth than prior work \cite{cadena2016dense}. We show that the problem in \NPhard, and then we propose algorithms that are based on peeling, and are both space-, and time-efficient. Furthermore, we provide two important graph mining primitives, that are both formalized as \NegDSD problems. The first primitive is applicable to uncertain graphs, and  extracts subgraphs that in expectation induce large weight, and are risk-averse.  The second primitive enables for efficient mining of multilayer graphs; specifically, it extracts dense subgraphs that exclude certain types of undesired edges, that are passed as input to the algorithm.  Given the ubiquitousness of uncertain and multilayer graphs, and the importance of  DSD \cite{gionis2015dense}, we believe that our primitives will be applied on various applications. Finally, we test our proposed methods on various real-world datasets, and verify experimentally their usefulness and efficiency.

\spara{Open Problems.}  While we have performed some work on understanding the computational complexity of the problem, understanding at a greater depth the complexity (especially under reasonable assumptions on the negative weights) of the problem remains largely open. For example, we provided sufficient conditions under which the problem is poly-time solvable. What are necessary and sufficient conditions for poly-time solvability of the DSP on graphs with negative edge weights? Furthermore, using our primitives on more applications is an interesting direction (e.g., mining time-series; for example, can we find  clusters of time-series that are correlated under one similarity measure but not correlated under similarity a second similarity measure?). Also, developing an approximation or bi-criteria approximation algorithms for risk averse DSD that aims to maximize the expected reward subject to bounds on the risk is an interesting open problem.  Finally, designing efficient risk-averse  graph mining algorithms is a broad  interesting direction.



\section{Appendix}
\label{sec:appendix} 
\spara{Uncertain graphs.} Figure~\ref{fig:uncertaingraphs} shows basic statistics for the datasets we use in our experiments. Each row corresponds to a dataset, the first and second columns correspond to the log-histograms of weights and edge probabilities respectively. The last column shows the scatter plot of weights and edge probabilities.

\spara{Risk-averse DSD.} The results of our proposed algorithm on risk averse DSD  for the experiment described in Section~\ref{sec:exp} are shown in Figure~\ref{fig:riskaversefull}. In this experiment $B=1$, and we range $C$ for three different $(\lambda_1,\lambda_2)$ pairs, i.e., $(0.5,1),(1,1),(2,1)$  Each row corresponds to a dataset, the first, second, and third columns to the average expected weight, average risk, and size of the output. The rows correspond to Collins, Gavin, Krogan core, Krogan extended, and TMDB respectively.  

\begin{figure*}[htp]
\centering
\begin{tabular}{@{}c@{}@{\ }c@{}@{\ }c@{}}
\includegraphics[width=0.33\textwidth]{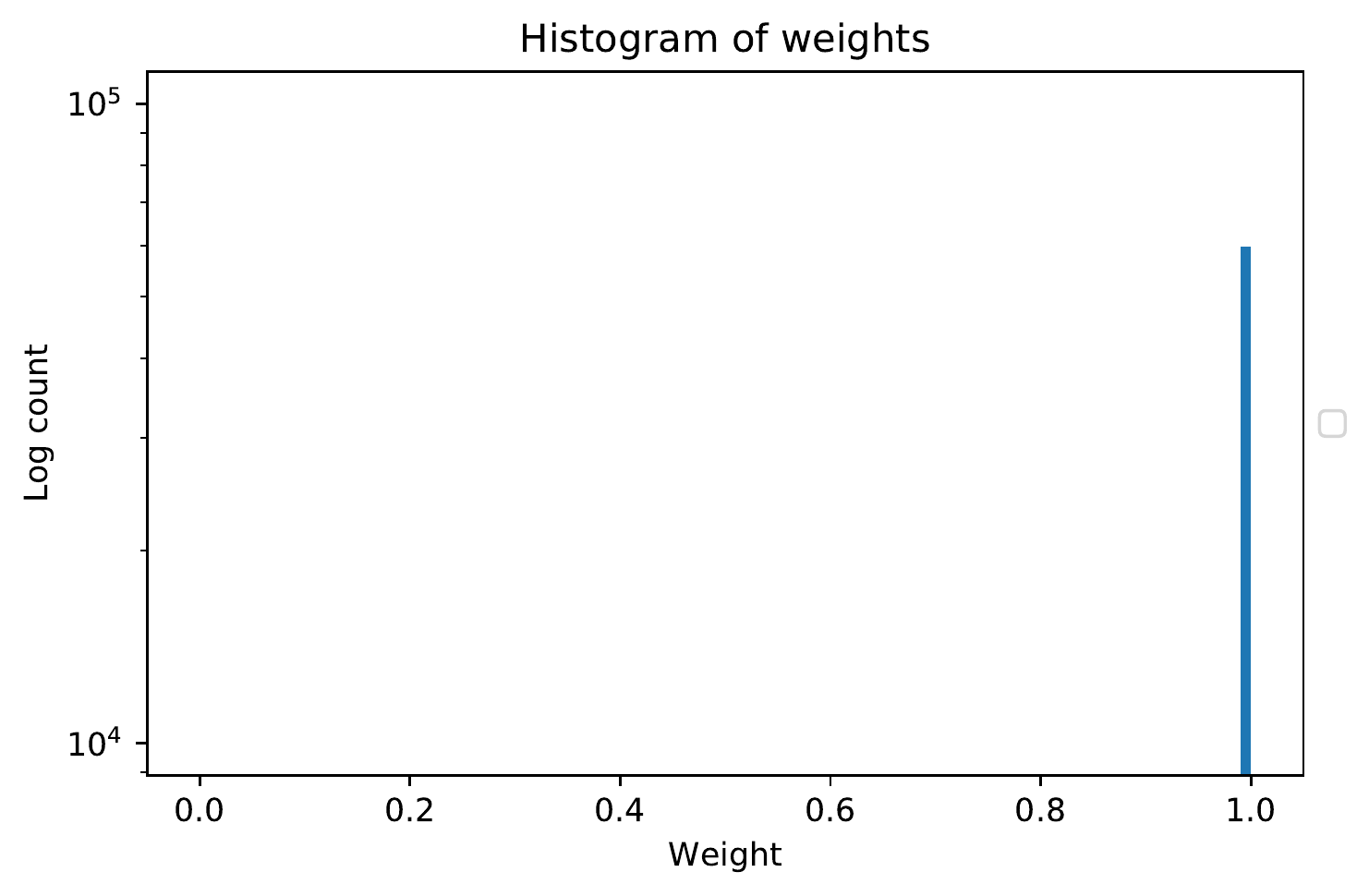} & \includegraphics[width=0.33\textwidth]{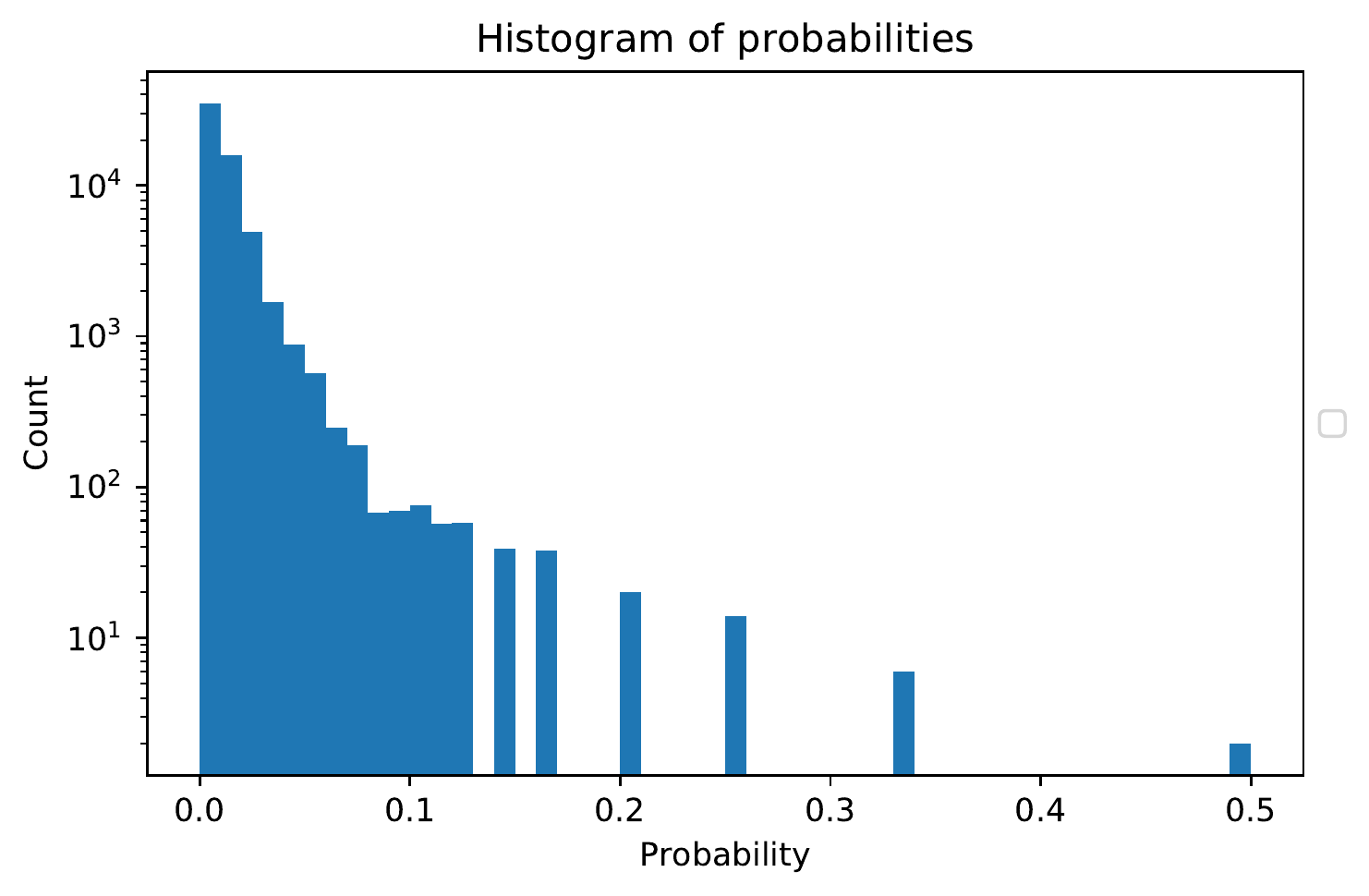}    & \includegraphics[width=0.33\textwidth]{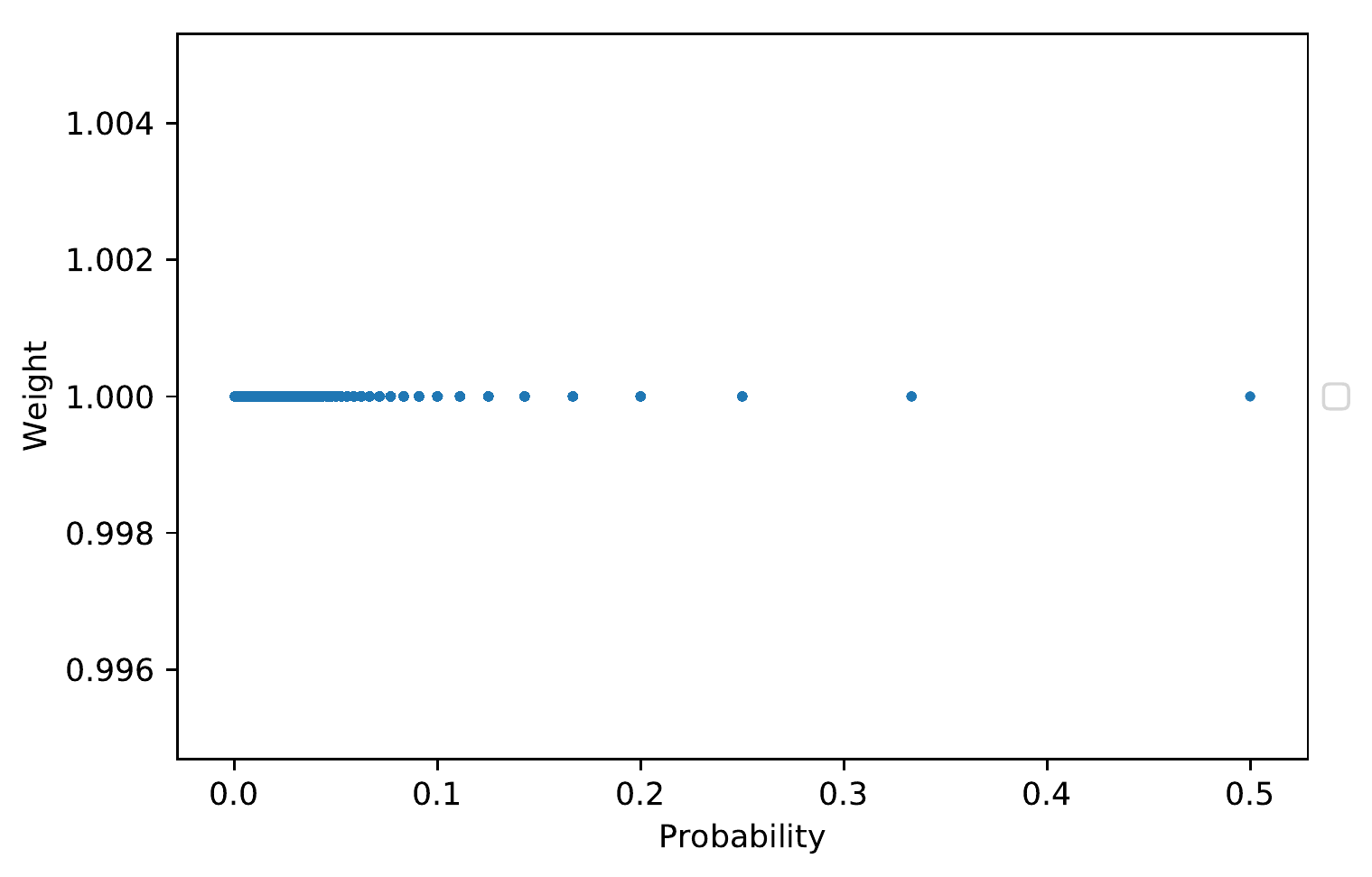}       \\
($\alpha$) & ($\beta$) & ($\gamma$) \\
\includegraphics[width=0.33\textwidth]{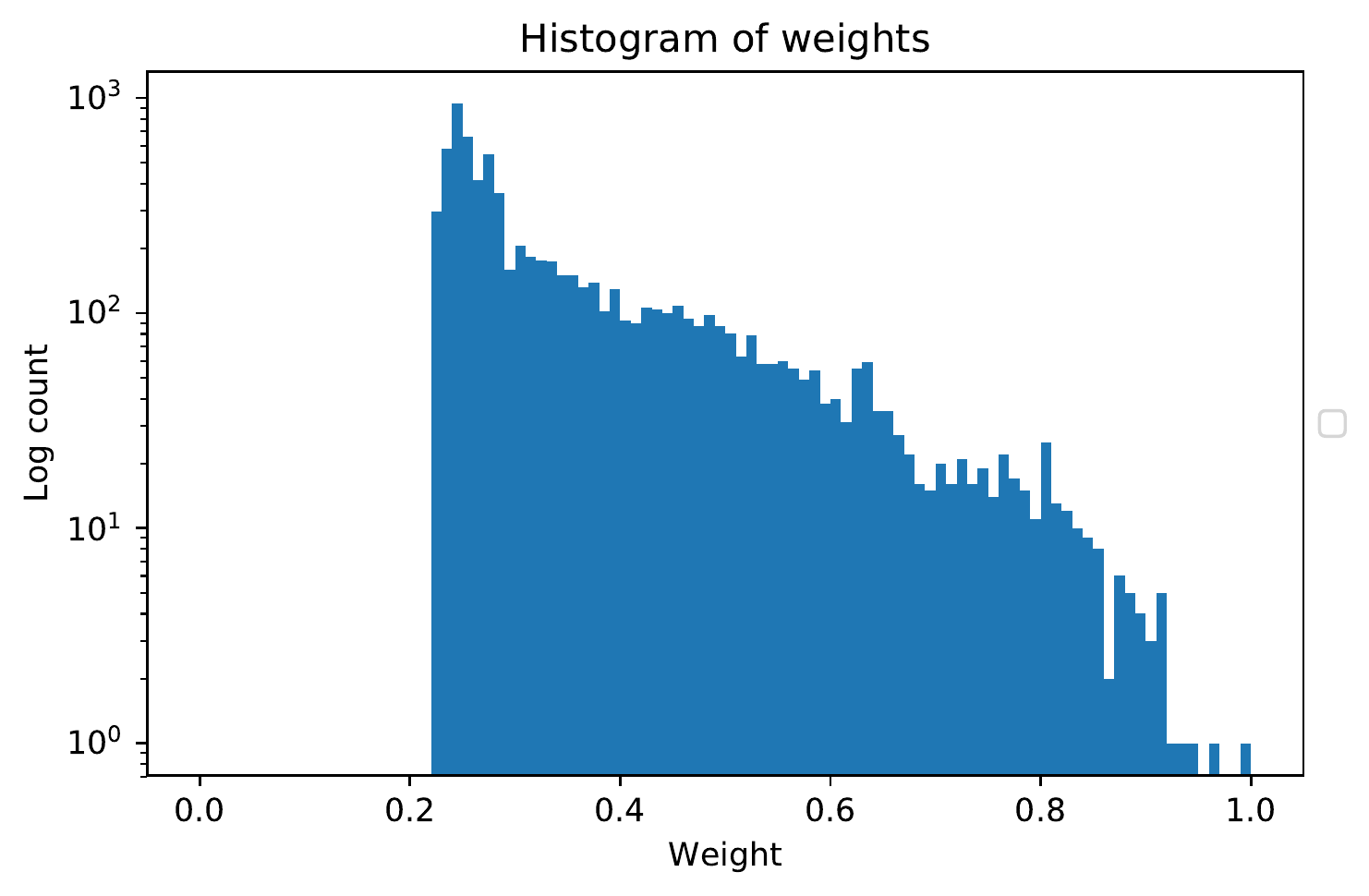} & \includegraphics[width=0.33\textwidth]{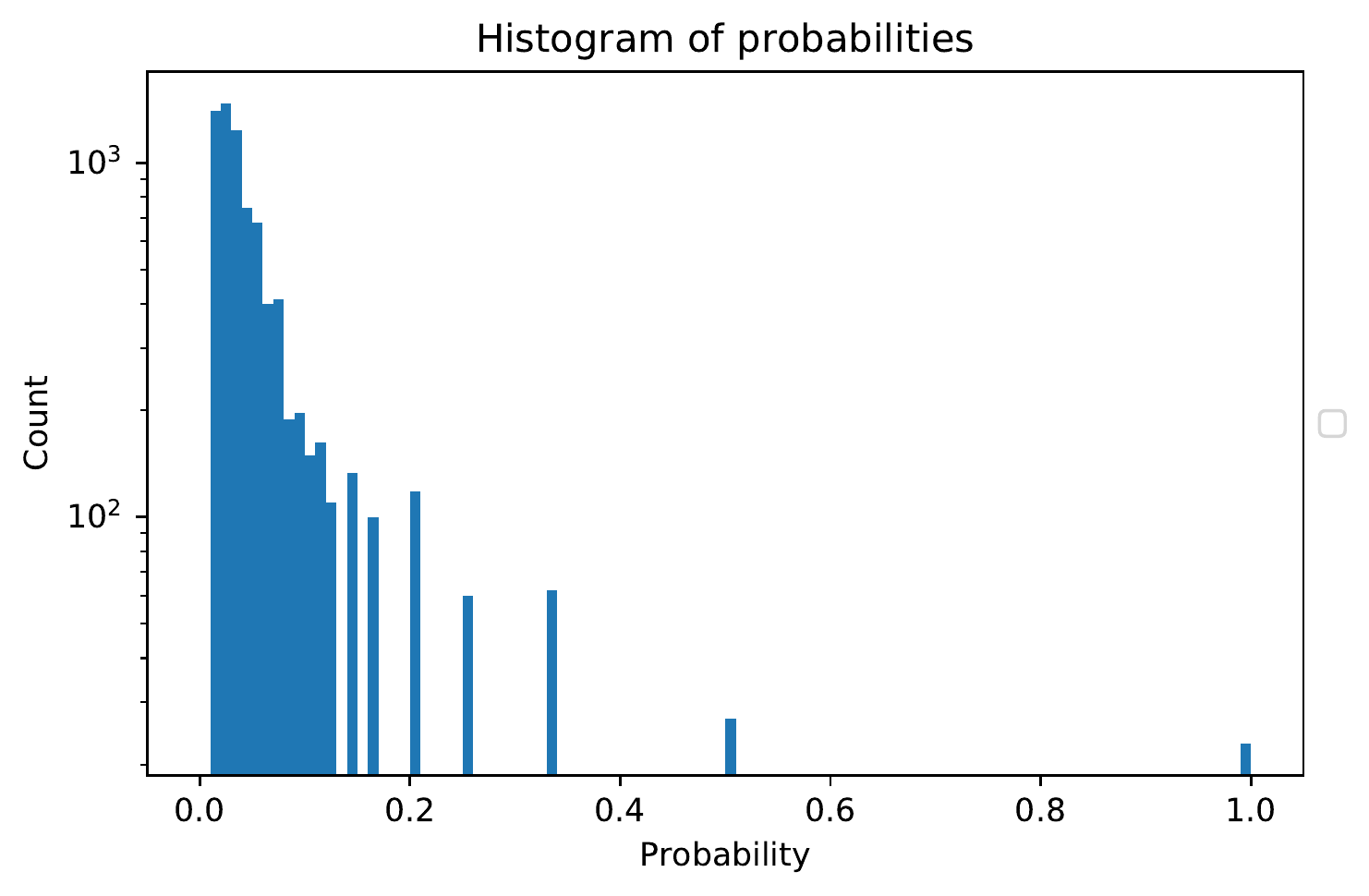}    & \includegraphics[width=0.33\textwidth]{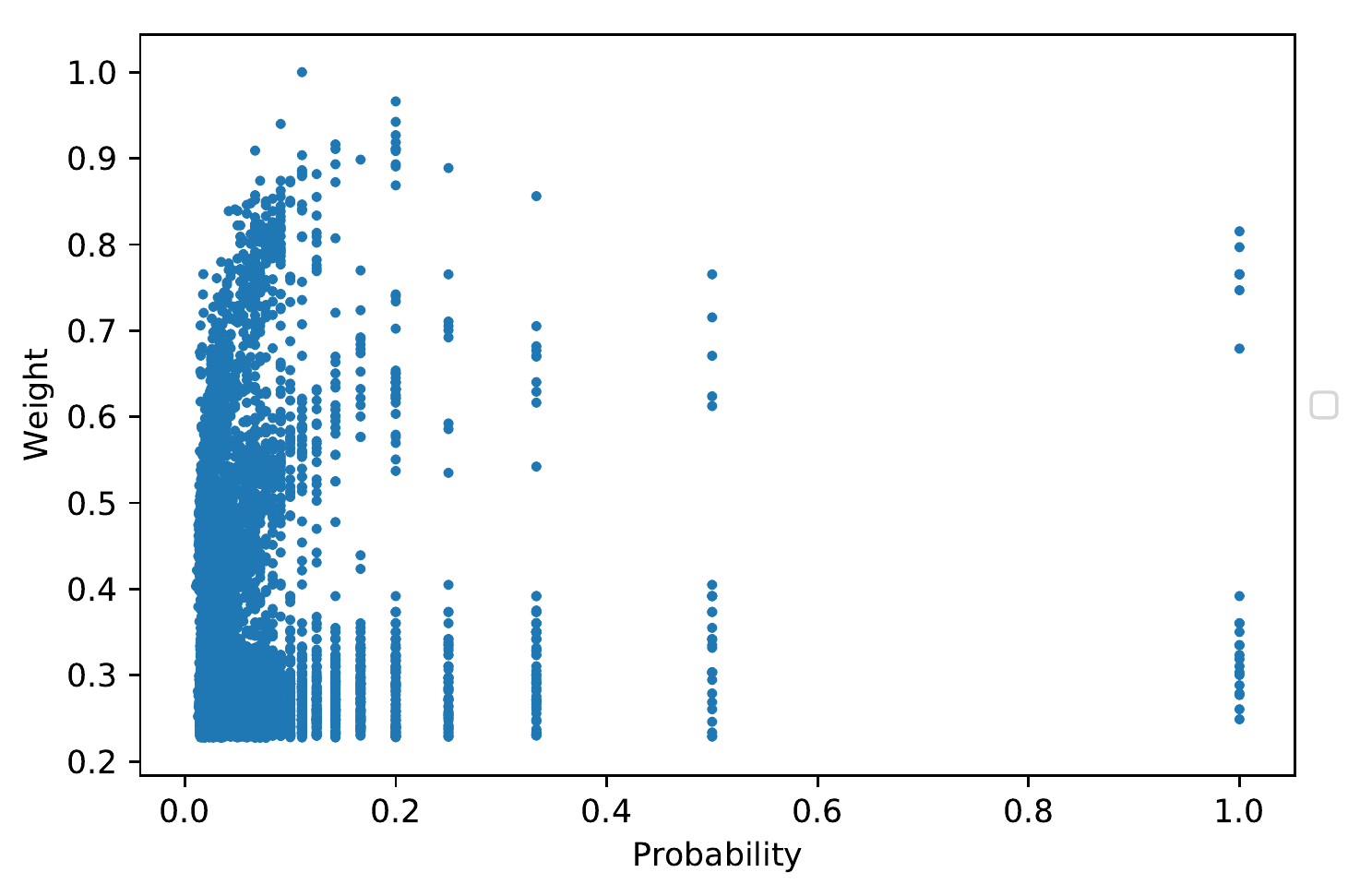}       \\
($\delta$) & ($\epsilon$) & ($\sigma\tau$) \\ 
 \includegraphics[width=0.33\textwidth]{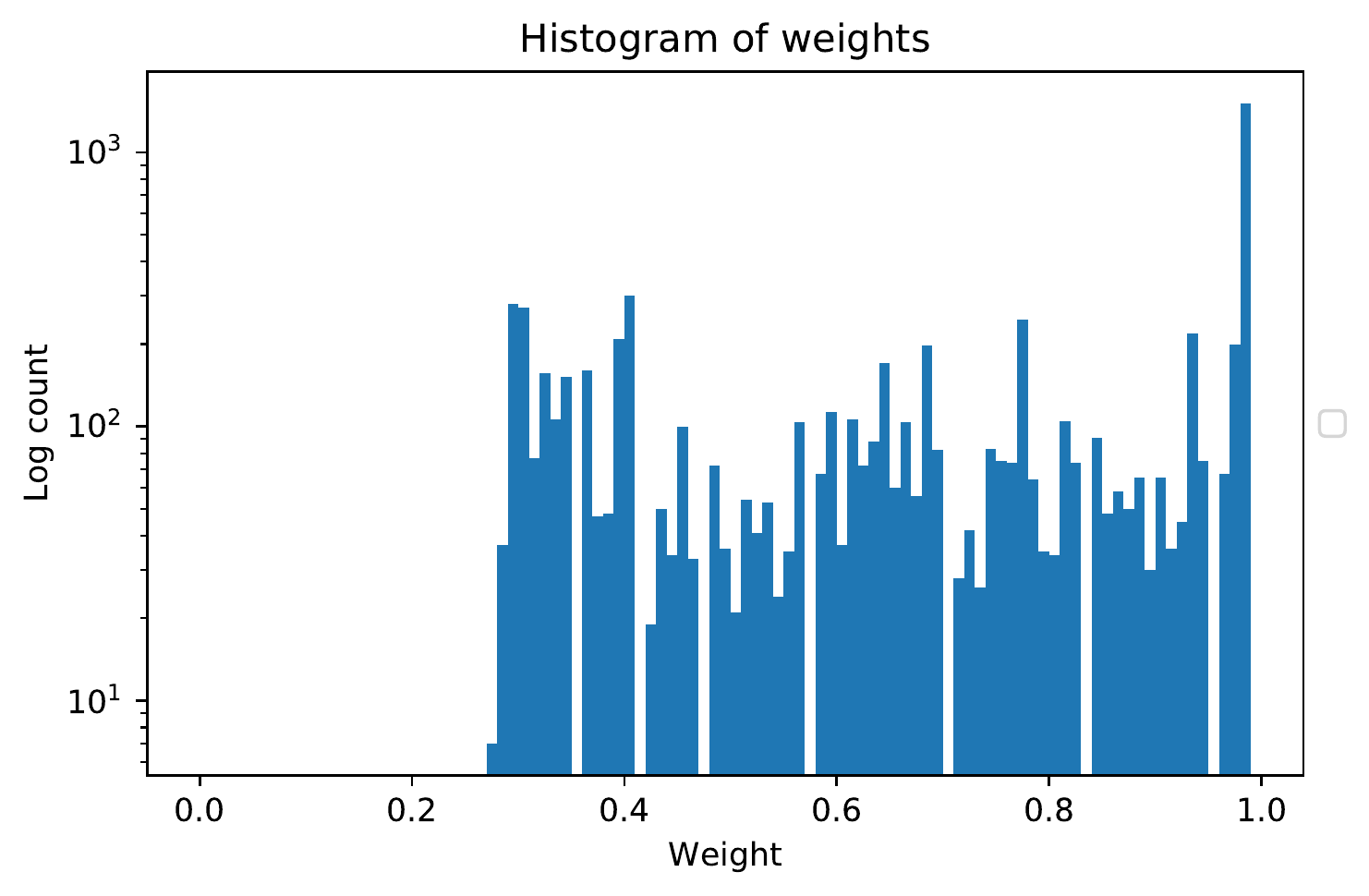} & \includegraphics[width=0.33\textwidth]{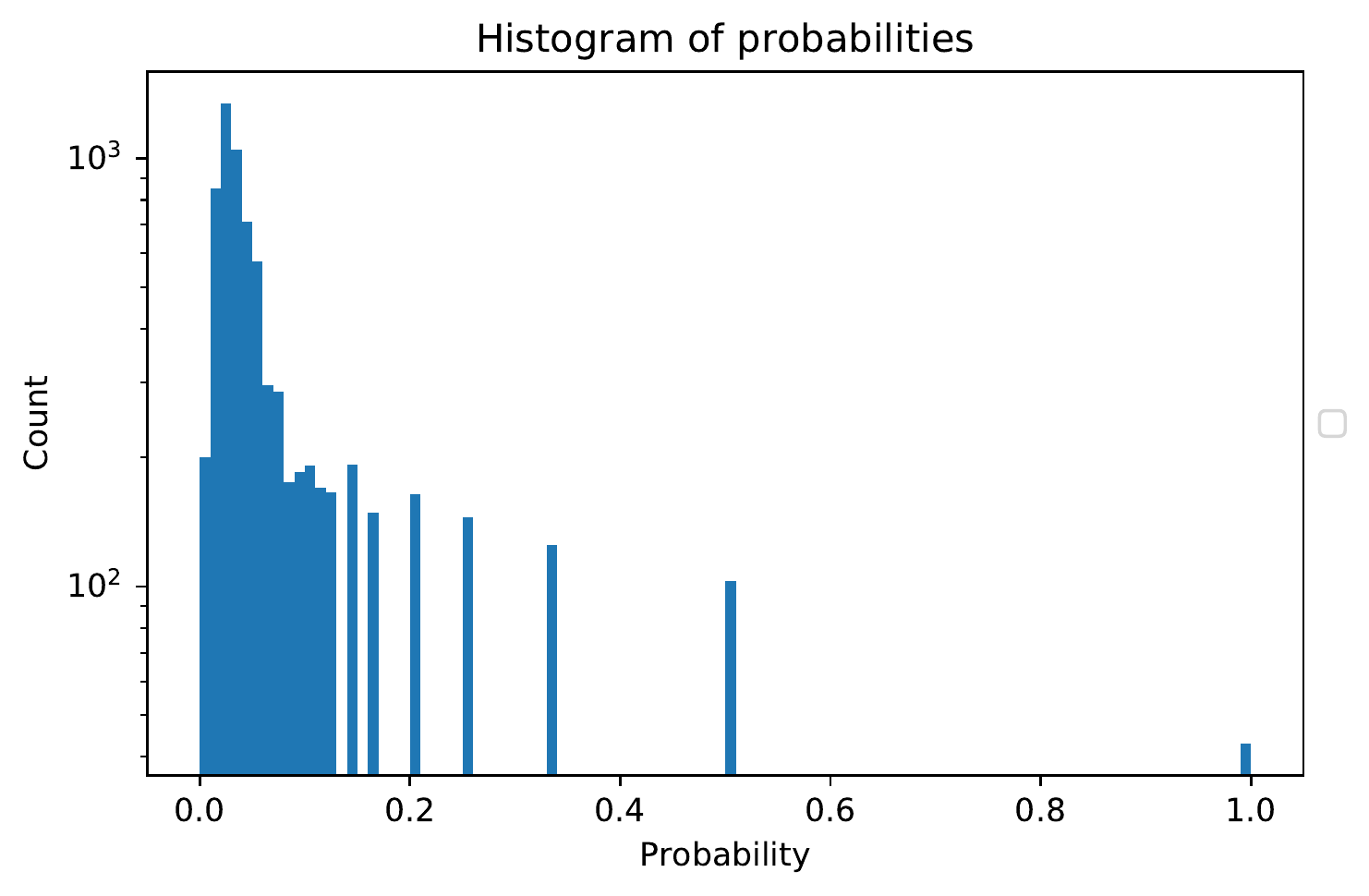}    & \includegraphics[width=0.33\textwidth]{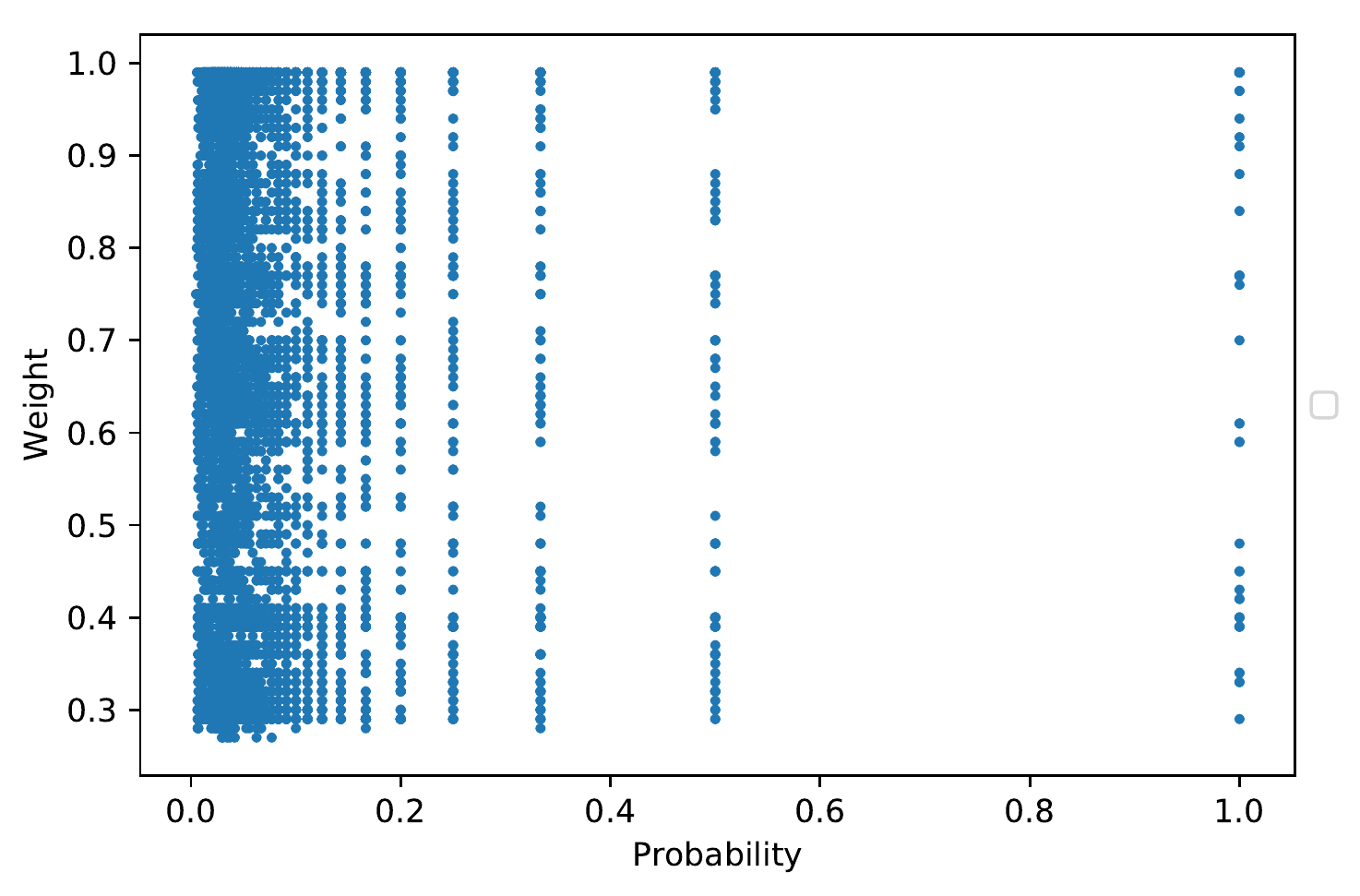}       \\
($\zeta$) & ($\eta$) & ($\theta$) \\ 
\includegraphics[width=0.33\textwidth]{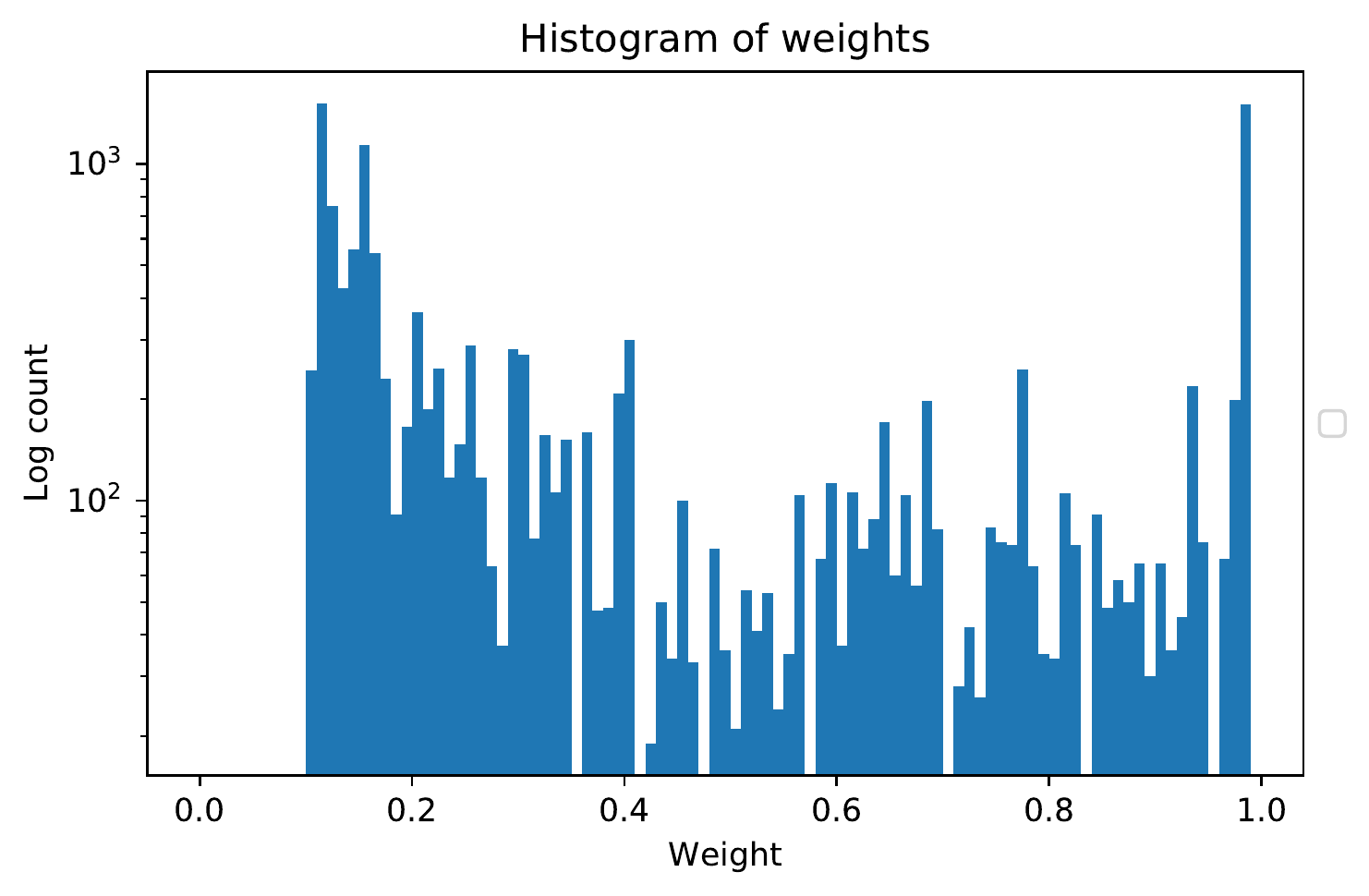} & \includegraphics[width=0.33\textwidth]{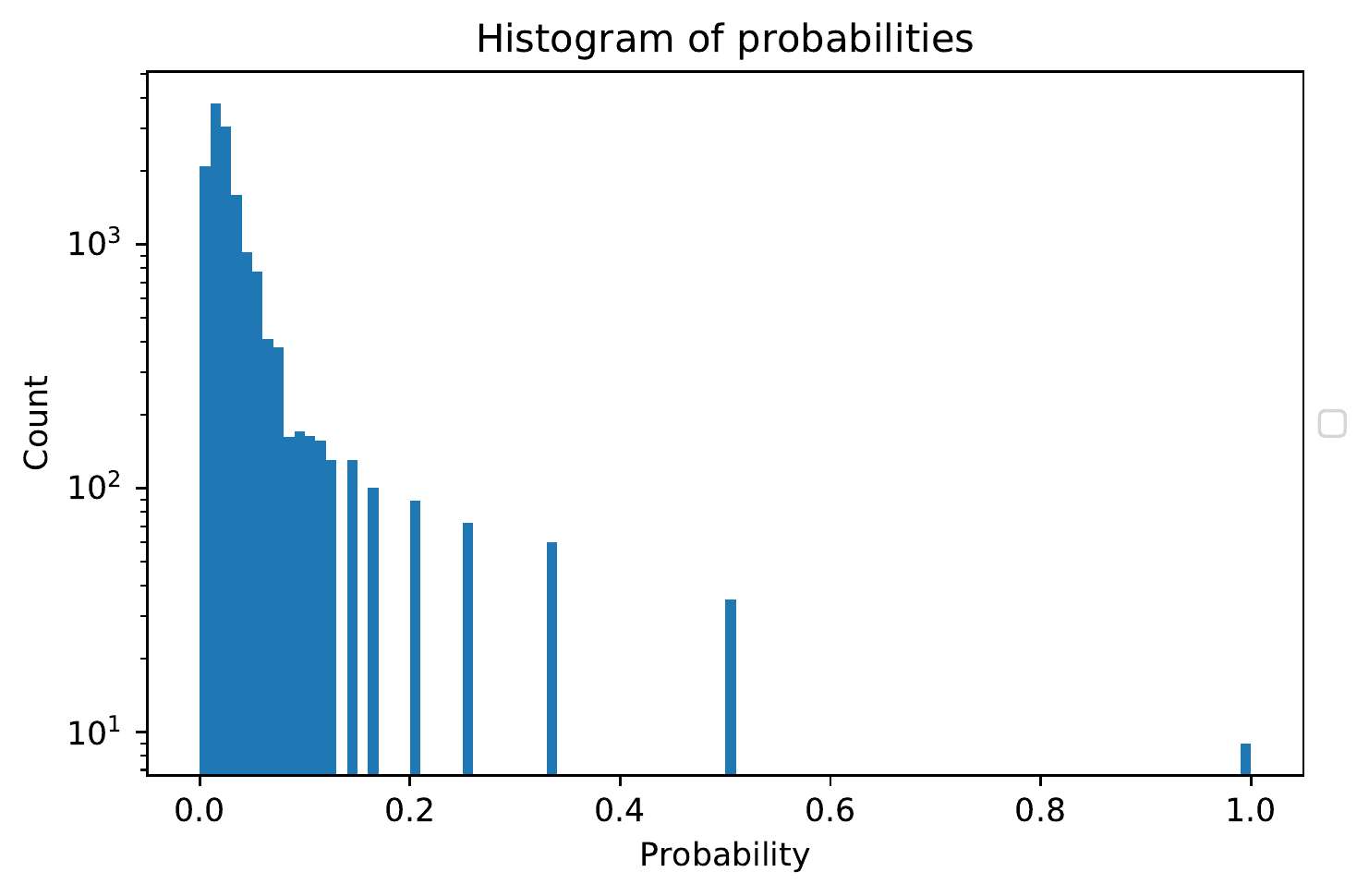}    & \includegraphics[width=0.33\textwidth]{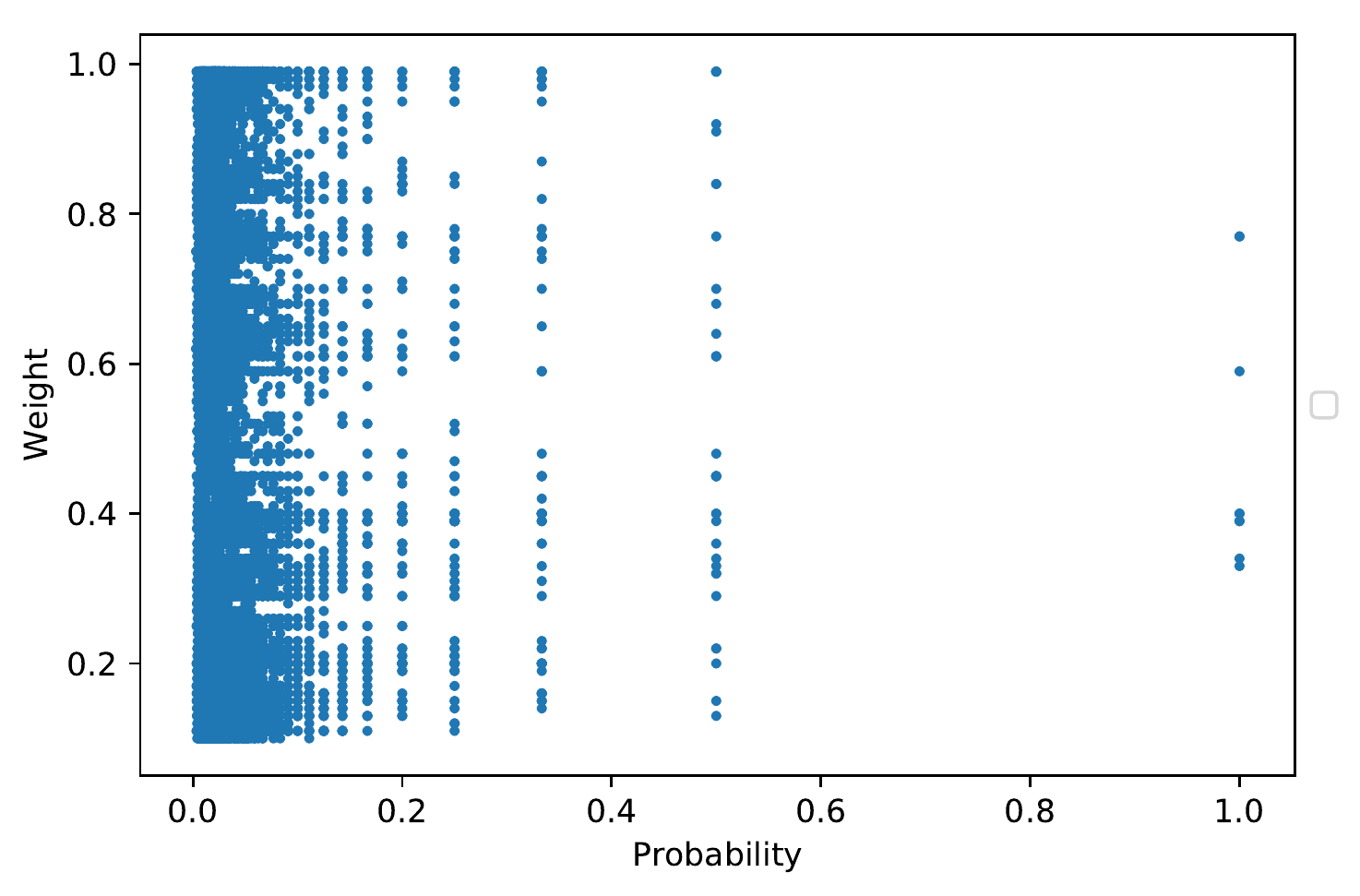}       \\
($\iota$) & ($\iota\alpha$) & ($\iota\beta$) \\   
\end{tabular}
\caption{\label{fig:uncertaingraphs_appendix} Uncertain graphs' statistics. Each row corresponds to {\em biogrid, gavin, krogan, krogan extended} respectively. First and second column show histograms of weights and edge probabilities respectively. The third column shows the scatterplot among the latter quantities.}
\end{figure*}

\begin{figure*}[htp]
\centering
\begin{tabular}{@{}c@{}@{\ }c@{}@{\ }c@{}}
\includegraphics[width=0.33\textwidth]{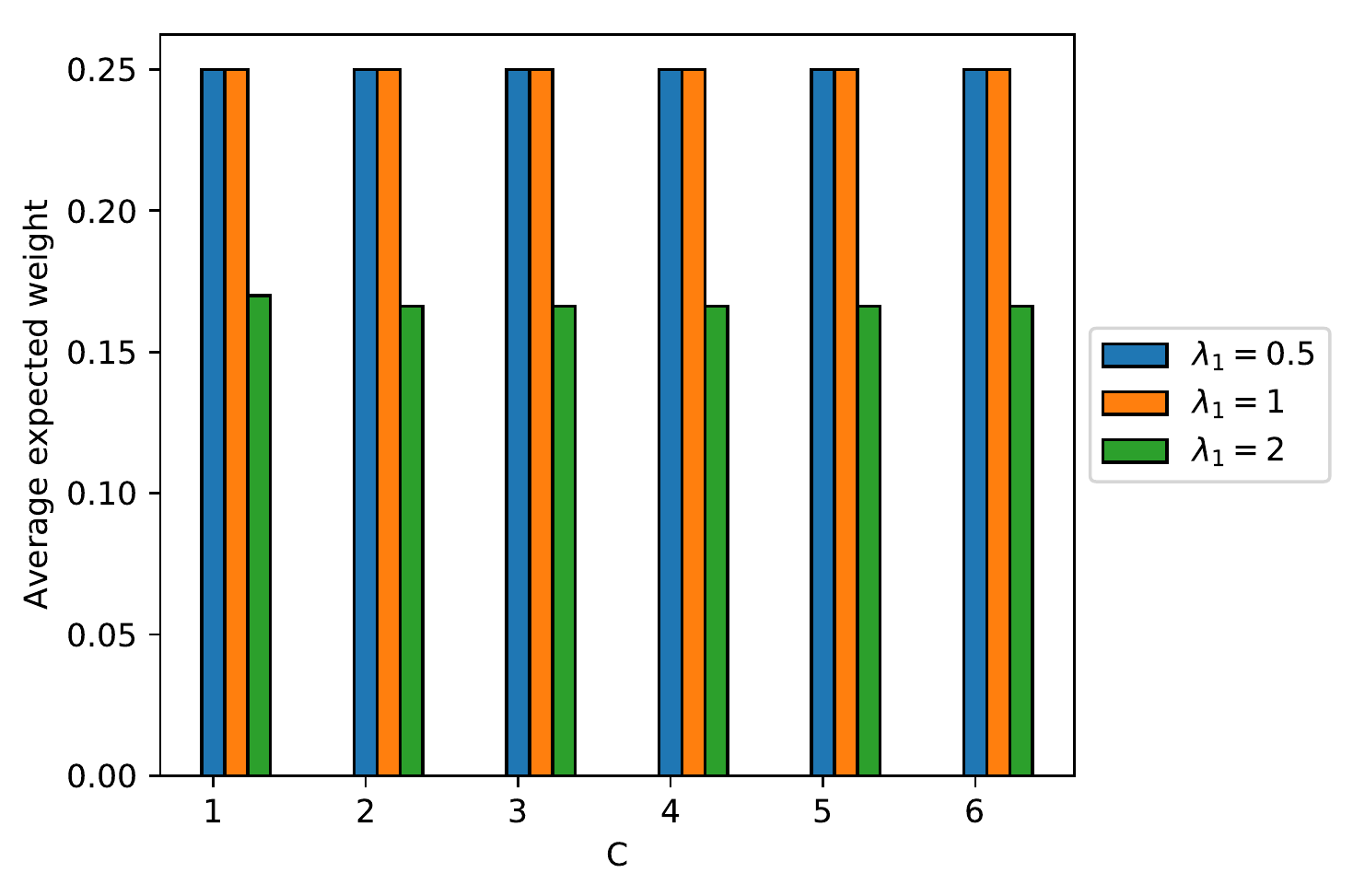} & \includegraphics[width=0.33\textwidth]{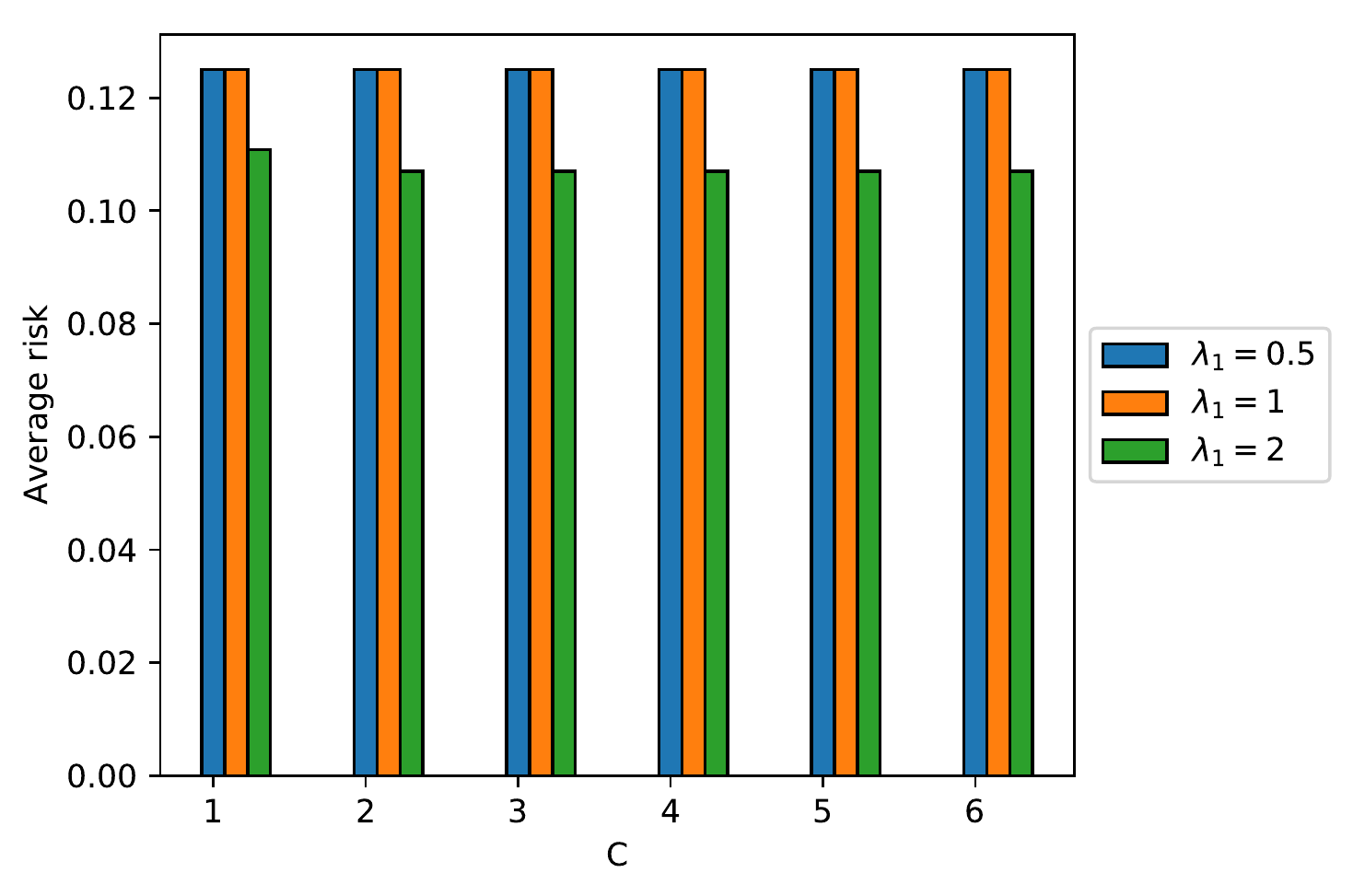}    & \includegraphics[width=0.33\textwidth]{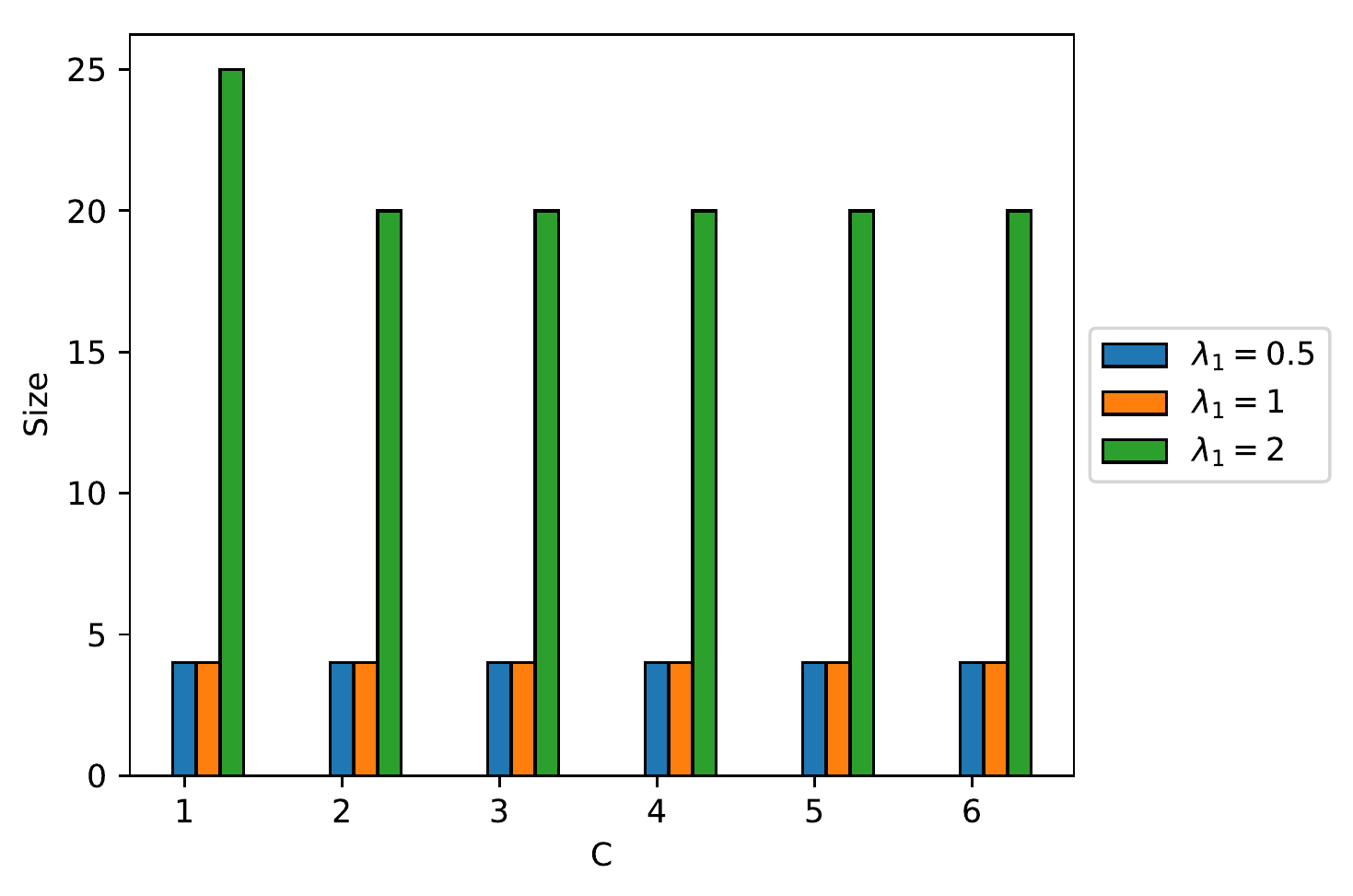}    
\\
($\alpha$) & ($\beta$) & ($\gamma$) \\ 
\includegraphics[width=0.3\textwidth]{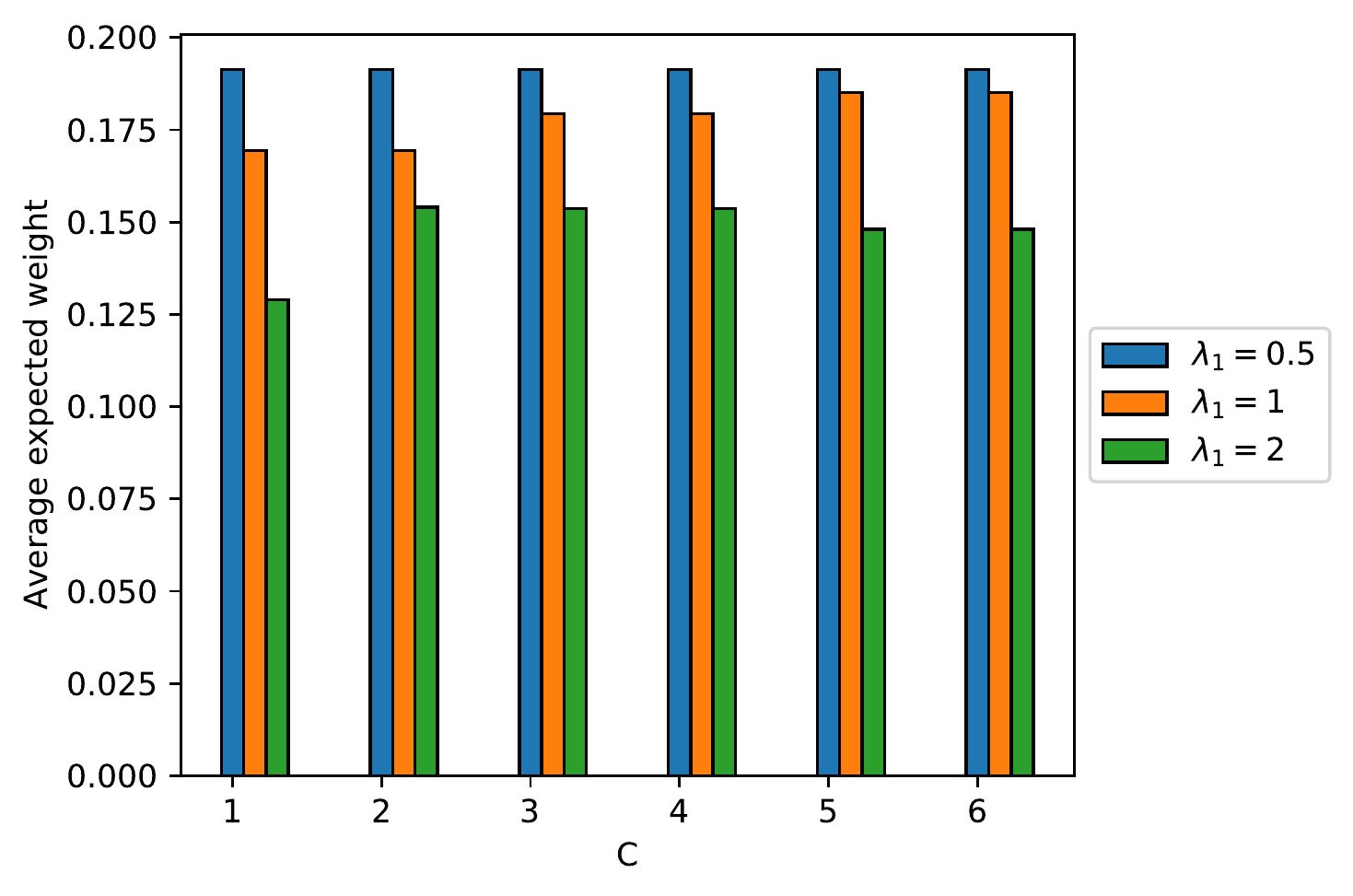} & \includegraphics[width=0.3\textwidth]{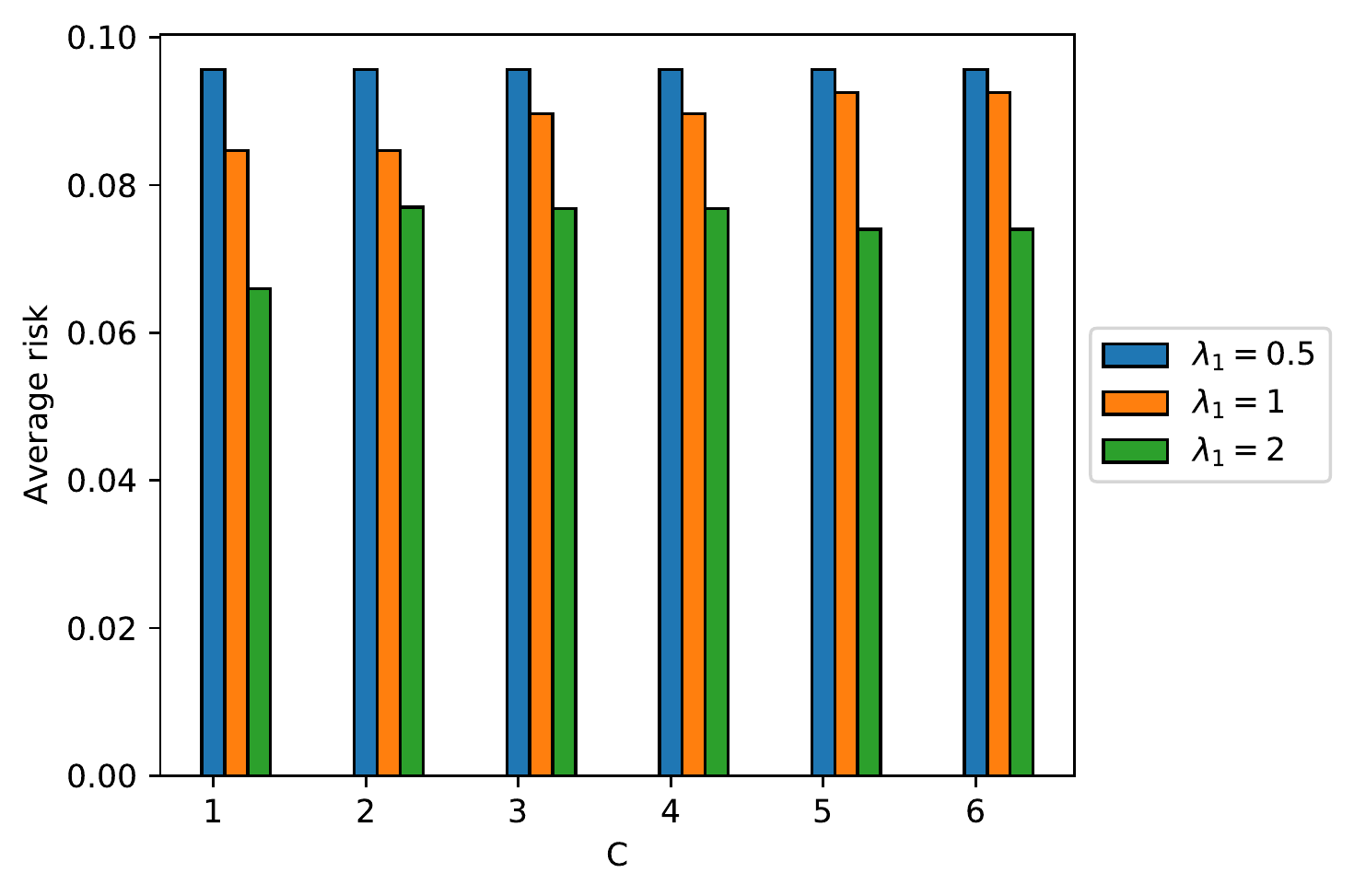}    & \includegraphics[width=0.3\textwidth]{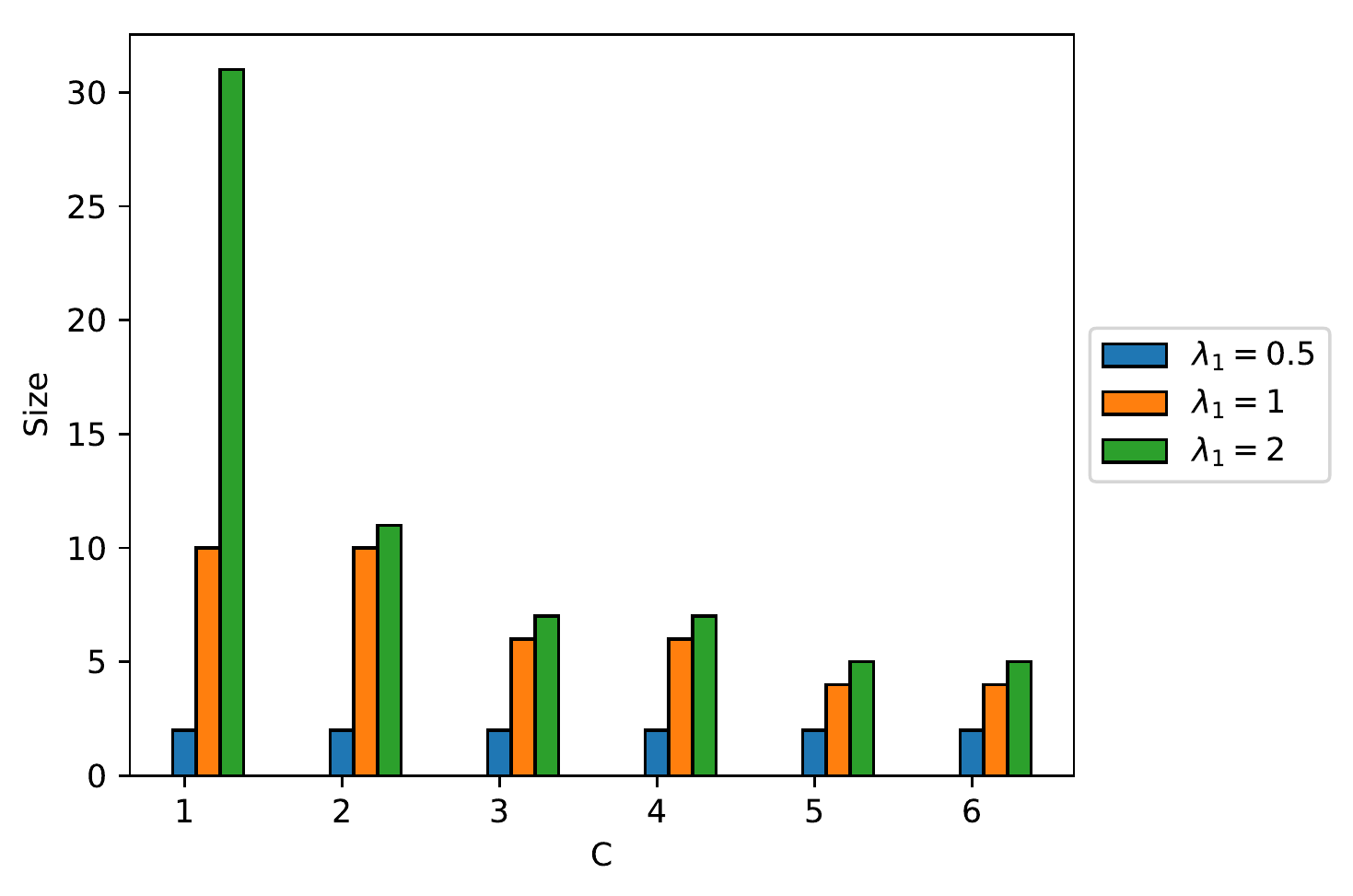}       \\
($\delta$) & ($\epsilon$) & ($\sigma\tau$) \\ 
\includegraphics[width=0.3\textwidth]{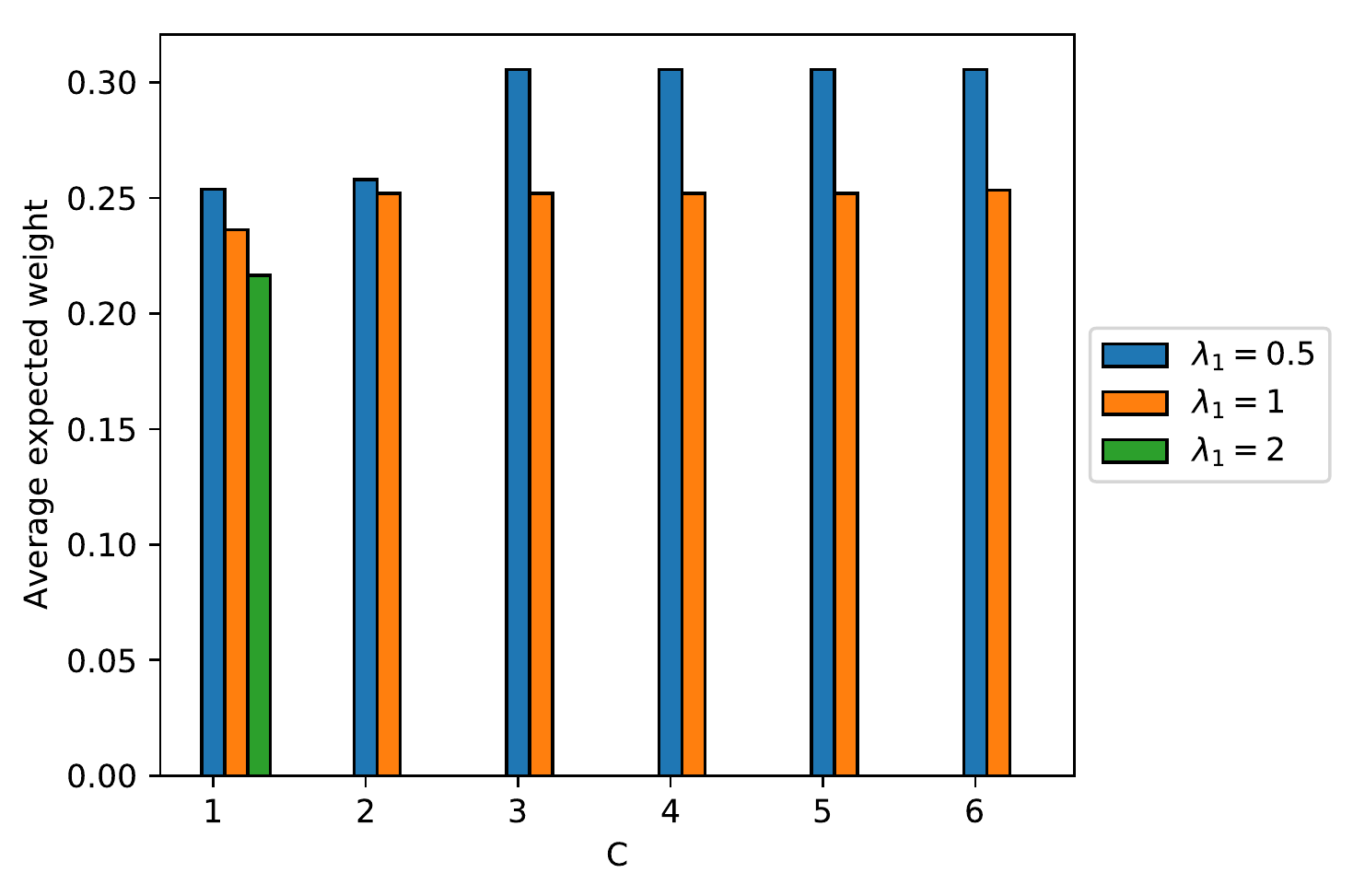} & \includegraphics[width=0.3\textwidth]{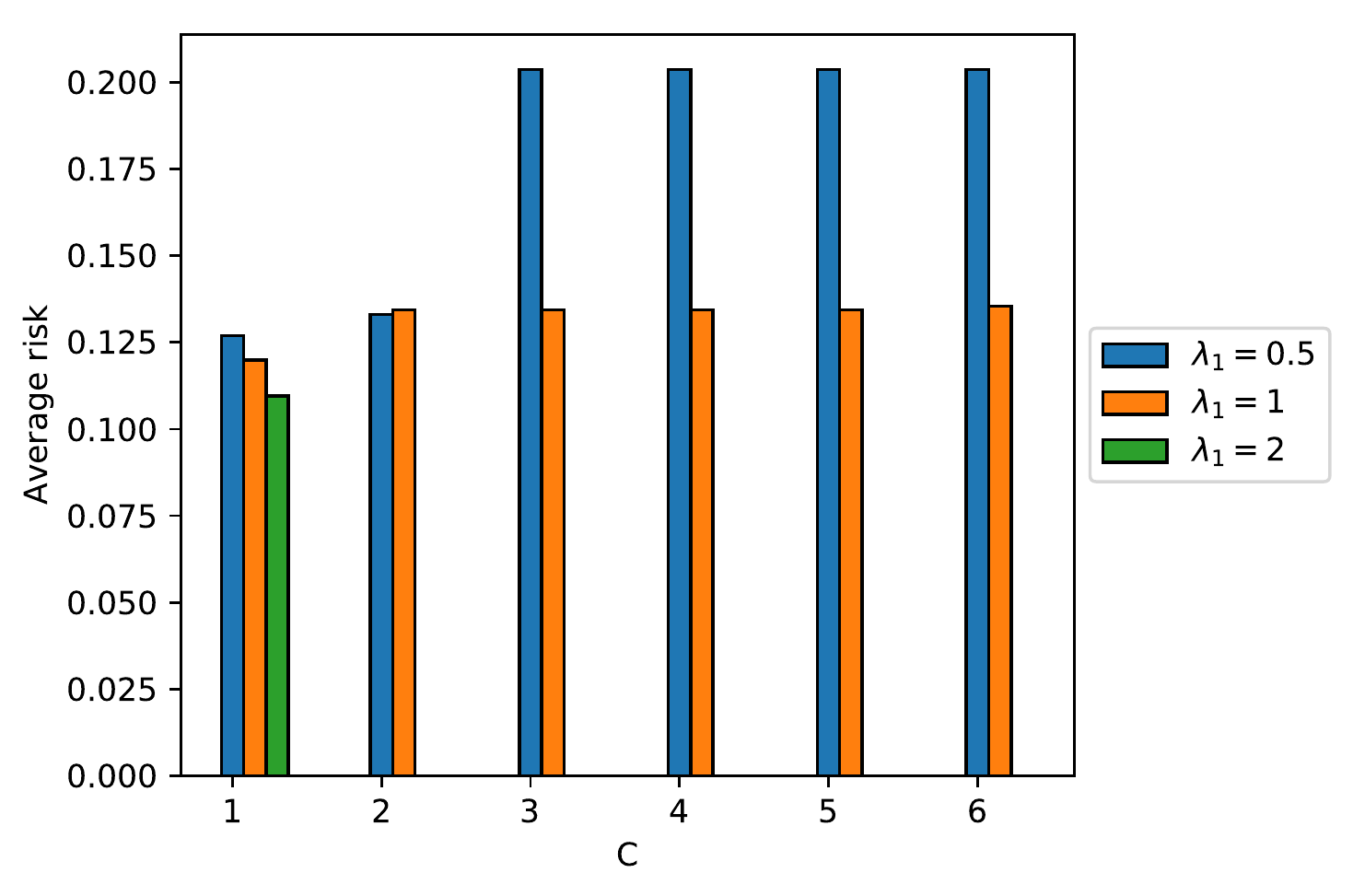}    & \includegraphics[width=0.3\textwidth]{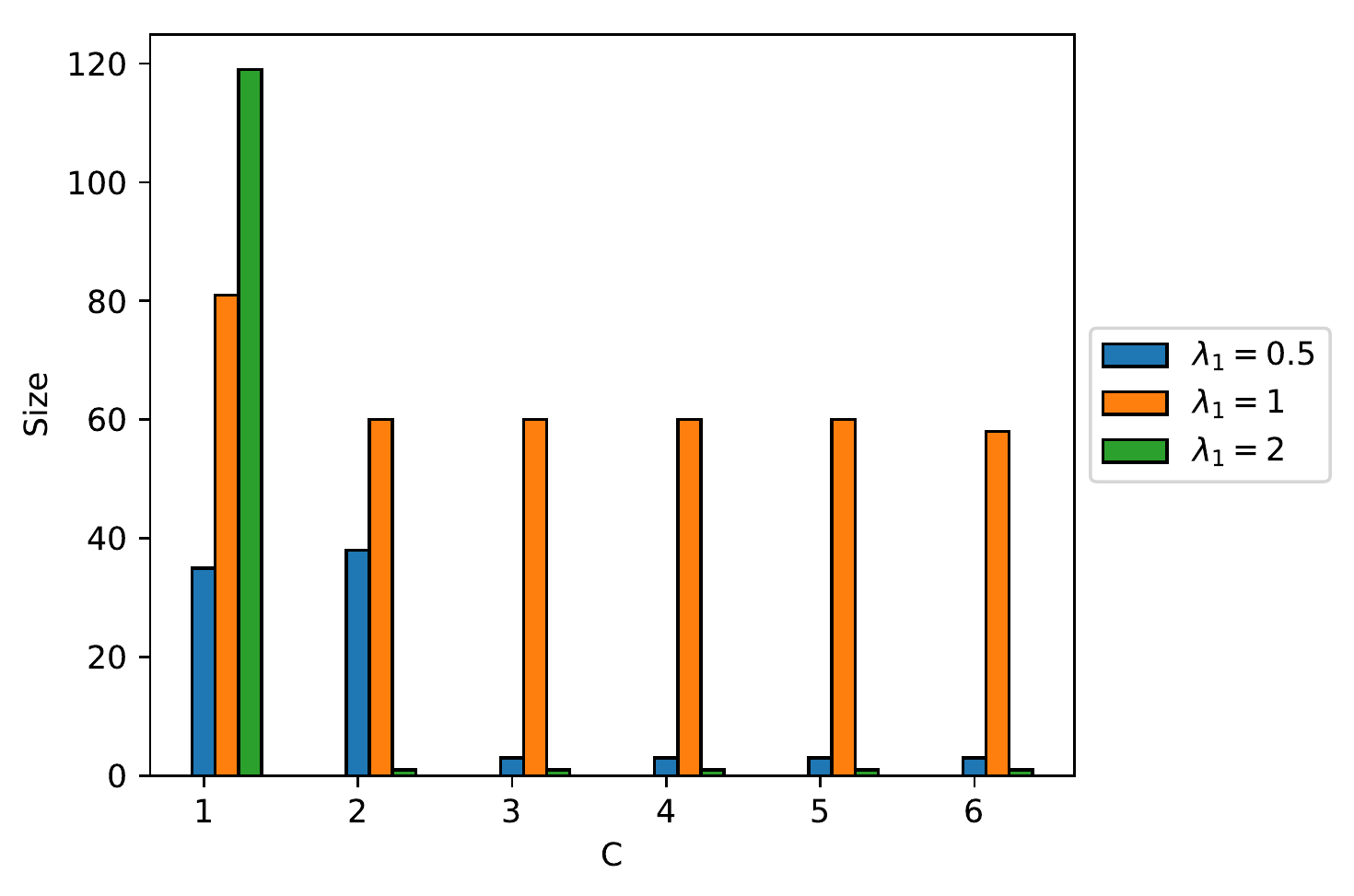}       \\
($\zeta$) & ($\eta$) & ($\theta$) \\ 
\includegraphics[width=0.3\textwidth]{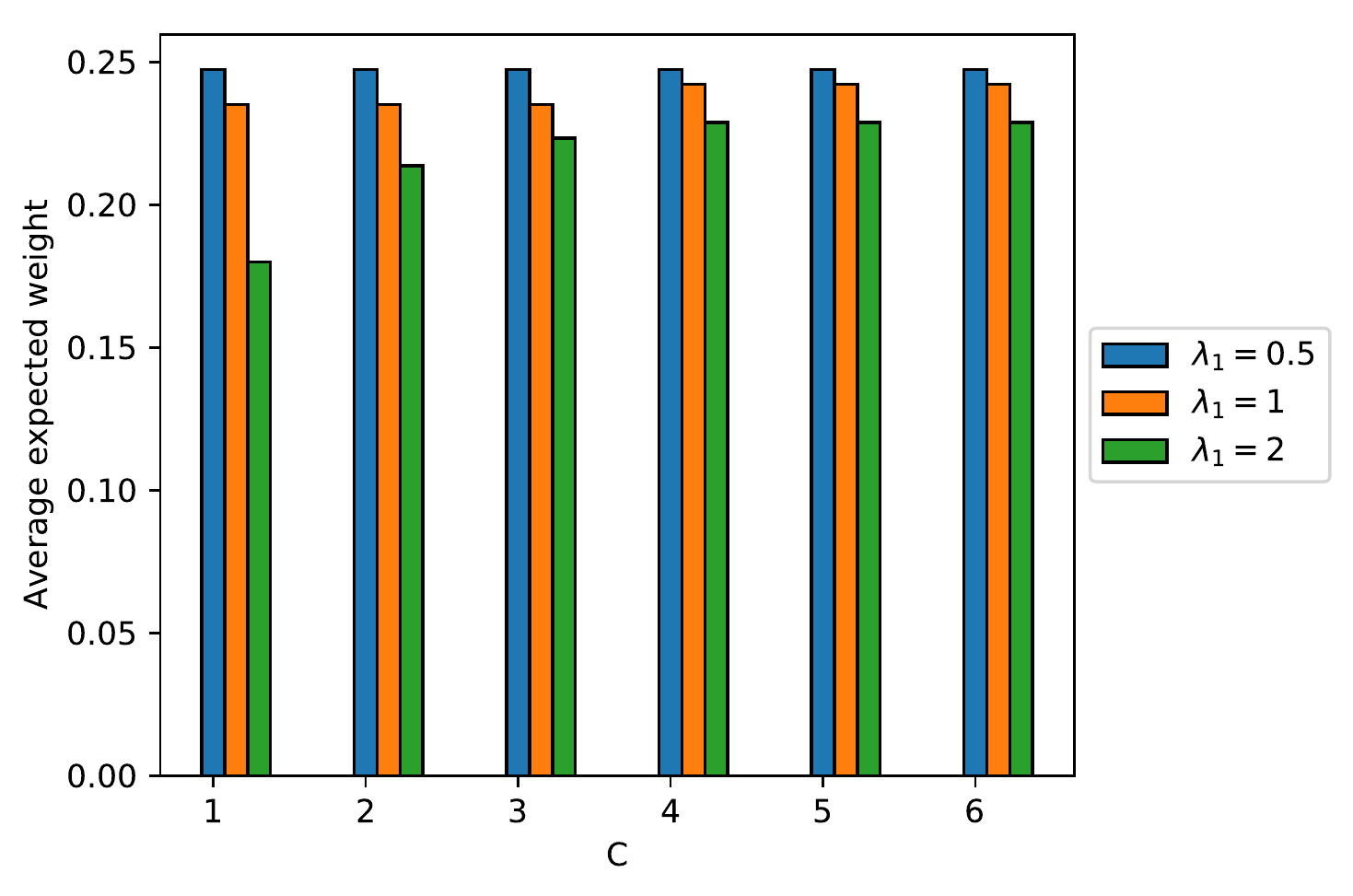} & \includegraphics[width=0.3\textwidth]{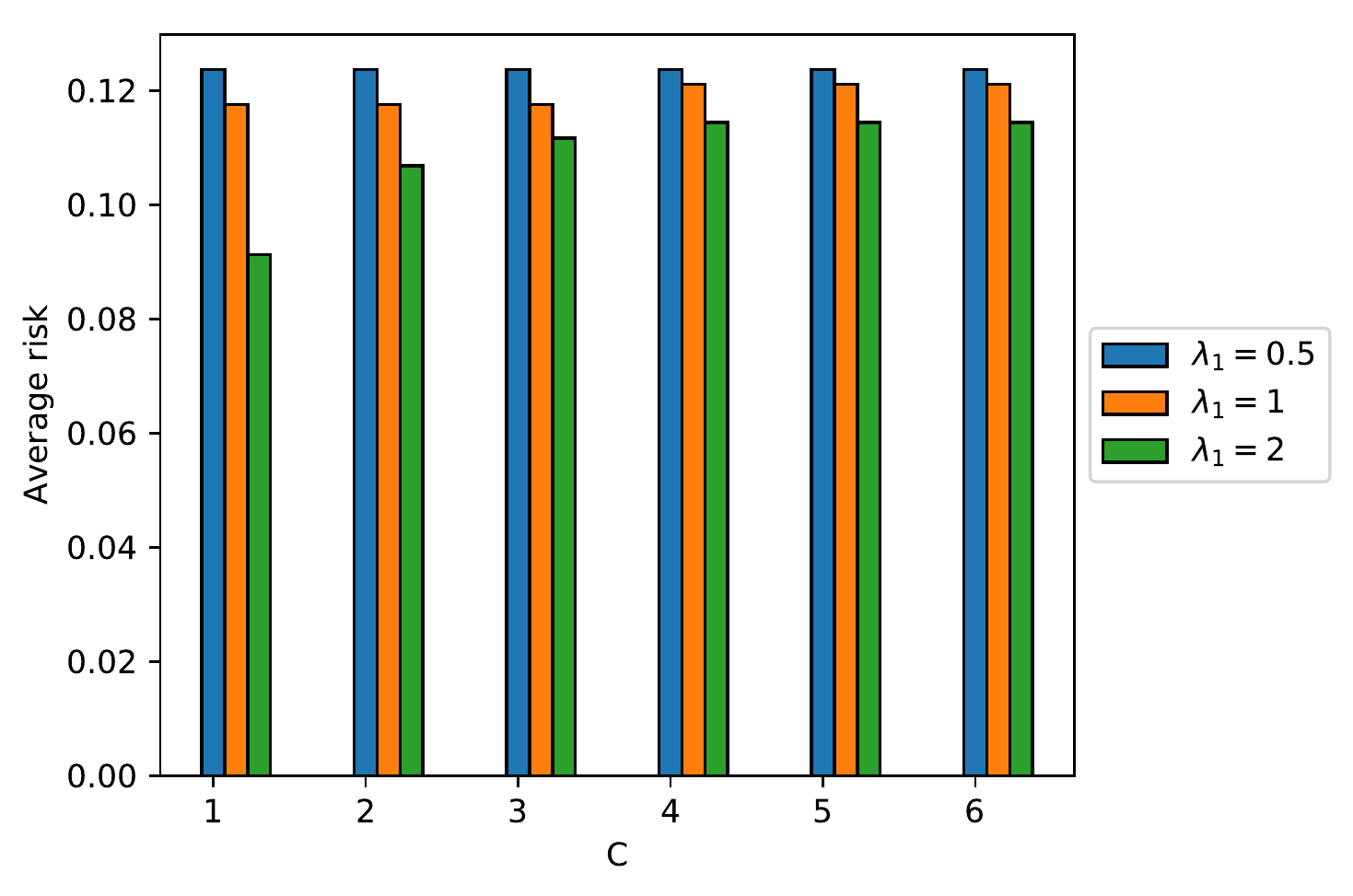}    & \includegraphics[width=0.3\textwidth]{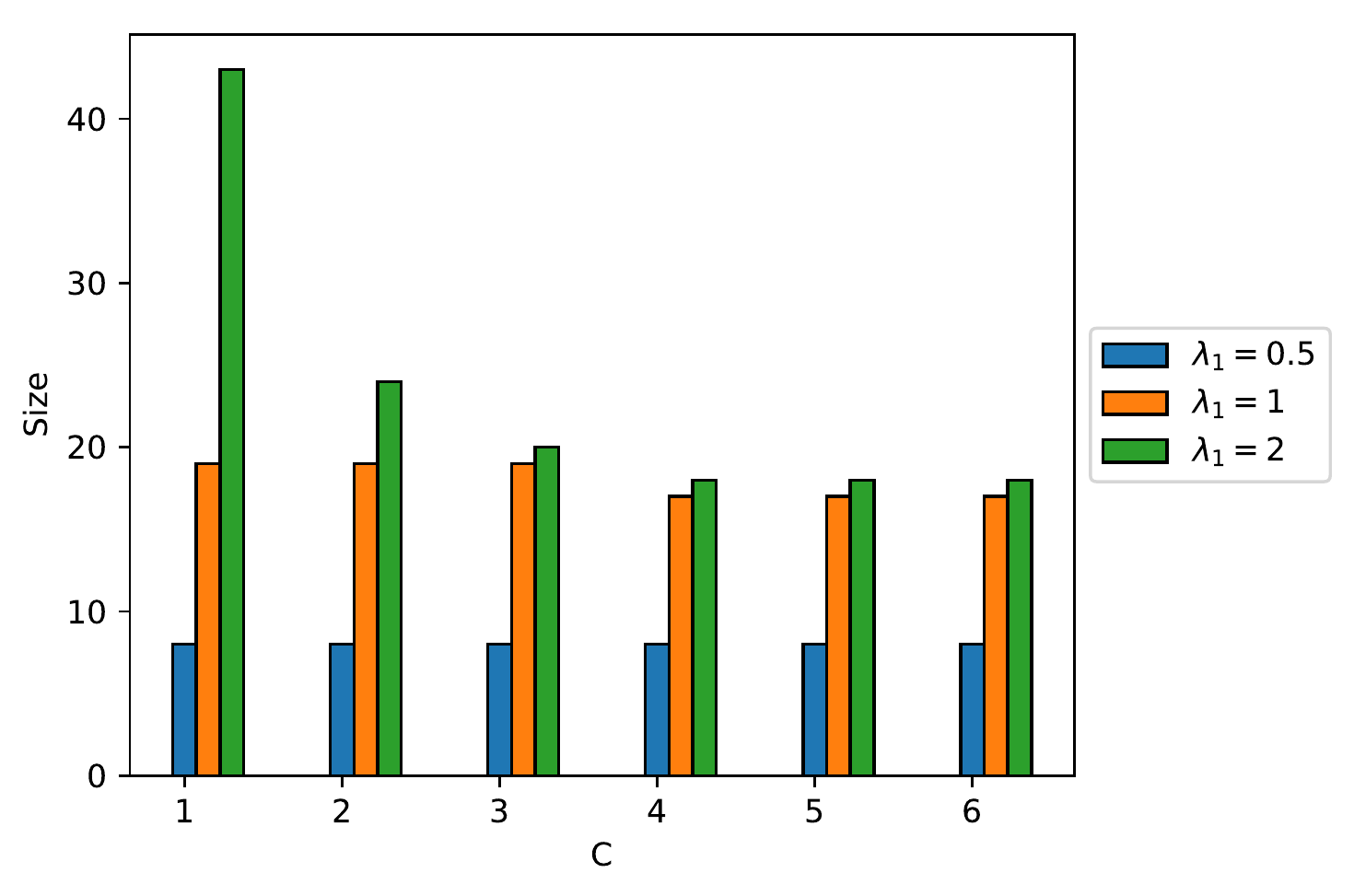}       \\
($\iota$) & ($\iota\alpha$) & ($\iota\beta$) \\  
\includegraphics[width=0.3\textwidth]{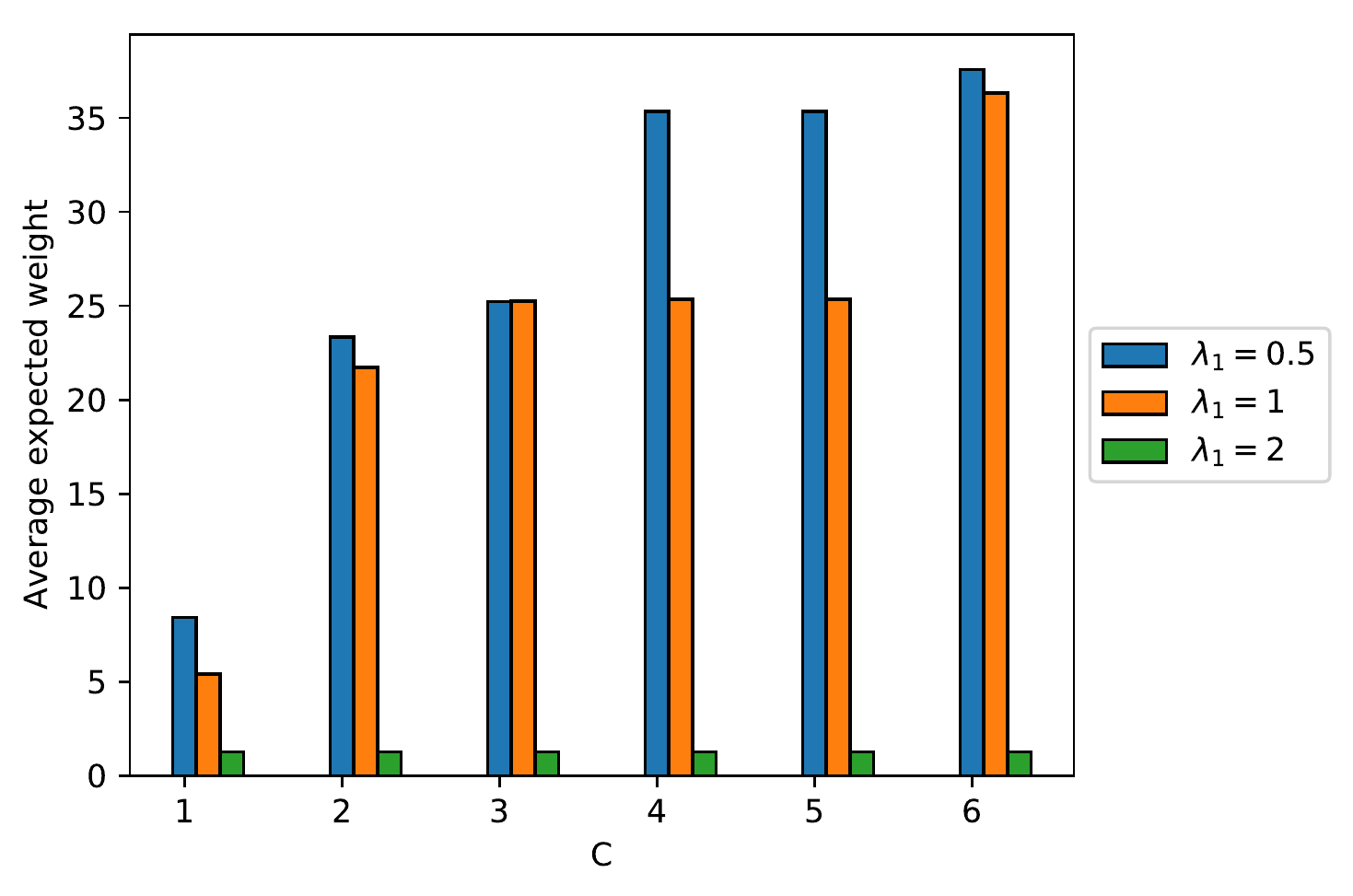} & \includegraphics[width=0.3\textwidth]{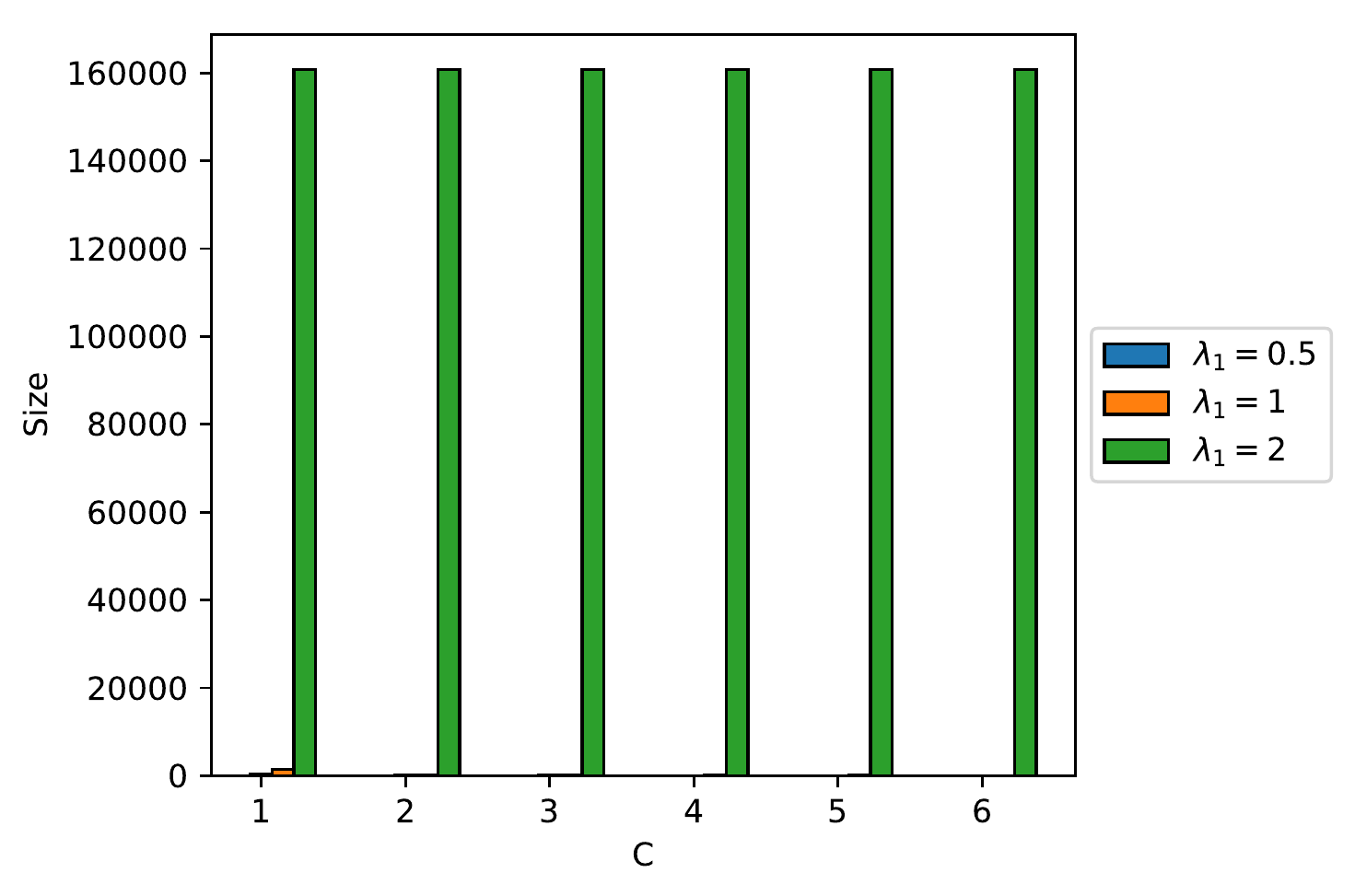}    & \includegraphics[width=0.3\textwidth]{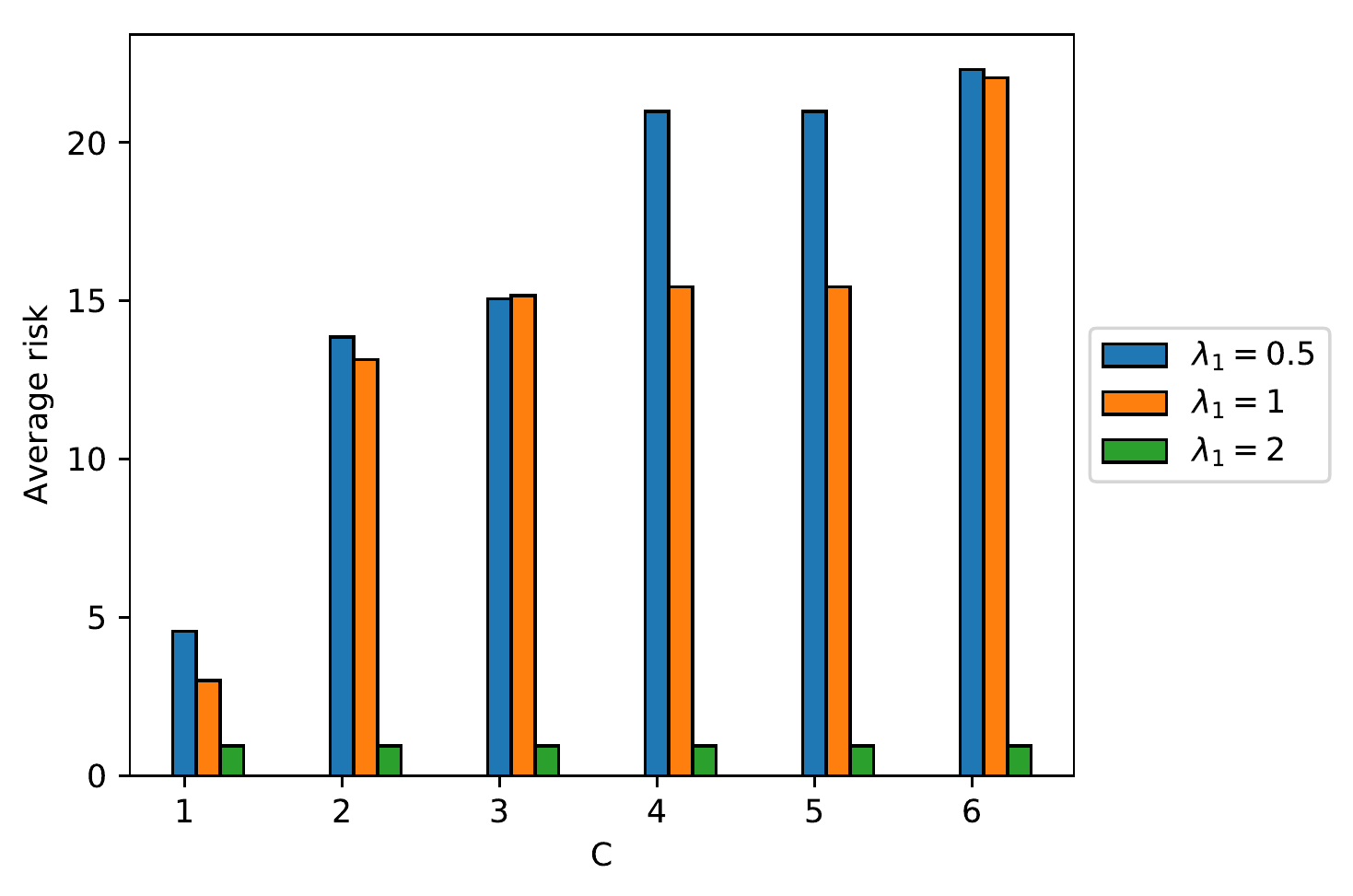}       \\
($\iota\gamma$) & ($\iota\delta$) & ($\iota\epsilon$) \\   
\end{tabular}
\caption{\label{fig:riskaversefull} Effect of $C$, $(\lambda_1,\lambda_2)$ on risk averse DSD. For details, see text.   }
\end{figure*}

\end{document}